\newcommand{\ifRRelse}[2]{#1}
\newcommand{\skipthis}[1]{}
\def\colorize<#1>{\temporal<#1>{\color{black}}{\color{red}}{\color{black}}}
\newcommand{\bitem}{\begin{itemize}}
\newcommand{\eitem}{\end{itemize}}
\newcommand{\N}{\mathbb N}
\newcommand{\isdef}{\stackrel{\mbox{\tiny def}}{=}}
\newcommand{\trueform}{\top}
\newcommand{\iseq}{\simeq}
\newcommand{\niseq}{\not\simeq}
\newcommand{\istrivial}[1]{}
\newcommand{\true}{\texttt{true}}
\newcommand{\false}{\texttt{false}}
\newcommand{\A}{{\cal A}}
\newcommand{\C}{{\cal C}}
\newcommand{\E}{{\cal E}}
\newcommand{\X}{{\cal X}}
\newcommand{\Y}{{\cal Y}}
\newcommand{\V}{{\cal V}}
\newcommand{\M}{{\cal M}}
\newcommand{\U}{{\cal U}}
\renewcommand{\P}{{\cal P}}
\newcommand{\calZ}{{\cal Z}}
\newcommand{\vars}{{\cal V}} 
\newcommand{\arity}{\mathrm{\it arity}}
\newcommand{ 


    \newcommand{\aboutproofs}[1]{Due to space restrictions, the proofs are omitted and can be found in the \usmaj{a}ppendix.}
    
    }{ 


    \newcommand{\aboutproofs}[1]{Due to space restrictions, the proofs are omitted and can be found in the \usmaj{a}ppendix.}
    
    }
\newcommand{\dom}{\text{dom}}
\newcommand{\var}{\text{var}}
\newcommand{\select}{\text{\it select}}
\newcommand{\store}{\text{\it store}}
\newcommand{\ie}{i.e.\ }
\newcommand{\wrt}{w.r.t.\ }
\newcommand{\sltf}{\mathrm{select}}
\newcommand{\slt}[1]{\sltf(#1)}
\newcommand{\storef}{\mathrm{store}}
\renewcommand{\store}[1]{\storef(#1)}
\newcommand{\SP}{${{\cal SA}}^{\prec}_{\sel}$\xspace}
\newcommand{\sel}{\text{\it sel}}
\newcommand{\elementary}{elementary\xspace}
\newtheorem{newtheo}{Theorem}
\newtheorem{newex}[newtheo]{Example}
\newtheorem{newdef}[newtheo]{Definition}
\newtheorem{newrem}[newtheo]{Remark}
\newtheorem{newprop}[newtheo]{Proposition}
\newtheorem{newlem}[newtheo]{Lemma}
\newtheorem{newcor}[newtheo]{Corollary}
  \newtheorem{assertion}[newtheo]{Assumption}
\title{A Superposition Calculus for Abductive Reasoning}
\author{M. Echenim and N. Peltier\ifRRelse{\\ Univ. Grenoble Alpes, CNRS, LIG\\ F-38000 Grenoble, France}{}}
\begin{document}

\ifRRelse{\date{}}{}
\maketitle

\ifRRelse{}{\email{Mnacho.Echenim@imag.fr, Nicolas.Peltier@imag.fr}

\keywords{Equational First-Order Logic, Abduction, Superposition Calculus, Deductive-Completeness}

\subclass{03B35 \and 68T15}

\CRclass{F.3.1 \and F.4.1 \and I.2.3}

\date{}
}

\newcommand{\hypcst}{{\cal A}} 

\newcommand{\funcs}{{\cal F}}
\newcommand{\preds}{{\cal P}}

\newcommand{\variabless}{{\cal V}}
\renewcommand{\arity}{\text{\it ar}}
\newcommand{\head}{{\text \it head}}
\newcommand{\flatcl}[1]{\mathfrak{C}_{\text{\it flat}}(#1)}
\newcommand{\emptypos}{\varepsilon}

\newcommand{\almosteq}[1]{\sim^{#1}_{\hypcst}}
\newcommand{\red}[2]{{#1}_{\downarrow #2}}
\newcommand{\compl}{\mathrm{c}}

\ifRRelse{
\newcommand{\Mnacho}[1]{}
\newcommand{\MnachowithAnswer}[2]{}

\newcommand{\nikonew}[1]{}
\newcommand{\nikonewanswer}[2]{}

\newcommand{\nikoold}[1]{}
}
{
\newcommand{\Mnacho}[1]{{\color{blue} \textit{M: #1}}}
\newcommand{\MnachowithAnswer}[2]{{\color{blue} \textit{M: \sout{#1}}} {{\color{red} \textit{R: #2}}} }

\newcommand{\nikonew}[1]{{\color{red} #1}}
\newcommand{\nikonewanswer}[2]{{\color{red} #1}{\color{green} #2}}

\newcommand{\nikoold}[1]{{\color{red} Niko}: #1}
}
\newcommand{\set}[1]{\left\{#1\right\}}
 \newcommand{\setof}[2]{\left\{#1\,|\:#2\right\}}

\renewcommand{\true}{\top}
\renewcommand{\false}{\bot}

\begin{abstract}
  We present a modification of the Superposition calculus that is
  meant to generate consequences of sets of first-order axioms. This approach is proven to be sound and deductive-complete in the presence of redundancy elimination rules, provided the
  considered consequences are built on a given finite set of ground terms, represented by constant symbols.
 In contrast to other approaches, most existing results about the termination of the Superposition calculus can be carried over to our procedure. This ensures in particular that the calculus is terminating for many theories of interest to the SMT community.
  \end{abstract}

\nikonew{J'utilise une numérotation unique pour tous  les environnements (plus lisible). Globalement le papier semble un peu trop long. N'hésite pas à supprimer des trucs ou à simplifier les preuves (on pourrait peut-être se débarrasser de certaines propositions).}

\section{Introduction}

The verification of complex systems is generally based on proving the
validity, or, dually, the satisfiability of a logical formula. A
standard practice consists in translating the behavior of the system
to be verified into a logical formula, and proving that the negation
of the formula is unsatisfiable.  These formul{\ae} may be
domain-specific, so that it is only necessary to test the
satisfiability of the formula modulo some background theory, whence
the name \emph{Satisfiability Modulo Theories problems}, or \emph{SMT
  problems}.  If the formula is actually satisfiable, this means the
system is not error-free, and any model can be viewed as a trace that
generates an error. The models of a satisfiable formula can therefore
help the designers of the system guess the origin of the  errors
and deduce how they can be corrected; this is the main reason for example why state-of-the-art SMT solvers feature automated model building tools \citep*[see for instance][]{CLP}.
However, this approach is not always satisfactory. First, there is the risk of an information overkill: indeed, the generated
model may be very large and complex, and discovering the origin of the error may require a long and difficult analysis. Second, the model may be too specific, in the sense that it only
corresponds to one particular execution of the system and that
dismissing this single execution may not be sufficient to fix the
system.  Also, there are generally many
interpretations on different domains that satisfy the formula. In order to understand where the
error(s) may come from, it
is generally necessary to analyze all of these models and to identify common patterns.
 This leaves the user with the burden of having to infer the
general property that can rule out all the undesired behaviors.
A more useful and informative solution would be to
directly infer the missing axioms, or hypotheses, that can be added in order to ensure the unsatisfiability of the input formula. These axioms can be viewed as sufficient conditions ensuring that the system is valid. Such conditions must
be {\em plausible} and {\em economical}:
for instance, explanations that contradict the axioms of the considered theories are
obviously irrelevant.

In this paper, we present what is, to the best of our knowledge, a novel
approach to this debugging problem: we argue that rather than studying
one or several models of a formula, more valuable information can be extracted
from the properties that hold in \emph{all} the models of the formula.
For example, consider the theory of arrays, which is axiomatized as
follows \citep*[as introduced by][]{McC62}:
\begin{eqnarray}
  \forall x, z, v.\ \slt{\store{x, z, v}, z}& \iseq& v,\label{ar_1}\\
  \forall x, z, w, v.\ z \iseq w \vee \slt{\store{x, z, v}, w} &\iseq& \slt{x, w}.\label{ar_2}
  \end{eqnarray}
  \begin{figure}[t]
    \begin{center}
  \includegraphics[width=8cm]{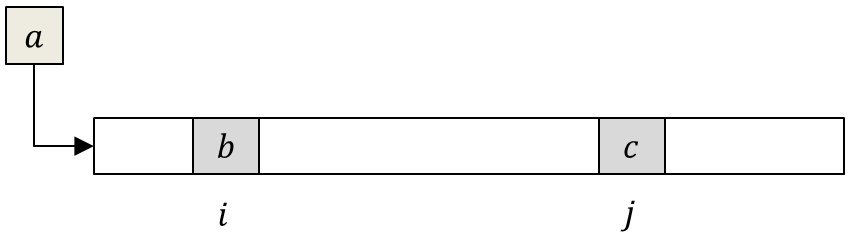}
  \end{center}
  \caption{Insertion into array $a$ of element $b$ at position $i$ and
  element $c$ at position $j$.}\label{fig:tablo}
  \end{figure}
  These axioms state that if element $v$ is inserted into array $x$ at
  position $z$, then the resulting array contains $v$ at position $z$,
  and the same elements as in $x$ elsewhere.  Assume that to verify
  that the order in which elements are inserted into a given array
  does not matter, the satisfiability of the following formula is
  tested (see also Figure \ref{fig:tablo}):
  \[\slt{\store{\store{a,i,b},j,c},k} \niseq
  \slt{\store{\store{a,j,c},i,b},k}.\] This formula asserts that there
  is a position $k$ that holds different values in the array obtained
  from $a$ by first inserting element $b$ at position $i$ and then
  element $c$ at position $j$, and in the array obtained from $a$ by
  first inserting element $c$ at position $j$ and then element $b$ at
  position $i$. It turns out that this formula is actually
  satisfiable, which in this case means that some hypotheses are
  missing. State-of-the-art SMT solvers such as Yices
  \citep*{DM06} or Z3 \citep*{DBLP:conf/tacas/MouraB08} can help find out what hypotheses are missing by outputting a model of
  the formula. In this case, Yices outputs \texttt{(= b 1) (= c 3)
    (= i 2) (= k 2) (= j 2)}, and for this simple example, such a
  model may be sufficient to quickly understand where the error comes
  from. However, a simpler and more natural way to determine what
  hypotheses are missing would be to have a tool that, when fed the
  formula above, outputs
  $i \iseq j \wedge b \niseq c$, stating that the formula can only
  be true when elements $b$ and $c$ are distinct, and are inserted at
  the \emph{same} position in array $a$. This information permits to know
  immediately what additional hypotheses must be made for the formula
  to be unsatisfiable. In this example, there are two possible
  hypotheses that can be added: $i \niseq j$ or $b\iseq c$.

  We investigate what information should be provided to
  the user and how it can be obtained, by distinguishing a set of
  ground terms on which additional hypotheses are allowed to be
  made. These terms may be represented by a particular set of constant symbols, called
  \emph{abducible constants} or
  simply \emph{abducibles}, and the
  problem boils down to determining what ground clauses containing
  only abducible constants are logically entailed by the formula under
  consideration, since the negation of any of these clauses can be
  viewed as a set of additional hypotheses that make the formula
  unsatisfiable. Indeed, by duality, computing implicants (or explanations) of a formula $\phi$
  is equivalent to computing implicates (i.e., logical consequences) of $\neg \phi$.
In order to compute such implicates, we devise a variant of the Superposition calculus \citep{BG94,DBLP:books/el/RV01/NieuwenhuisR01} that is deductive-complete for
the considered set of abducible constants, i.e., that can generate all the clauses built on abducible constants using finite set of predicate symbols including $\iseq$ that are logical consequences of the input clause set (up to redundancy). Our procedure is defined by enriching
the standard calculus with some new mechanisms allowing the assertion of relevant hypotheses during the proof search. These additional hypotheses are stored as constraints associated with the clauses and are propagated along the derivations. If the empty clause can be generated under a conjunction of hypotheses $\X$, then the conjunction of the original formula and $\X$ is unsatisfiable.
An essential feature of this approach is that the conditions are not asserted arbitrarily or eagerly, using a generate-and-test approach (which would be inefficient): instead they are {\em discovered} on a need basis, either by considering residual equations of unification failures (for positive hypotheses) or by
negating some of the literals occurring in the clauses (for negative hypotheses).


\subsection*{Related Work}

The generation of implicants (or, by duality, of implicates) of logical formul{\ae} has many applications in system verification and artificial intelligence, and this problem has been thoroughly investigated in the context of propositional logic. The earlier approaches use refinements of the resolution method \citep{tison1967generalization,kean1990incremental,de1992improved,simon2001efficient}, while more recent and more efficient  proposals use decomposition-based procedures \citep{jackson1990computing,henocque2002prime,matusiewicz2009prime,matusiewicz2011tri}.
These methods mainly focus on the efficient representation of information, and develop compact ways of storing and manipulating huge sets of implicates.

In contrast, the approaches handling abductive reasoning in first-order or equational logic are very scarce. Implicates can be generated automatically from sets of first-order clauses by using the resolution rule \citep{DBLP:conf/fair/Marquis91}. However, when dealing with equational clause sets, the addition of equality axioms leads to inefficiency and divergence in almost all but trivial cases.
\citet*{Knill92equalityand} use a proof technique called \emph{surface resolution} for generating implicates of Horn clauses in equational logic. The proposed approach, based on a systematic flattening of the terms and on the application of the resolution principle with substitutivity axioms, is very general and has some nice theoretical properties, but it is also very inefficient. The search space is huge, because the systematic abstraction of every subterm destroys all ordering or unifiability constraints, and termination is very rare.
\citet*{DBLP:journals/igpl/MayerP93}
describe a tableaux-based (or, dually, a sequent-based) proof procedure for abductive reasoning.
The intuitive idea is to apply the usual decomposition rules of propositional logic, and then compute
the {formul\ae} that force the closure of all open branches in the tableaux, thus yielding sufficient conditions ensuring unsatisfiability. The approach can be extended to first-order logic, by relying on reverse skolemization techniques in order to eliminate the Skolem symbols introduced inside the branches for handling existential quantifiers. Again, this approach is not well-suited for handling equality, and no termination results are presented.
\citet*{Tran10} show that the Superposition
calculus can be used to generate {\em positive} and {\em unit} implicates for some specific theories.
This approach is closer to ours, since it is based on the Superposition calculus, hence handles equality in an efficient way; however it is very focused: indeed, it is well-known that the Superposition calculus is not deductive-complete in general, for instance it cannot generate the clause $a \niseq b$ from the clause $f(a) \niseq f(b)$, although $f(a) \niseq f(b) \models a \niseq b$.

While the previous approaches rely  on usual complete proof procedures for first-order logic, more recent work builds on the recent developments and progresses in the field of Satisfiability Modulo Theories by devising algorithms relying on theory-specific decision procedures.
\citet*{DBLP:conf/cade/Sofronie-Stokkermans10,DBLP:conf/cade/Sofronie-Stokkermans13}  devises a technique for generating abductive explanations in local extensions of decidable theories. The approach reduces the considered problem to a formula in the basic theory by instantiating the axioms of the extension.
\citet*{DBLP:conf/cav/DilligDMA12} generate an incomplete set of  implicants of formul{\ae} interpreted in decidable theories by combining quantifier-elimination (for discarding useless variables) with model building tools (to construct sufficient conditions for satisfiability).
In contrast to these approaches, our method is proof-theoretic, hence it is generic and self-sufficient. The drawback is that it requires the adaptation of usual theorem provers instead of using them as black boxes (see also Example \ref{ex:paar} for a comparison of our method with the simplification technique devised by \citet{DBLP:conf/cav/DilligDMA12}).

\citet*{Wer13} proposes a method to derive abductive explanations from first-order logical programs, under several distinct non-classical semantics, using a reduction to second-order quantifier-elimination. Both the considered framework and the proposed techniques completely depart from our work.

\subsection*{Organization of the Paper}

The rest of the paper is structured as follows.
In Section \ref{sect:prel} we review basic definitions and  adapt usual results to our framework.
In Section \ref{sect:sup} the new Superposition calculus \SP is presented, and it is shown in Section \ref{sect:comp} that it is deductive-complete for ground clauses built on the set of abducible constants. In Section \ref{sect:ref} some refinements of the calculus are presented, aiming at more efficiency. In Section \ref{sect:term}, we show that most termination results holding for the usual Superposition calculus also apply to \SP.
The present paper is a thoroughly expanded and  revised version of \citep{EP12a}. See Section \ref{sect:ep12} for more details on the relationship of \SP with the calculus in \citep{EP12a}. \nikonew{au lieu de ``previous calculus''}

\section{Preliminaries}

\label{sect:prel}

\newcommand{\cmin}{c_0}

\subsection{Basic Definitions}

\label{sect:basic}

\nikonew{modifs, symboles de prédicat (plus clair). J'ai hésité à autoriser des prédicats explicitement et à rajouter la résolution tu me diras si la solution te convient.}

The set of \emph{terms} is built as usual on a set of \emph{function symbols} $\funcs$ including a set of \emph{predicate symbols} $\preds$, containing in particular a special constant $\true$, and a set of \emph{variables} $\vars$. Every symbol $f \in \funcs$ is mapped to a unique {\em arity} $\arity(f) \in \N$. The set $\funcs_n$ is the set of function symbols of arity $n$; an element of $\funcs_0$ is a \emph{constant}. \nikonew{modif} A term whose head is in $\preds$ is {\em boolean}.

An \emph{atom} (or {\em equation}) is an unordered pair of terms, written $t \iseq s$, where $t$ and $s$ are terms.
A \emph{literal} is either an atom or the negation of an atom (i.e., a {\em disequation}), written $t \niseq s$. For every literal $l$, we denote by $l^{\compl}$ the complementary literal of $l$, which is defined as follows: $(t \iseq s)^{\compl} \isdef t\niseq s$ and
$(t \niseq s)^{\compl} \isdef t \iseq s$. We use the notation $t \bowtie s$ to denote a literal of the form $t \iseq s$ or $t \niseq s$, and $t \not\bowtie s$ then denotes the complementary literal of $t \bowtie s$.
As usual, a non-equational atom $p(\vec{t})$ where $p \in \preds$ is encoded as an equation $p(\vec{t}) \iseq \true$. For readability, such an equation is sometimes written $p(\vec{t})$, and $p(\vec{t}) \niseq \true$ can be written $\neg p(\vec{t})$.
A \emph{clause} is a finite multiset of literals, sometimes written as a disjunction. The empty clause is denoted by $\Box$.  For technical reasons, we assume that the predicate symbols only occur in atoms of the form $t \iseq \true$, where $t \not = \true$ (literals of the form $\true \niseq \true$ can be removed from the clauses and clauses containing a literal $\true \iseq \true$ can be dismissed; equations of the form $p(\vec{t}) = q(\vec{s})$ with $p,q \in \preds \setminus \{ \true \}$ are forbidden).
\nikonew{modifs -- il faudrait dire aussi que le calcul n'engendre pas de telles équations, mais ça semble assez évident.}
For every clause $C = \{ l_1,\ldots,l_n\}$, $C^{\compl}$ denotes the set of unit clauses
$\{ \{ l_i^{\compl} \} \mid i \in [1,n] \}$ and for every set of unit clauses $S = \{ \{ l_i \} \mid i \in [1,n] \}$, $S^{\compl}$ denotes the clause
$\{ l_1^{\compl},\ldots,l_n^{\compl} \}$.
Throughout the paper, we assume that $\prec$ denotes some  fixed reduction ordering on terms \citep[see, e.g.,][]{baader1998}
such that $\true \prec t$, for all terms $t \not = \true$, extended to atoms, literals and clauses as usual\footnote{The literals $t \iseq s$ and $t \niseq s$ are ordered as $\{ \{ t \}, \{ s \} \}$ and $\{ \{ t,s \} \}$, respectively.}.

The set of variables occurring in an expression (term, atom, literal, clause) $E$ is denoted by $\var(E)$. If $\var(E)= \emptyset$ then $E$ is {\em ground}. A \emph{substitution} is a function mapping variables to terms. For every term $t$ and for every substitution $\sigma$, we denote by $t\sigma$ the term obtained from $t$ by replacing every variable $x$ by its image w.r.t.\ $\sigma$. The {\em domain} of a substitution is the set of variables $x$ such that $x\sigma \not = x$. A substitution $\sigma$ is {\em ground} if for every $x$ in the domain of $\sigma$, $x\sigma$ is ground.

A \emph{position} is a finite sequence of  positive integers. A position $p$ \emph{occurs} in a term $t$ if either $p = \emptypos$ or if $t = f(t_1,\ldots,t_n)$, $p = i.q$ with $i \in [1,n]$ and $q$ is a position in $t_i$.
If $p$ is a position in $t$, the terms $t|_{p}$ and $t[s]_p$ are defined as follows: $t|_{\emptypos} \isdef t$,
$t[s]_{\emptypos} \isdef s$, $f(t_1,\ldots,t_n)|_{i.q} \isdef (t_i)|_q$ and $f(t_1,\ldots,t_n)[s]_{i.q} \isdef f(t_1,\ldots,t_{i-1},t_i[s]_q,t_{i+1},\ldots,t_n)$.

\newcommand{\eflat}[1]{#1-flat}

\nikonew{Modifs.}
Given a set of constants $E$, a literal $t \bowtie s$ is {\em \eflat{$E$}} if either
$t,s \in \vars \cup E$ or $t = p(t_1,\dots,t_n)$, $s = \true$ and $t_1,\dots,t_n \in \vars \cup E$. A clause is \emph{\eflat{$E$}} if all its literals are {\eflat{$E$}}.
The set of {\eflat{$E$}} clauses is denoted by $\flatcl{E}$.
A clause is {\em flat} if it is {\eflat{$\funcs_0$} and {\em \elementary} if it is \eflat{$\hypcst$} and contains no symbol in $\preds$ (in other words, every literal is of the form $a \bowtie b$ with $a,b \in \vars \cup \hypcst$).

\newcommand{\dominates}{\sqsubseteq}

\newcommand{\Aset}{$\hypcst$-set\xspace}

An {\em interpretation} is a congruence relation on ground terms. An interpretation $I$ {\em validates} a clause $C$ if for all ground substitutions $\sigma$ of domain $\var(C)$ there exists $l \in C$ such that
 either $l = (t \iseq s)$ and $(l,s)\sigma \in I$, or
 $l = (t \niseq s)$ and $(l,s)\sigma \not \in I$.

\subsection{Abducible Constants and $\hypcst$-Sets}

\label{sect:aset}

In this section we introduce the notion of an $\hypcst$-set, that provides a convenient way of representing partial interpretations defined on a particular set of constant symbols.
Let $\hypcst \subseteq \funcs_0$ be a set of constants, called the  \emph{abducible constants}.
The set $\hypcst$ is fixed by the user and contains all constants on which the abducible \MnachowithAnswer{abducible?}{ok} formul{\ae} can be constructed. 
We assume that $f(\vec{t}) \succ a$, for all $a \in \hypcst$ and $f \not \in \hypcst$, and that
$q(t_1,\dots,t_n) \succ p(a_1,\dots,a_n)$ if $a_1,\dots,a_n \in \hypcst$, $p,q$ are predicate symbols and $\exists i \in [1,n]\, t_i \succ a_1,\dots,a_n$.

\nikonew{Modifs}

\begin{newdef}
An {\em \Aset} is a set of \eflat{$\hypcst$} literals $\X$ satisfying the following properties.
\begin{itemize}
\item{If $L \in \X$ and $L$ is not ground then $L$ is negative or of the form $p(t_1,\ldots,t_n) \iseq \true$.}
\item{If $\set{L[a]_p, a \iseq b} \subseteq \X$, where $a,b \not = \true$, then $L[b]_p \in \X$.}
\item{$a \iseq a \in \X$, for all $a \in \hypcst$.}
\end{itemize}
\MnachowithAnswer{décalé la définition de complet ici}{ok}
 An \Aset $\X$ is \emph{positive} if it only contains positive literals, and {\em complete} if for every ground \eflat{$\hypcst$} atom $A$, $\X$ contains either $A$ or $\neg A$.
\end{newdef}
Note that all elementary \MnachowithAnswer{\elementary?}{oui} positive literals in $\X$ must be ground whereas negative or non elementary literals possibly contain variables. \MnachowithAnswer{zapper ça? a positive \Aset is necessarily ground}{oui}.
Informally,  a satisfiable \Aset can be viewed as a partial interpretation on the constant symbols in $\hypcst$. The positive elementary \MnachowithAnswer{\elementary?}{oui} literals in $\X$ define an equivalence relation between elements on $\hypcst$ and the negative elementary \MnachowithAnswer{elementary?}{oui} literals specify the equivalence classes that are known to be distinct. Literals of the form $p(t_1,\dots,t_n) \bowtie \true$ specify the interpretation of predicate symbols on constants of $\hypcst$. Variables correspond to unknown (or unspecified) constant symbols in $\hypcst$. Complete {\Aset}s are total interpretations on $\A$.

This definition of {\Aset}s is given for theoretical purposes only: in practice, they can be more conveniently represented by a set of oriented equations of the form $\{ a_i \iseq b_i \mid i \in [1,n] \}$, where $\forall i \in [1,n]\ a_i,b_i \in \hypcst$, $a_i \succ b_i$ and $i \not = j \Rightarrow a_i \not = a_j$, together with a set of irreducible
literals of the form $c \not \iseq d$ or $p(c_1,\dots,c_n) \bowtie \true$, where
$\forall i \in [1,n]$, $c,d,c_1,\dots,c_n \not = a_i$.
When convenient, we may represent an \Aset by a set $X$ of equations and disequations, with the intended meaning that we are actually referring to the smallest \Aset $\X$ that contains $X$.

\begin{newex}
Let $\hypcst = \set{a,b,c,d}$ and $x\in \V$. Then the set \[\X = \{ a \iseq a,\, b\iseq b,\, c \iseq c,\, d \iseq d,\, e\iseq e,\, a \iseq b,\, c \not \iseq a,\, c \not \iseq b,\, d \not \iseq x\}\] is an \Aset.
Assuming an ordering such that $a \succ b \succ c \succ d$, it can be more concisely represented by
$\{ a \iseq b, c \not \iseq b, d \not \iseq x \}$.
$\X$ defines a partial interpretation in which $a,b$ are known to be equal and distinct from $c$, while $d$ is distinct from some unspecified constant $x$ ($x$ can represent $a,b,c$ or $e$ -- if $x$ represents $d$ then the set is unsatisfiable). The interpretation is only partial since
 it can be extended into a total interpretation that satisfies either $a\iseq d$ or $a\niseq d$.
\end{newex}

\begin{newdef}
\label{def:red}
  For every \Aset $\X$ and for every expression (term, atom, literal,
  clause or clause set) $E$, we denote by $\red{E}{\X}$ the expression
  obtained from $E$ by replacing every constant $a \in \hypcst$ in $E$
  by the smallest (according to $\prec$) constant $b$ in $\hypcst$
  such that $a \iseq b \in \X$.  We write $t \almosteq{\X} s$ iff
  $\red{t}{\X} = \red{s}{\X}$ and $t \almosteq{} s$ iff there exists
  an \Aset $\X$ such that $t \almosteq{\X} s$.
   This definition is extended to substitutions: we write $\sigma = \red{\theta}{\X}$ if $x\sigma = \red{(x\theta)}{\X}$ and $\sigma\almosteq{\X} \theta$ if for all $x \in \dom{(\sigma)} \cup \dom{(\theta)}$, $x\sigma \almosteq{\X} x\theta$.
\end{newdef}

\begin{newprop}\label{prop:flatred}
  Let $C$ be a clause, $\sigma$ be a substitution and $\X$ be an \Aset. If $\red{(C\sigma)}{\X}$ is \eflat{$\hypcst$} \MnachowithAnswer{ajout}{ok} (resp. \elementary), then so is $C$.
\end{newprop}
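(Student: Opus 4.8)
The plan is to reason literal by literal: a clause is \eflat{$\hypcst$} (resp.\ \elementary) precisely when each of its literals is, and $\red{(C\sigma)}{\X}$ is built literalwise from $C$, so it suffices to prove that whenever $\red{(t\sigma)}{\X} \bowtie \red{(s\sigma)}{\X}$ is \eflat{$\hypcst$} (resp.\ \elementary), the literal $t \bowtie s$ of $C$ already is. The engine of the argument is a pair of structural observations on how $\sigma$ and the reduction act on a single term $u$. If $u = f(u_1,\dots,u_n)$ with $n \geq 1$ then $f \notin \hypcst$ (as $\hypcst \subseteq \funcs_0$), so $\red{(u\sigma)}{\X} = f(\red{(u_1\sigma)}{\X},\dots,\red{(u_n\sigma)}{\X})$ and the head $f$ survives; if $u = a$ is a constant then $u\sigma = a$ and $\red{a}{\X}$ is again a constant, lying in $\hypcst$ when $a \in \hypcst$ and equal to $a$ otherwise, so (using $\true \notin \hypcst$) the reduction never maps a constant distinct from $\true$ to $\true$, nor a constant outside $\hypcst$ into $\hypcst$. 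Only when $u$ is a variable can $\red{(u\sigma)}{\X}$ be arbitrary, but a variable always satisfies $u \in \vars \cup \hypcst$ and hence never breaks flatness.

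With these facts the \elementary case is immediate. If $\red{(t\sigma)}{\X} \bowtie \red{(s\sigma)}{\X}$ is \elementary, both reduced sides lie in $\vars \cup \hypcst$; then $t$ cannot be a compound term (its head would persist) nor a constant outside $\hypcst$ (it would stay outside), so $t \in \vars \cup \hypcst$, and symmetrically $s \in \vars \cup \hypcst$. Thus $t \bowtie s$ is \elementary, and $C$ is \elementary.

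For the full \eflat{$\hypcst$} statement, literals whose reduced image has both sides in $\vars \cup \hypcst$ are settled exactly as above, so the only new case is a literal reducing to a flat predicate atom $p(u_1,\dots,u_n) \iseq \true$ with $u_i \in \vars \cup \hypcst$. As atoms are unordered, one side of $t \bowtie s$ reduces to $p(\vec{u})$ and the other to $\true$. If the side reducing to $p(\vec{u})$ is not a variable, its head persists, so it equals $p(t_1,\dots,t_n)$ and each $\red{(t_i\sigma)}{\X} = u_i \in \vars \cup \hypcst$ forces $t_i \in \vars \cup \hypcst$ by the elementary argument; and if the side reducing to $\true$ is not a variable, it must already be $\true$. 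The literal is then of the form $p(\vec{t}) \iseq \true$ with arguments in $\vars \cup \hypcst$, i.e.\ \eflat{$\hypcst$}. I expect the delicate point, and the main obstacle, to be the exclusion of the degenerate possibilities in which a variable of $C$ is instantiated by $\sigma$ to $\true$ or to a predicate-headed term --- which would, for instance, let the non-flat literal $x \iseq \true$ reduce to a flat predicate atom. These are ruled out by the standing conventions of the framework: predicate symbols occur only inside atoms $t' \iseq \true$ (so no equation of $C$ places a predicate-headed term opposite a variable), $\true \notin \hypcst$, and well-sortedness prevents an ordinary variable from being mapped to the boolean value $\true$ or to a predicate-headed term. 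Once these degeneracies are excluded, every side is forced to have the shape demanded by \eflat{$\hypcst$}, which finishes the proof.
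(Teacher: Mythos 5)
Your proof is correct and rests on the same observation as the paper's: applying $\sigma$ and then reducing by $\X$ preserves the head of any compound term and never moves a constant that lies outside $\A \cup \V$ (or that differs from $\true$) into $\A \cup \V$ (or onto $\true$), so flatness of $\red{(C\sigma)}{\X}$ forces flatness of $C$. The paper merely packages this as a one-line contrapositive --- a non-flat clause contains a non-boolean term outside $\A\cup\V$, and that term persists in $\red{(C\sigma)}{\X}$ --- and, like you, it leaves the degenerate case of a variable instantiated to $\true$ or to a predicate-headed term to the framework's implicit well-formedness conventions.
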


\begin{proof}
  The contrapositive is obvious: if $C$ is not \eflat{$\hypcst$}, then it contains a non-boolean term $t$ that is not in $\A \cup \V$. But then, neither  $t\sigma$ nor $\red{t\sigma}{\X}$ can be in $\A\cup \V$, and $\red{(C\sigma)}{\X}$ cannot be \eflat{$\hypcst$}. \nikonew{ajout} The reasoning is similar for \elementary clauses.
\end{proof}

\subsection{$\hypcst$-Unification}

\newcommand{\Asubstitution}{$\hypcst$-substitution\xspace}
\newcommand{\Aunifier}{$\hypcst$-unifier\xspace}
\newcommand{\Aunification}{$\hypcst$-unification\xspace}
\newcommand{\Aunifiable}{$\hypcst$-unifiable\xspace}

\newcommand{\moregeneral}{\geq_{\hypcst}}
\newcommand{\equivto}{\sim_{\hypcst}}

$\hypcst$-unification is an extension of unification that, given two terms $t$ and $s$, aims at computing a substitution $\sigma$ such that $t\sigma \almosteq{} s\sigma$, meaning that $t\sigma$ and $s\sigma$ are equal up to a renaming of constants in $\hypcst$. The set of necessary constant renamings is collected and stored in a positive \Aset. This set corresponds exactly to residual (non-solvable) equations obtained when applying standard unification algorithms.
\begin{newex}
The terms $f(a,b)$ and $f(x,x)$ are not unifiable in the standard sense, but they are \Aunifiable. The substitution $\sigma: \{ x \mapsto a \}$ is an \Aunifier of these two terms, together with the \Aset $\{ a\iseq a,\, b\iseq b,\, a \iseq b \}$.
\end{newex}
\begin{newdef}
An {\em \Asubstitution} is a pair $(\sigma,\X)$ where $\sigma$ is a substitution and $\X$ is an
 \Aset containing only equations between elements of $\hypcst$.
An \Asubstitution $(\sigma,\X)$ is an {\em \Aunifier} of an equation $t \iseq s$  iff $t\sigma \almosteq{\X} s\sigma$.
Two terms admitting an \Aunifier are {\em \Aunifiable}.
\end{newdef}

Intuitively, if $(\sigma,\X)$ is an {\Aunifier} of an equation $t \iseq s$, then the equations in $\X$ can be used to reduce $t$ and $s$ to terms that are unifiable in the standard sense.

\begin{newdef}
\label{pseudounif:gen}
An \Asubstitution $(\sigma,\X)$ is \emph{more general} than an \Asubstitution $(\sigma',\X')$, written $(\sigma,\X) \moregeneral (\sigma',\X')$, if there exists a (standard) substitution $\theta$ such that the two following conditions hold:
\begin{itemize}
\item{$\X \subseteq \X'$.}
\item{For every $x \in \vars$, $x\sigma'\almosteq{\X'} x\sigma\theta$.}
\end{itemize}
We write $(\sigma,\X) \equivto (\sigma',\X')$ if $(\sigma,\X) \moregeneral (\sigma',\X')$ and
$(\sigma',\X') \moregeneral (\sigma,\X)$.
\end{newdef}

\begin{newex}
  Let $\A = \set{a,b,c}$, and consider the following substitutions and {\Aset}s:
  \[\begin{array}{rclcrcl}
    \sigma & = & \set{x\mapsto a,\ y\mapsto c,\ z \mapsto f(a,z')} & \textrm{ and } & \X & = & \set{a\iseq c}\\
   \sigma' & = & \set{x\mapsto a,\ y\mapsto b,\ z\mapsto f(b,b),}  & \textrm{ and } & \X' & = & \set{a\iseq b,\ b\iseq c}.
 \end{array}\]
 By letting $\theta = \set{z'\mapsto b}$, it is simple to verify that $(\sigma,\X) \moregeneral (\sigma',\X')$.
\end{newex}

Note that most general {\Aunifier}s are not unique modulo variable renamings. For example, the equation $f(g(a),g(b)) \iseq f(g(x),g(y))$ admits several most general unifiers, including $(\{ x \rightarrow a, y \rightarrow b \}, \{ a \iseq b \})$,  $(\{ x \rightarrow b, y \rightarrow a \}, \{ a \iseq b \})$, \ldots which are of course all $\equivto$-equivalent.
{\Aunifier}s can be computed by a slight adaptation of the usual unification algorithm (see Appendix \ref{ap:unif} for details).


\section{\hypcst-Superposition Calculus}

\label{sect:sup}

\newcommand{\SPR}[1]{${{\cal SA}}^{\prec}_{\sel}(#1)$\xspace}
\newcommand{\SPP}{${{\cal SAR}}^{\prec}_{\sel}$\xspace}

\newcommand{\Sup}[2]{${{\cal SP}}^{#1}_{#2}$\xspace}

\newcommand{\ccl}[2]{[#1\!\mid\! #2]}

\newcommand{\Aclause}{$\hypcst$-clause\xspace}

\newcommand{\alwayssgreater}[1]{\succ_{\hypcst}}
\newcommand{\alwaysgreater}[1]{\succeq_{\hypcst}}
\newcommand{\alwaysssmaller}[1]{\prec_{\hypcst}}
\newcommand{\alwayssmaller}[1]{\preceq_{\hypcst}}

In this section we define an extension of the standard Superposition calculus \citep{BG94,DBLP:books/el/RV01/NieuwenhuisR01} with which it is possible to generate all {\eflat{$\hypcst$} implicates of a considered clause set. The calculus handles constrained clauses, called \emph{{\Aclause}s}, the constraint part of an \Aclause being an \Aset containing all the equations and disequations needed to derive the corresponding non-constraint part from the original clause set. Unification is replaced by \Aunification, and the \Aset of the generated \Aunifier is appended to the constraint of the conclusion of the rule. Furthermore, an additional inference rule, called the {\em $\hypcst$-Assertion} rule, is introduced in order to add  disequations to the constraints.

\begin{newdef}
An {\em \Aclause} is a pair $\ccl{C}{\X}$ where $C$ is a clause and $\X$ is an \Aset. If $\X = \emptyset$, then we may write $C$ instead of $\ccl{C}{\emptyset}$.
 \end{newdef}

 In what follows, we first define the ordering and selection function the calculus is based upon before presenting the inference rules and redundancy criterion of the $\hypcst$-Superposition calculus. We conclude this section by showing that the calculus is sound.

\subsection{Ordering and Selection Function}

 We begin by introducing some additional notations and terminology.

\begin{newdef}
 For all terms $t$, $s$, we write $t \alwayssgreater{\X} s$ if for every
\Aset $\X$ and ground substitution $\sigma$,
we have $\red{t\sigma}{\X} \succ \red{s\sigma}{\X}$.
This ordering is extended to atoms, literals and clauses in a similar way to $\prec$.
\end{newdef}

Intuitively $t \alwayssgreater{\X} s$ means that
$t$ is always greater than $s$, regardless of the names of the constants in $\hypcst$.

\begin{newex}
If $a,b,c \in \hypcst$ and $f(x) \succ a \succ b \succ c$, then we have $f(b) \alwaysgreater{\X} a$, but $f(a) \not \alwaysgreater{\X} f(b)$, since $\red{f(a)}{\{a \iseq c \}} = f(c) \prec f(b) = \red{f(b)}{\{ a \iseq c \}}$.
\end{newex}

\begin{newdef}
A substitution $\sigma$ is {\em $\X$-pure} if for all variables $x \in \var(\X)$, $x\sigma$ is either a variable or a constant in $\hypcst$.
\end{newdef}

 \begin{newdef}
 A function $\sel$ is a \emph{selection function} for an ordering $>$ iff $\sel$ maps every clause $C$ to a set of literals in $C$ such that $\sel(C)$  either contains a negative literal or
contains all literals that are $>$-maximal in $C$.
 \end{newdef}

We consider a selection function  $\sel$ for the ordering $\alwaysgreater{}$, that satisfies the following assumptions.
\begin{assertion}
The function $\sel$ is stable under {\em \Asubstitution}s, i.e.,  for every clause $C$, for every literal $l \in C$ and for every \Asubstitution $(\eta,\X)$, if $\red{l\eta}{\X} \in \sel(\red{C\eta}{\X})$, then $l \in \sel(C)$.
\end{assertion}

\begin{assertion}
\label{hyp:B}
For every \Aclause $C$, if $\sel(C)$ contains a literal of the form $p(\vec{t}) = \true$ then $\sel(C)$ contains no negative literal of the form $a \niseq b$ with $a,b \in \vars \cup \hypcst$.
\end{assertion}
Assumption \ref{hyp:B} can always be fulfilled since negative literals can be selected arbitrarily.

\label{cond:sel}

\subsection{Inference Rules}

The calculus \SP\ is defined by the  rules below.
The standard Superposition calculus (denoted by \Sup{\prec}{\sel}) coincides with \SP if $\hypcst = \emptyset$.

\begin{newrem}
Following our convention, in all rules, if $\X$, $\Y$ are two {\Aset}s, then $\X \cup \Y$ does not denote the mere union of $\X$ and $\Y$, but rather the smallest \Aset containing both $\X$ and $\Y$ (it is obtained by transitive closure from the union of $\X$ and $\Y$). For example, if $\set{a,b,c} \subseteq \hypcst$ with $a\succ b\succ c$,
$\X = \{ a \iseq a, b \iseq b, c \iseq c, a \iseq b \}$ and
$\Y = \{ a \iseq a, b \iseq b, c \iseq c, a \iseq c \}$, then
$\X \cup \Y$ denotes the \Aset
 $\{ a \iseq a, b \iseq b, c \iseq c, a \iseq b, a \iseq c, b \iseq c \}$.
 Similarly, if $\X$ is an \Aset and $\sigma$ is an $\X$-pure substitution, then $\X\sigma$ denotes the smallest \Aset containing $\X\sigma$. For instance, if
 $\X = \{ a \iseq a, b \iseq b, a \iseq b, x \not \iseq y \}$ and $\sigma = \{ x \mapsto a \}$, then $\X\sigma = \{ a \iseq a, b \iseq b, a \iseq b, a \not \iseq y, b \not \iseq y \}$.
\end{newrem}

\subsubsection*{$\hypcst$-Superposition}

\begin{center}
\begin{minipage}{0.8\linewidth}
\[
\begin{tabular}{c}
$\ccl{C \vee t \bowtie s}{\X},\quad \ccl{D \vee u \iseq v}{\Y}$ \\
\hline
$\ccl{C \vee D \vee t[v]_p \bowtie s}{\X \cup \Y \cup \E}\sigma$
\end{tabular}
\]
If $\bowtie \in \{ \iseq, \not \iseq \}$, $(\sigma,\E)$ is an $(\X\cup \Y)$-pure
most general \Aunifier of $u$ and $t|_{p}$,
$v\sigma \not \alwaysgreater{\Y\sigma} u\sigma, s\sigma \not \alwaysgreater{\X\sigma} t\sigma$,
$(t \bowtie s)\sigma \in \sel((C \vee t \bowtie s)\sigma)$,
$(u \iseq v)\sigma \in \sel((D \vee u \iseq v)\sigma)$ and  if $t|_{p}$ is  a variable then $t|_{p}$ occurs in $\X$.
\end{minipage}
\end{center}

We shall refer to the left and right premises of the inference rule as the {\em into} and {\em from} premises, respectively.
The main difference with the usual Superposition rule (besides the replacement of $\succ$ by $\alwayssgreater{}$ and of unifiers by {\Aunifier}s) is that superposition into a variable is permitted, provided the considered variable occurs in the constraint part of the clause. The reason is that these do not actually represent variables in the usual sense, but rather placeholders for (unknown) constants (see also Example \ref{ex:vars}).

  By definition of the calculus, variables can only occur in the constraints if the $\hypcst$-Assertion rule (see below) is applied on a non-ground literal. This is the case because, by definition of \Aunification, the other rules  add only ground equations into the constraints. Furthermore, by definition, a non-ground literal can be added to the constraints only if the considered clause is variable-eligible, \ie contains a selected literal of the form $x \iseq t$, where $x \not \prec t$.  This cannot happen if the clause set is variable-inactive  \citep{ABRS09}. However, there exist theories of interest that are not variable-inactive, for instance the theory of arrays with axioms for constant arrays
 (e.g., $\forall x, \select(t,x) \iseq c$).
\MnachowithAnswer{est-ce qu'il ne faudrait pas faire une remarque sur le cas où on a des prémisses de la forme $p(t) \iseq \top$ et $p(t)\niseq \bot$?}{J'ai rajouté la remarque ci-dessous, dis moi si c'est à ça que tu pensais.}

Note that the rule applies if $t$ and $u$ are of the form $p(\vec{t}) \bowtie \trueform$ and
$p(\vec{s}) \iseq \trueform$ (with $p = \emptypos$), in which case $t[v]_p \bowtie s$ is of the form $\true \bowtie \true$. If $\bowtie$ is $\iseq$ then the \Aclause is a tautology and can be deleted, and if $\bowtie$ is $\niseq$ then the literal $\true \niseq \true$ is deleted from the clause as explained before. The rule is essentially equivalent to Ordered Resolution in this case \citep*[see for instance][]{LEI96}.

\subsubsection*{$\hypcst$-Reflection}
\begin{center}
\begin{minipage}{0.8\linewidth}
\[
\begin{tabular}{c}
$\ccl{C \vee t \not \iseq s}{\X}$ \\
\hline
$\ccl{C}{\X \cup \E}\sigma$
\end{tabular}
\]
If $(\sigma,\E)$ is an $\X$-pure most general \Aunifier of $t$ and $s$ and $(t \not \iseq s)\sigma \in \sel((C \vee t \not \iseq s)\sigma)$.
\end{minipage}
\end{center}

\subsubsection*{Equational $\hypcst$-Factorization}

\newcommand{\eqfact}{equational\xspace}
\newcommand{\Eqfact}{Equational\xspace}

\begin{center}
\begin{minipage}{0.8\linewidth}
\[
\begin{tabular}{c}
$\ccl{C \vee t \iseq s \vee u \iseq v}{\X}$ \\
\hline
$\ccl{C \vee s \not \iseq v \vee t \iseq s}{\X \cup \E}\sigma$
\end{tabular}
\]

If $(\sigma,\E)$ is an  $\X$-pure most general \Aunifier of $t$ and $u$, $s\sigma \not \alwaysgreater{\X\sigma}
t\sigma$, $v\sigma \not \alwaysgreater{\X\sigma} u\sigma$ and $(t \iseq s)\sigma \in \sel((C
\vee t \iseq s \vee u \iseq
v)\sigma)$.
\end{minipage}
\end{center}

For technical convenience, we assume that $s \not \iseq v$ is omitted in the conclusion if $s\sigma = v\sigma$.



 \newcommand{\pp}{\beta}

\newcommand{\projection}[2]{\Phi(#1,#2)}
\newcommand{\projectionbis}[2]{\Psi(#1,#2)}
\newcommand{\projectiontype}[3]{\Phi_{#3}(#1,#2)}


\subsubsection*{$\hypcst$-Assertion}

\begin{center}
\begin{minipage}{0.8\linewidth}
\[
\begin{tabular}{c}
$\ccl{t \iseq s \vee C}{\X}$ \\
\hline
$\ccl{C}{\X \cup \{ t \not \iseq s \}}$
\end{tabular}
\]

If $t,s \in \hypcst \cup \V$, $t \iseq s \in \sel(t \iseq s \vee C)$ and $\hypcst \not = \emptyset$.
\end{minipage}
\end{center}

\begin{center}
\begin{minipage}{0.8\linewidth}
\[
\begin{tabular}{c}
$\ccl{p(t_1,\dots,t_n) \bowtie \true \vee C}{\X}$ \\
\hline
$\ccl{C}{\X \cup \{ p(t_1,\dots,t_n) \not \bowtie \true \}}$
\end{tabular}
\]

If $t_1,\dots,t_n \in \hypcst \cup \V$, $p(t_1,\dots,t_n) \bowtie \true \in \sel(t \iseq s \vee C)$ and $\hypcst \not = \emptyset$.
\end{minipage}
\end{center}

\subsubsection*{$\hypcst$-Substitutivity Rule}

\begin{center}
\begin{minipage}{0.9\linewidth}
\[
\begin{tabular}{c}
$\ccl{t_1 \iseq s_1 \vee C_1}{\X_1} \dots \ccl{t_n \iseq s_n \vee C_n}{\X_n}$ \\
\hline
$\ccl{p(t_1,\dots,t_n) \bowtie \true \vee C_1 \vee \dots \vee C_n}{\{  p(s_1,\dots,s_n) \bowtie \true \} \cup \bigcup_{i=1}^n \X_i }$
\end{tabular}
\]
\MnachowithAnswer{A revérifier (je n'ai pas relu les démos), mais pourquoi on n'a pas les $\X_i$ dans les contraintes?}{oubli de ma part}
\MnachowithAnswer{If $p$ is a predicate symbol of arity $n$, $x_1,\dots,x_n$ are variables. quid?}{deleted, de même que le pb précédent, c'est un relicat d'une version précédente de la règle qui introduisait juste des axiomes de la forme $\ccl{p(\vec{x})}{\{ p(\vec{x}) \}}$ }
\end{minipage}
\end{center}
The rule can be applied also by replacing some of the premisses $\ccl{t_i \iseq s_i \vee C_i}{\X_i}$ by variants of the Reflexivity axiom $x \iseq x$ (note that if all premisses are of this form then the conclusion is a tautology).

\subsection{Soundness}

The interpretation of an \Aclause is defined as a logical implication:

\begin{newdef}
An interpretation $I$ {\em validates} an \Aclause $\ccl{C}{\X}$ iff for every $\X$-pure ground substitution $\sigma$ of domain $\var(C) \cup \var(\X)$,
 either $I \not \models \X\sigma$ or $I \models C\sigma$.
\end{newdef}

If $I \models \ccl{C}{\X}$ for all interpretations $I$, then
$\ccl{C}{\X}$ is a \emph{tautology}.\MnachowithAnswer{attention: on se sert de la notion de tautologie dans la définition de la redondance. Est-ce qu'il ne faudrait pas intervertir 'soundness' et 'redundancy'? D'ailleurs la proposition ci-dessous est uniquement utilisée dans l'annexe. Pourquoi ne pas la décaler là-bas?}{c'est le plus simple en effet} In particular, this property holds if
$\X$ is unsatisfiable,  if $\red{C}{\X}$ contains two complementary literals or a literal of the form $t \iseq t$, or if all the literals in $C$ occurs in $\X$.

\begin{newtheo}
Let $S$ be a set of {\Aclause}s. If $C$ is generated from $S$ by one of the rules of \SP then $S \models C$.
\end{newtheo}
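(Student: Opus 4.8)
The plan is to prove the rule-local statement that each inference rule of \SP\ preserves validity in every interpretation: if an interpretation $I$ validates all premises of a rule, then it validates the conclusion. Since $S \models C$ means precisely that every interpretation validating $S$ validates $C$, the theorem follows by a case analysis on the five rules. Note first that none of the side conditions involving the ordering $\alwayssgreater{}$, maximality, or the selection function $\sel$ play any role for soundness; they matter only for completeness. For each rule I fix an interpretation $I$ validating the premises, fix an $\X'$-pure ground substitution $\tau$ of domain $\var(C') \cup \var(\X')$ for the conclusion $\ccl{C'}{\X'}$ with $I \models \X'\tau$, and aim to show $I \models C'\tau$ (writing $u =_I w$ for $(u,w)\in I$). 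To reuse the premises I instantiate them along $\sigma\tau$, where $\sigma$ is the substitution produced by \Aunification, or the identity for the rules with no unification. Since $\X\sigma$ and $\Y\sigma$ are contained in $\X'$, the hypothesis $I \models \X'\tau$ entails $I \models \X\sigma\tau$ and $I \models \Y\sigma\tau$, and the purity of $\sigma$ together with that of $\tau$ guarantees $\sigma\tau$ is a legal pure ground substitution for each premise, so that validity of the premises yields $I \models (C \vee t \bowtie s)\sigma\tau$ and $I \models (D \vee u \iseq v)\sigma\tau$.

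The recurring technical ingredient is a bridge from the syntactic relation $\almosteq{}$ to genuine equality in the congruence $I$. Since the \Aset $\E$ of an \Aunifier consists only of equations between abducible constants, it is ground, so $I \models \E\sigma\tau$ reduces to $I \models \E$. I then claim that whenever $u\sigma \almosteq{\E} w\sigma$, i.e.\ $\red{u\sigma}{\E} = \red{w\sigma}{\E}$, one has $u\sigma\tau =_I w\sigma\tau$. This holds because $\red{\cdot}{\E}$ rewrites only occurrences of abducible constants $a$ into $\E$-equivalent constants $b$, and such occurrences are left untouched by the grounding $\tau$; hence $u\sigma\tau$ and $(\red{u\sigma}{\E})\tau$ differ only at constant positions related by $\E$, and $I \models \E$ makes them $I$-equal by congruence, and symmetrically for $w$, while $(\red{u\sigma}{\E})\tau = (\red{w\sigma}{\E})\tau$ holds syntactically.

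Granting this bridge, each rule reduces to the standard soundness argument for its classical counterpart, with $I$-equality replacing syntactic equality. For $\hypcst$-Superposition I argue by cases: if some literal of $C\sigma\tau$ or $D\sigma\tau$ holds in $I$ the conclusion holds; otherwise $(t \bowtie s)\sigma\tau$ and $(u \iseq v)\sigma\tau$ hold, the bridge gives $u\sigma\tau =_I (t|_p)\sigma\tau$, combined with $u\sigma\tau =_I v\sigma\tau$ it yields $(t|_p)\sigma\tau =_I v\sigma\tau$, and congruence turns $(t \bowtie s)\sigma\tau$ into $(t[v]_p \bowtie s)\sigma\tau$. $\hypcst$-Reflection uses the bridge to make $(t \niseq s)\sigma\tau$ false in $I$, forcing $C\sigma\tau$; Equational $\hypcst$-Factorization is the usual factoring split, where assuming $t\sigma\tau \neq_I s\sigma\tau$ and $u\sigma\tau =_I v\sigma\tau$ together with $t\sigma\tau =_I u\sigma\tau$ forces $s\sigma\tau \neq_I v\sigma\tau$, i.e.\ $(s \niseq v)\sigma\tau$. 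For both $\hypcst$-Assertion rules the conclusion constraint is the premise constraint augmented with the complement $l^{\compl}$ of the asserted literal $l$; from $I \models (\X \cup \{ l^{\compl} \})\tau$ we get $I \models \X\tau$, hence $I \models (l \vee C)\tau$, while $I \models l^{\compl}\tau$ makes $l\tau$ false, leaving $I \models C\tau$. Finally, $\hypcst$-Substitutivity is a direct substitutivity argument: assuming no $C_i\tau$ holds, each $(t_i \iseq s_i)\tau$ holds, so by congruence $p(t_1,\dots,t_n)\tau =_I p(s_1,\dots,s_n)\tau$; since the conclusion constraint contains $p(s_1,\dots,s_n)\bowtie\true$ and $I$ satisfies it under $\tau$, the literal $(p(t_1,\dots,t_n)\bowtie\true)\tau$ holds as well (for both $\bowtie=\iseq$ and $\bowtie=\niseq$), and reflexivity premises contribute only the trivial identities $t_i\tau =_I t_i\tau$.

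The step I expect to be the main obstacle is making the bridge fully rigorous, namely verifying that $\red{\cdot}{\E}$ commutes with the grounding substitution up to $I$-equality. This rests on two facts to be checked against the definitions: that $\E$, and more generally the equational part produced by \Aunification, contains only ground equations between abducible constants, so that reduction acts solely at positions stable under $\tau$; and that the purity conditions attached to the rules ensure $\sigma\tau$ is a genuine pure ground substitution for each premise, so that the premises' validity is actually applicable. Once these bookkeeping points are settled, the remaining case analysis is routine.
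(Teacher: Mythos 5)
Your proposal is correct and follows essentially the same route as the paper: the paper's proof is a two-line sketch that reduces everything to rule-by-rule soundness plus the single key fact that an \Aunifier $(\sigma,\E)$ of $t\iseq s$ makes $\ccl{t\sigma\iseq s\sigma}{\E}$ valid in every interpretation, which is precisely your ``bridge'' from $\almosteq{\E}$ to $I$-equality. Your write-up merely makes explicit the case analysis and the purity/domain bookkeeping that the paper dismisses as ``straightforward inspection of the rules, as in the usual case.''
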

\begin{proof}
  It suffices to prove that all the rules are sound, i.e., that the
  conclusion of the rule is a logical consequence of the
  premises. This is due to the fact that if $(\sigma,\E)$ is an
  \Aunifier of $t \iseq s$, then the \Aclause $\ccl{t\sigma \iseq
    s\sigma}{\E}$ is valid in all interpretations. Then the proof follows
  by a straightforward inspection of the rules, as in the usual case.
\end{proof}

\subsection{Redundancy}

\newcommand{\redundant}{$\hypcst$-redundant\xspace}
\newcommand{\quasipositive}{quasi-positive\xspace}

We now adapt the standard redundancy criterion to {\Aclause}s. An \Aclause is {\em \quasipositive} if the only negative literals occurring in it are of the form $p(\vec{t}) \niseq \true$.
\MnachowithAnswer{reformulation de la def ci-dessous}{oui}
\begin{newdef}\label{def:ared}
\label{def:red_crit}
An \Aclause $\ccl{C}{\X}$ is {\em \redundant} in a set of {\Aclause}s $S$ if either $\ccl{C}{\X}$ is a tautology, or for every ground substitution $\theta$ of the variables in $\ccl{C}{\X}$ such that $\X\theta$ is a satisfiable \Aset, one of the following conditions hold.
\begin{itemize}
\item{There exists an \Aclause $\ccl{D}{\Y}$ and a substitution $\sigma$ such that
$D\sigma \subseteq C\theta$ and $\Y\sigma \subseteq \X\theta$.}
\item{
    If $\hypcst = \emptyset$ or $C\theta$ is not both {\eflat{$\hypcst$}} and {\quasipositive}, then there exist
{\Aclause}s
$\ccl{D_i}{\Y_i}$ and substitutions $\sigma_i$ ($1 \leq i \leq n$), such that:
\begin{itemize}
\item $\Y_i\sigma_i \subseteq \X\theta$ for all $i = 1, \ldots, n$,
\item $\X\theta, D_1\sigma_1,\ldots,D_n\sigma_n \models C\theta$,
\item $C\theta \alwaysgreater{\Y} D_1\sigma_1,\ldots,D_n\sigma_n$.
\end{itemize}}


\end{itemize}
\end{newdef}
When applied to standard clauses (with $\hypcst=\emptyset$), this notion of redundancy coincides
with the usual criterion \citep[see for instance][]{BG94,DBLP:books/el/RV01/NieuwenhuisR01}.\MnachowithAnswer{à quoi correspond 'citep'? Je trouve bof de mettre le 'see for instance' après les références mais je ne sais pas si c'est modifiable...}{erreur latex de ma part, la syntaxe est soit {\tt citep[apr\`es]\{refs\}} soit {\tt citep[avant][apr\`es]\{refs\}} }

It is easy to check that the standard redundancy detection
rules such as subsumption, tautology deletion or equational
simplification, are particular cases of this redundancy criterion.
\nikonew{ajout remarque:}
Note that the second item in Definition \ref{def:red_crit}
is similar to the usual redundancy criterion of the Superposition calculus \citep[see, e.g,][]{BG94}, with the following differences:
(i) the entire constraint part of the considered \Aclause may be used to infer the clausal part, disregarding any ordering condition, (ii) the condition only applies to clauses that are not both \eflat{$\hypcst$} and \quasipositive. For the clauses that are \eflat{$\hypcst$} and \quasipositive, redundancy testing is limited to tautology deletion and subsumption (this is necessary to ensure completeness, see Remark \ref{rem:red}).

\begin{newex}
Let $\hypcst = \set{a,b,c}$. The \Aclause $\ccl{a \not \iseq c \vee b \not \iseq c \vee f(x)\iseq d}{a \not \iseq b}$ is \redundant in any set $S$,
since for all ground substitutions $\theta$, $a \not \iseq b \models (a \not \iseq c \vee b \not \iseq c\vee f(x)\iseq d)\theta$.

The \Aclause $\ccl{f(a,b) \iseq c \vee g(a) \iseq d}{a \niseq b}$ is
\redundant in $\{ f(a,x) \iseq c \vee a \iseq b \}$. Indeed, let
$\sigma = \{ x \mapsto b \}$, then $a \niseq b, f(a,x)\sigma \iseq c
\vee a \iseq b \models f(a,b) \iseq c \vee g(a) \iseq d$ and $f(a,b)
\iseq c \vee g(a) \iseq d \alwaysgreater{} f(a,x)\sigma \iseq c \vee a
\iseq b$.
 \end{newex}

The following result is a straightforward consequence of Definition \ref{def:ared}.

\begin{newprop}\label{prop:instred}
  If $\ccl{C}{\X}$ is redundant in a set $S$, then for any \Asubstitution $(\sigma, \Y)$, $\ccl{C\sigma}{\X\cup \Y}$ is also redundant in $S$.
\end{newprop}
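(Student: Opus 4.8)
The plan is to use the fact that, apart from the escape clause for tautologies, \redundant-ness is checked instance by instance: $\ccl{C}{\X}$ is \redundant in $S$ once every ground instance $(C\theta,\X\theta)$ with $\X\theta$ satisfiable is covered, either by a subsumer taken in $S$ or by strictly smaller entailing instances of clauses of $S$. Since every ground instance of $\ccl{C\sigma}{\X\cup\Y}$ arises from a ground instance of $\ccl{C}{\X}$ by precomposing the grounding with $\sigma$ and by enlarging the constraint with $\Y$, I would transport the witnesses provided by the redundancy of $\ccl{C}{\X}$ through this composition. First, the tautology case is disposed of directly: if $\ccl{C}{\X}$ is a tautology then so is $\ccl{C\sigma}{\X\cup\Y}$, because both operations preserve validity in every interpretation — instantiation only restricts the range of ground substitutions to be tested, and replacing $\X$ by the larger $\X\cup\Y$ only weakens the hypothesis $I\models\X\tau$ that has to be discharged.

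For the main case, assume $\ccl{C}{\X}$ is not a tautology and fix a ground substitution $\theta$ of the variables of $\ccl{C\sigma}{\X\cup\Y}$ with $(\X\cup\Y)\theta$ satisfiable. Since $\X\subseteq\X\cup\Y$, the relevant image of $\X$ under the composed substitution $\theta^{\ast}:=\sigma\theta$ is a subset of $(\X\cup\Y)\theta$, hence satisfiable, so I may apply the redundancy of $\ccl{C}{\X}$ at $\theta^{\ast}$ and then transport whichever disjunct of Definition~\ref{def:ared} it provides. If $\ccl{C}{\X}$ is subsumed at $\theta^{\ast}$ by some $\ccl{D}{\Z}\in S$ through $\rho$, i.e.\ $D\rho\subseteq C\theta^{\ast}$ and $\Z\rho\subseteq\X\theta^{\ast}$, then because $C\theta^{\ast}=(C\sigma)\theta$ and $\X\theta^{\ast}\subseteq(\X\cup\Y)\theta$ the same $\ccl{D}{\Z}$ and $\rho$ subsume $\ccl{C\sigma}{\X\cup\Y}$ at $\theta$. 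If instead the entailment disjunct holds, giving {\Aclause}s $\ccl{D_i}{\Y_i}$ and substitutions $\sigma_i$ with $\Y_i\sigma_i\subseteq\X\theta^{\ast}$, with $\X\theta^{\ast},D_1\sigma_1,\dots,D_n\sigma_n\models C\theta^{\ast}$ and $C\theta^{\ast}\alwaysgreater{}D_i\sigma_i$, then $\Y_i\sigma_i\subseteq\X\theta^{\ast}\subseteq(\X\cup\Y)\theta$ yields the constraint inclusions, the fact that $(\X\cup\Y)\theta\models\X\theta^{\ast}$ upgrades the entailment into $(\X\cup\Y)\theta,D_1\sigma_1,\dots,D_n\sigma_n\models(C\sigma)\theta$, and the ordering condition is untouched since $(C\sigma)\theta=C\theta^{\ast}$. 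The side condition of the entailment disjunct transfers verbatim: the proviso that forbids it for clauses that are both {\eflat{$\hypcst$}} and \quasipositive depends only on the ground clause $(C\sigma)\theta=C\theta^{\ast}$, which is literally the same in premise and conclusion, so the entailment disjunct is available for the conclusion exactly when it was available for the premise.

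The step that requires genuine care, and is the main obstacle, is the bookkeeping of the constraint under the composition $\theta^{\ast}=\sigma\theta$: establishing $C\theta^{\ast}=(C\sigma)\theta$ together with $\X\theta^{\ast}\subseteq(\X\cup\Y)\theta$ and the preservation of $\X$-purity and satisfiability. The clausal identity is immediate, but the constraint inclusion forces one to be explicit about how $\sigma$ interacts with the variables occurring in $\X$. The inclusion holds provided $\sigma$ acts coherently on the constraint — either because $\sigma$ leaves $\var(\X)$ untouched, so that $\X\theta^{\ast}=\X\theta$, or because the action of the \Asubstitution $(\sigma,\Y)$ on the \Aclause applies $\sigma$ to the constraint as well, so that $\X\theta^{\ast}=(\X\sigma)\theta$ is contained in $(\X\cup\Y)\theta$ and the variables shared between $C$ and $\X$ are instantiated consistently. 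This coherence, together with $\var(\Y)=\emptyset$, is exactly what the hypothesis that $(\sigma,\Y)$ is an \Asubstitution is used for; once it is settled, satisfiability of $\X\theta^{\ast}$ follows from the inclusion into the satisfiable $(\X\cup\Y)\theta$, $\X$-purity is inherited from that of $\theta$ on $\var(\X)$, and the two disjuncts transport as above, which concludes the proof.
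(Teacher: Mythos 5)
The paper offers no proof of this proposition at all --- it is dismissed as ``a straightforward consequence of Definition~\ref{def:ared}'' --- so your transport argument (pull each ground instance $\theta$ of the conclusion back to the instance $\theta^{*}=\sigma\theta$ of the premise, then push the subsumption or entailment witnesses forward, noting that $(C\sigma)\theta=C\theta^{*}$ leaves the ordering condition and the \eflat{$\hypcst$}/\quasipositive{} side condition untouched) is exactly the intended route, and that part of your write-up is correct.

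The one point you yourself flag as delicate, however, is not actually closed. You need $\X\theta^{*}$ to be a satisfiable \Aset{} contained in $(\X\cup\Y)\theta$, and when a variable $x$ occurs in both $C$ and $\X$ with $x\sigma\neq x$, the literal statement (which applies $\sigma$ to $C$ but not to $\X$) gives you no such inclusion: $\X\theta^{*}$ then instantiates $x$ by $x\sigma\theta$ while $(\X\cup\Y)\theta$ instantiates it by $x\theta$, and if $x\sigma\theta$ is not even in $\hypcst\cup\vars$ then $\X\theta^{*}$ is not an \Aset{} and the redundancy of $\ccl{C}{\X}$ says nothing about $\theta^{*}$. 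Your resolution is a disjunction of two unverified hypotheses (``$\sigma$ leaves $\var(\X)$ untouched'' or ``$\sigma$ is applied to the constraint as well''), and your claim that this coherence ``is exactly what the hypothesis that $(\sigma,\Y)$ is an \Asubstitution{} is used for'' is wrong: that definition only requires $\Y$ to be a ground set of equations between abducibles and places no condition whatsoever on how $\sigma$ acts on $\var(\X)$. The gap disappears if one reads the conclusion under the paper's stated convention that a substitution applied to an \Aclause{} acts on both components --- i.e.\ the instance is $\ccl{C\sigma}{\X\sigma\cup\Y}$ with $\sigma$ $\X$-pure, so that $\X\sigma\theta\subseteq(\X\sigma\cup\Y)\theta$ is immediate --- but you should say this explicitly rather than attribute it to a definition that does not provide it.
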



\begin{newdef}
A set $S$ is {\em \SP-saturated}
if every {\Aclause} that can be derived from {\Aclause}s in $S$ by a rule in
\SP is redundant in $S$.
\end{newdef}

\subsubsection*{Examples}

\nikonew{J'ai déplacé cette section ici, parce qu'on se servait de notion satisfiabilité etc.}

We provide simple application examples.

\begin{newex}
  Let $S = \{ g(f(x)) \iseq d,\, f(a) \iseq a,\, g(b) \iseq b,\, d
  \iseq c \}$.  Assume that $\hypcst = \{ a,b,c \}$.  By applying the
  $\hypcst$-Superposition rule on the terms $f(x)$ and $f(a)$, we
  derive the clause $g(a) \iseq d$ (note that this application of the
  rule is equivalent to the usual one). Then the
  $\hypcst$-Superposition rule can be applied again on the terms $g(a)$ and
  $g(b)$. The unification yields the constraints $a \iseq b$, hence
  the following \Aclause is derived: $\ccl{b \iseq d}{a \iseq b}$.
  The Assertion rule cannot be applied on $b \iseq d$, since this
  literal is not \eflat{$\hypcst$}. Instead,the application of the
  $\hypcst$-Superposition rule on the term $d$ (note that we must have
  $d \succ b,c$ since $d \not \in \hypcst$ and $b,c \in
  \hypcst$) yields: $\ccl{b \iseq c}{a \iseq b}$. Finally, the
  Assertion rule can be applied on $b \iseq c$ since this literal is
  \eflat{$\hypcst$}, thus generating $\ccl{\Box}{b \not \iseq c \wedge a
    \iseq b}$. This \Aclause is equivalent to the clause $b \iseq c
  \vee a \not \iseq b$, and we have $S \models b \iseq c \vee a \not
  \iseq b$.
\end{newex}

The second example involves predicate symbols.

\begin{newex}
We consider two  functions $f$ and $g$ such that $f$ and $y \mapsto g(x,y)$ are increasing, together with abducible constants $a$, $b$, $i$ and $j$. The aim is to determine under which conditions the property $f(g(a,i)) \leq f(g(b,j))$ holds. The problem is formalized as follows (where $t \leq s$ stands for $(t \leq s) \iseq \true$ and $< \in\preds$, $x,y,u \in \vars$):
$S = \{ x \not \leq y \vee f(x) \leq f(y), x \not \leq y \vee g(u,x) \leq g(u,y), f(g(a,i)) \not \leq f(g(b,j)) \}$. For conciseness, the axioms corresponding to $\leq$ (e.g., transitivity) are omitted since they play no rôle in our context.

The Superposition rule applies on the first and last clauses, yielding $g(a,i) \not \leq g(b,j)$. Then the rule applies again from the latter clause into the second one, and it generates:
$\ccl{i \not \leq j}{\{ a \iseq b \}}$. Finally the $\hypcst$-Assertion rule yields the \Aclause:
$\ccl{\Box}{\{ i \leq j, a \iseq b \}}$, meaning that the desired property is fulfilled if
$i \leq j$ and $a \iseq b$ hold.
\end{newex}

\MnachowithAnswer{j'intervertirais cet exemple avec le précédent.}{on pourra en discuter, mais personnellement cela me paraît + logique comme ça (d'abord des exemples montrant le fonctionnement, puis des exemples rentrant dans le détail du calcul et montrant la nécessité des nouvelles règles)}

The $\hypcst$-Assertion rule is necessary to guarantee deductive completeness, as shown in the example below.

\begin{newex}
  Consider the (satisfiable) clause set: $S \isdef \{ y \iseq x \vee
  f(x,x,y) \iseq a, f(a,b,c) \not \iseq a\}$, where $\hypcst \isdef \{
  a,b,c \}$ and $x,y$ are variables. It is simple to verify that
  $S\models a\niseq b\vee c\iseq a$, and the calculus is designed to
  generate from $S$ a clause of the form $\ccl{\Box}{\X}$, where $\neg
  \X \equiv a\niseq b\vee c\iseq a$. In order to generate such a
  clause, it is clear that one has to unify $f(x,x,y)$ and $f(a,b,c)$, since
   the unification of $f(a,b,c)$ and $a$ leads to an
  immediate failure, so that the Reflection rule is not
  applicable. This is feasible only if the condition $a \iseq
  b$ is added to the constraints of the obtained clause, yielding a constrained
  clause of the form: $\ccl{c \iseq a}{a \iseq b}$. The literal $c\iseq a$ is deleted using the $\hypcst$-Assertion rule, by appending the disequation $c\niseq a$ to the constraints, thus obtaining the required \Aclause: $\{ \ccl{\Box}{a \iseq b,\, c \not \iseq a} \}$. 
\end{newex}

The last example shows that the $\hypcst$-Substitutivity rule is also needed for completeness.

\begin{newex}
\label{ex:subst}
Consider the clause set: $S \isdef \{ a \iseq b \}$.
It is clear that $S \models p(a) \iseq \true \vee p(b) \niseq \true$ for any predicate symbol $p$ of arity $1$, but
$\ccl{\Box}{\{ p(a) \niseq \true, p(b) \iseq \true \}}$ cannot be generated
without
the help of the $\hypcst$-Substitutivity rule.
The above implicate is indeed obtained as follows: The $\hypcst$-Substitutivity rule generates the \Aclause $\ccl{p(x) \iseq \true}{p(x) \iseq \true}$, then the $\hypcst$-Superposition rule applies from $a \iseq b$, yielding
$\ccl{p(a) \iseq \true}{p(b) \iseq \true}$, and the desired result is obtained by applying the $\hypcst$-Assertion rule.
Note that the equation $p(a) \iseq p(b)$ does not need to be inferred in our context since predicate symbols are allowed only in atoms of the form $t \iseq \true$.
Considering implicates built on arbitrary function symbols (with nested applications) would lead to divergence since, e.g., an infinite number of clauses of the form $f^n(a) \iseq f^n(b)$ (with $n \in \N$) could be derived from the above clause.
\end{newex}

\nikonew{ajout:}

\begin{newrem}
\label{rem:red}
The previous example also shows the importance of the restriction on the redundancy criterion.
Indeed, if the criterion is relaxed by removing the condition ``$C\theta$ is not \eflat{$\hypcst$} and \quasipositive'' in the second item of Definition \ref{def:red_crit}, then the \Aclause $\ccl{p(a) \iseq \true}{p(b) \iseq \true}$ is redundant in $S$ (since $a \iseq b \prec p(a) \iseq \true$ and
$a \iseq b, p(b) \iseq \true \models p(a) \iseq \true$).
Consequently no non redundant inferences apply on $S$ and the implicate
$p(a) \iseq \true \vee p(b) \niseq \true$ cannot be generated.
\end{newrem}

\section{Deductive Completeness}

\label{sect:comp}

We show in this section that
\SP is deduction-complete for the clauses in $\flatcl{\hypcst}$. More precisely, we  prove that for any \SP-saturated set $S$ and clause $C\in \flatcl{\hypcst}$, if $S\models C$ then $S$ contains an \Aclause of the form $\ccl{\Box}{\Y}$ where $C^{\compl} \models \Y$.
The result is obtained in the following way. Given such a set $S$ and clause $C$, we consider the smallest \Aset $\X$ that contains $C^{\compl}$, and construct a set of standard ground clauses $\projection{S}{\X}$ such that:
\begin{itemize}
\item{$\projection{S}{\X}$ contains all ground instances of clauses in $S$, as well as a set of unit clauses equivalent to $\X \equiv C^{\compl}$.}
\item{$\projection{S}{\X}$ is saturated under a slightly adapted version  of the Superposition calculus which is refutationally complete.}
\end{itemize}
Since $S \cup \set{C^{\compl}}$ is unsatisfiable and the considered calculus is refutationally complete, these two
properties together will entail that $\projection{S}{\X}$ contains the
empty clause.  Finally,
we show that this is possible
only if $S$ contains an \Aclause of the required form.

First, we formally define the notions of $\hypcst$-implicates and prime $\hypcst$-implicates.

\newcommand{\imp}[1]{I_{\hypcst}(#1)}
\newcommand{\abdcl}[1]{\mathcal{C}_{\hypcst}(#1)}

\begin{newdef}
Let $S$ be a set of {\Aclause}s.
A clause $C$ is an {\em $\hypcst$-implicate} of $S$ if it satisfies the following conditions.
 \begin{itemize}
 \item{$C$ is $\A$-flat and ground. 
     }
 \item{$C$ is not a tautology.}
\item{$S \models C$.}
\end{itemize}
 $C$ is a {\em prime $\hypcst$-implicate} of $S$ if, moreover, $C \models D$ holds for every $\hypcst$-implicate $D$ of $S$ such that $D \models C$.
 We denote by $\imp{S}$  the set of $\hypcst$-implicates of $S$.
\end{newdef}

\begin{newdef}
We denote by $\abdcl{S}$ the set of clauses  of the form $(\X\sigma)^{\compl}$, where
$\ccl{\Box}{\X} \in S$ and $\sigma$ maps each variable $x$ in $\X$ to some constant symbol $a \in \hypcst$ in such a way that $\X\sigma$ is satisfiable\footnote{In other words, $\sigma$ is such that for every $u \not \iseq v \in \X$, $u\sigma\not = v\sigma$.}.
We write $S \dominates S'$ if for every clause $C' \in S'$, there exists
$C \in S$ such that $C \models C'$.
\end{newdef}

Our goal is to prove that $\abdcl{S} \dominates \imp{S}$ when $S$ is \SP-saturated, i.e., that
every prime implicate of $S$ occurs in $\abdcl{S}$ (up to equivalence).

\newcommand{\selB}{\sel_{\Phi}}

\subsection{Definition of $\projection{S}{\X}$}

\newcommand{\ff}{\alpha}

\newcommand{\superp}[2]{\text{\it sup}(#1,#2)}

Let $\ff$ and $\pp$ be two arbitrarily chosen function symbols  not occurring in $S$,
where $\arity(\ff) = 1$ and $\arity(\pp) = 0$. We assume that $\forall a \in \hypcst, \pp \succ \ff(a)$ and that $\forall g \not \in \hypcst \cup \{ \ff \}, g(\vec{t}) \succ \pp$.\Mnacho{peut-être pas important, mais quelle comparaison avec prédicats?} \nikonew{je n'ai pas compris ta question. Il faudra qu'on en discute. Si un prédicat $p$ n'est pas dans $\hypcst$ -- en particulier si l'arité de $p$ est non nulle -- alors $\forall a \in \hypcst\, p(\dots) \succ a$.}

\nikonew{Note: ajout de $\cup \{ \ff \}$ dans les contraintes sur $g$ ci-dessus, sinon évidemment les contraintes sont inconsistantes}

For every clause $C$ and clause set $S$,
$\superp{C}{S}$ denotes the set inductively defined as follows.
\begin{itemize}
\item{$C \in \superp{C}{S}$.}
\item{If $D \in \superp{C}{S}$ and $D'$ is obtained by applying the standard Superposition rule into $D$ from a positive and \elementary clause in $S$, then $D' \in \superp{C}{S}$.}
\end{itemize}
A clause set $S$ is {\em non-redundant} iff
for every clause $C \in S$, $C$ is not redundant in $S \setminus \{ C \}$.
For every clause set $S$, it is easy to obtain a non-redundant subset of $S$ that is equivalent to $S$ by recursively removing from $S$ every clause $C$ that is redundant in $S \setminus \{ C \}$.

We define the set of standard ground clauses $\projection{S}{\X}$ as
well as a selection function $\selB$ as follows.

\begin{newdef}
\label{projext:def}
Let $S$ be a set of {\Aclause}s and let $\X$ be an \Aset.
We denote by $\projection{S}{\X}$ the set
$$\projection{S}{\X} \isdef \projectiontype{S}{\X}{\ref{projext:set}} \uplus
\projectiontype{S}{\X}{\ref{projext:eq}} \uplus
\projectiontype{S}{\X}{\ref{projext:noeq}} \uplus
\projectiontype{S}{\X}{\ref{projext:pred}} \uplus
\projectiontype{S}{\X}{\ref{projext:pp}}$$ where for $i = 1,\ldots,
5$, $\projectiontype{S}{\X}{i}$ is defined as follows:
\begin{enumerate}
\item{$\projectiontype{S}{\X}{\ref{projext:set}}$ is the set of
    clauses of the form $\red{D\sigma}{\X} \vee C'$, where
    $\ccl{D}{\Y}\in S$, $\sigma$ is a ground substitution of domain
    $\var(D)$ such that $\Y\sigma \subseteq \X$ and $\red{x\sigma}{\X} = x\sigma$ for all  $x \in \var(D)$,  and $C'$ is defined
    as follows:
    \begin{itemize}
    \item{$C' \isdef \Box$ if $D\sigma$ is {\eflat{$\hypcst$}} and \quasipositive; }
    \item{$C' \isdef (\pp \niseq \true)$ otherwise.}
    \end{itemize}
    The selection function $\selB$ is defined on
    $\projectiontype{S}{\X}{\ref{projext:set}}$ as follows:
    $\selB(\red{D\sigma}{\X} \vee C')$ contains all literals
    $\red{l}{\X}$ such that $l \in \sel(D\sigma)$ and one of the
    following holds:
    \begin{itemize}
    \item $l$ is negative,
    \item $\sel(D\sigma)$ is positive and $\red{l}{\X}$ is
      $\succ$-maximal in $\red{D\sigma}{\X} \vee C'$.
    \end{itemize}
   \label{projext:set}}

\item{$\projectiontype{S}{\X}{\ref{projext:eq}}$ is the set of unit clauses of the form $c \iseq \red{c}{\X}$, where $c \in \hypcst$ and $c \not = \red{c}{\X}$. The selection function is defined on $\projectiontype{S}{\X}{\ref{projext:eq}}$ by: $\selB(c \iseq \red{c}{\X}) \isdef \set{c \iseq \red{c}{\X}}$.\label{projext:eq}}

\item{$\projectiontype{S}{\X}{\ref{projext:noeq}}$ is the set of non-redundant clauses in 
    \[\bigcup_{a\niseq b \in \X}\superp{\ff(\red{a}{\X}) \not \iseq
      \ff(\red{b}{\X})}{\projectiontype{S}{\X}{\ref{projext:set}}},\]
    and for all $C \in
    \projectiontype{S}{\X}{\ref{projext:noeq}}$, $\selB(C)$ contains all negative literals in $C$.
\label{projext:noeq}}

\item{$\projectiontype{S}{\X}{\ref{projext:pred}}$ is the set of non-redundant clauses in \MnachowithAnswer{remplacé $f$ par $p$}{oui} \MnachowithAnswer{Pourquoi est-ce qu'on ne réduit pas les constantes $a_i$?}{oubli de ma part} 
    \[\bigcup_{p(a_1,\dots,a_n) \bowtie \true \in \X}\superp{p(\red{a_1}{\X},\dots,\red{a_n}{\X}) \bowtie \true}{\projectiontype{S}{\X}{\ref{projext:set}}},\]
    and for all $C \in
    \projectiontype{S}{\X}{\ref{projext:noeq}}$, $\selB(C)$ contains all literals of the form $t \bowtie \true$ in $C$. Note that the symbol $\bowtie$ occurring in the generated clause is the same as the one in the corresponding literal $p(a_1,\dots,a_n) \bowtie \trueform$ of $\X$. \MnachowithAnswer{il faut voir si ça ne prête pas à confusion, la notation $\bowtie$ ici: on peut s'imaginer que si on a $p\niseq \true$ dans $\X$, alors on engendre des clauses en se servant de $p\iseq \true$ dans le sup.}{J'ai rajouté la phrase précédente, dis-moi si cela te convient. Autres options: donner les deux définitions avec $\iseq$ ou $\niseq$, ou écrire quelque chose comme:
    $\bigcup_{l \in \X, l = p(a_1,\dots,a_n) \bowtie \true}\superp{\red{l}{\X}}{\projectiontype{S}{\X}{\ref{projext:set}}}$}
\label{projext:pred}}

\item{$\projectiontype{S}{\X}{\ref{projext:pp}} = \{ \pp \iseq \true\} \cup \{ \ff(u) \niseq \ff(v) \vee u \iseq v \mid u,v \in \hypcst, u = \red{u}{\X}, v = \red{v}{\X}, u \not = v \}$. We let
$\selB(\pp \iseq \true) \isdef \set{\pp \iseq \true}$, and
$\selB(\ff(u) \niseq \ff(v) \vee u \iseq v) \isdef \set{\ff(u) \niseq \ff(v)}$. \label{projext:pp}}
\end{enumerate}
It is easy to verify that the sets $\projectiontype{S}{\X}{i}$ with $i=1,\ldots,5$ are disjoint.
The \emph{type} of a clause $C \in \projection{S}{\X}$ is the number $i$ such that
$C \in  \projectiontype{S}{\X}{i}$.
\end{newdef}

\begin{newex}\label{ex:projection}
  Let $\A = \set{a,b,c,d,e}$, and $\X$ be the \MnachowithAnswer{reflexive-}{oui} reflexive-transitive closure of $\{ a \iseq
  b, c \iseq d, b \not \iseq e \}$, where $a \succ b \succ c \succ d
  \succ e$.  Consider the set of clauses
  \[S\ =\ \{ f(a) \iseq c \vee a \niseq b,\  b\niseq c,\ c\iseq d,\  \ccl{g(x,y) \iseq f(d)}{y \not
    \iseq e},\  \ccl{f(x) \iseq x}{a \iseq c} \}.\]
  Then
  $\projection{S}{\X}$ is decomposed as follows:
\begin{description}
\item[$\projectiontype{S}{\X}{\ref{projext:set}}$:] {This set consists
  of $f(b) \iseq d \vee a \niseq b \vee \pp \niseq \true$, $b \niseq d \vee \pp \niseq
  \true$, $d\iseq d$ and $g(t,b) \iseq f(d) \vee \pp \niseq \true$,
  where $t$ ranges over the set of all ground terms. The constants $a$
  and $c$ occurring in $S$ are respectively replaced by $b =
  \red{a}{\X}$ and $d = \red{c}{\X}$ in $\projection{S}{\X}$. The
  \Aclause $\ccl{f(x) \iseq x}{a \iseq c}$ generates no clauses in
  $\projection{S}{\X}$, since $(a \iseq c) \not \in \X$.}
\item[$\projectiontype{S}{\X}{\ref{projext:eq}}$:]{$\set{a \iseq b, c \iseq d}$. }
\item[$\projectiontype{S}{\X}{\ref{projext:noeq}}$:]{$\set{\ff(b) \not \iseq \ff(e), \ff(d) \not \iseq \ff(e)}$. The first clause is constructed from $(b \not \iseq e) \in \X$, the second one is generated by Superposition into $\ff(b) \not \iseq \ff(e)$ from the clause $b \iseq d$ above.}
\item[$\projectiontype{S}{\X}{\ref{projext:pred}}$:]{$\emptyset$. There is no predicate symbols other than $\iseq$.}
\item[$\projectiontype{S}{\X}{\ref{projext:pp}}$:] This set consists of the following clauses: \[\set{\pp \iseq \true,\ \ff(b) \niseq \ff(d) \vee b \iseq d,\ \ff(b) \niseq \ff(e) \vee b \iseq e,\ \ff(d) \niseq \ff(e) \vee d \iseq e}.\]
\end{description}
\end{newex}

\begin{newrem}
  The addition of $\ff$ is irrelevant from a semantic point of view,
  since \MnachowithAnswer{petite modif}{oui (j'ai remplacé une occurrence de $a$ par $b$)} by construction, $\ff(a) \niseq \ff(b)$ if and only if $a\niseq b$ for all $a,b \in \hypcst$; it is
  possible to replace all atoms of the form $\ff(x) \niseq \ff(y)$ by $x \niseq
  y$.  However, this technical trick ensures that all the clauses of
  type \ref{projext:noeq} are strictly greater than all \elementary \MnachowithAnswer{elementary?}{oui}
  {\eflat{$\hypcst$}} clauses in $\projection{S}{\X}$, which plays a
  crucial r\^ole in the proof of Lemma \ref{lem:saturated}.
  Similarly, the addition of the literal $\pp \not \iseq \true$ does
  not affect the semantics of the clause set (since by definition $\pp
  \iseq \true$ occurs in this set), but ensures that all
  clauses of type \ref{projext:set} that are not \quasipositive are strictly greater than all
  clauses of type \ref{projext:eq} or \ref{projext:noeq}.
\end{newrem}

\newcommand{\fixed}{superposable\xspace}

\begin{newprop}
  For all sets of clauses $S$ and {\Aset}s $\X$, $\selB$ is a
  selection function for the ordering $\succ$.
\end{newprop}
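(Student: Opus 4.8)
The plan is to verify the defining property of a selection function---that $\selB(C)$ either contains a negative literal or contains every $\succ$-maximal literal of $C$---separately for each of the five types of clauses making up $\projection{S}{\X}$. The clauses of type \ref{projext:eq} and the unit $\pp \iseq \true$ of type \ref{projext:pp} are units, so their single literal is trivially $\succ$-maximal and is exactly what $\selB$ picks. The family of type \ref{projext:noeq} and the binary clauses $\ff(u) \niseq \ff(v) \vee u \iseq v$ of type \ref{projext:pp} are handled by the ``negative literal'' disjunct: I would check that every clause in $\superp{\ff(\red{a}{\X}) \niseq \ff(\red{b}{\X})}{\projectiontype{S}{\X}{\ref{projext:set}}}$ still carries a negative literal, which holds because the base clause is the negative unit $\ff(\red{a}{\X}) \niseq \ff(\red{b}{\X})$, superposition into a literal preserves its sign, and the clauses superposed from are positive; hence selecting all negatives yields a non-empty set containing a negative literal, as required.

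For type \ref{projext:pred}, $\selB$ selects the unique literal of the form $t \bowtie \true$, namely the descendant $p(\vec{s}) \bowtie \true$ of the generating literal (superposing from positive \elementary clauses adds only constant equations, never a new $\true$-literal). If this literal is negative we are done; if it is positive, I would argue it is $\succ$-maximal using the ordering assumptions: its predicate-headed term $p(\vec{s})$ dominates every abducible constant occurring in the remaining equations, so $p(\vec{s}) \iseq \true$ is the $\succ$-greatest literal of its clause.

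The substantive case is type \ref{projext:set}, where $C = \red{D\sigma}{\X} \vee C'$ and the definition of $\selB$ branches on $\sel(D\sigma)$. If $\sel(D\sigma)$ contains a negative literal $l$, then $\red{l}{\X}$ is again negative (reduction only renames constants) and is selected, so the ``negative literal'' disjunct holds. If $\sel(D\sigma)$ is positive, I must show $\selB(C)$ comprises \emph{all} $\succ$-maximal literals of $C$. The key observation is that if $\red{m}{\X}$ is $\succ$-maximal among the literals of $\red{D\sigma}{\X}$, then $m$ is $\alwayssgreater{\X}$-maximal in $D\sigma$: otherwise some $l \in D\sigma$ satisfies $l \alwayssgreater{\X} m$, which by definition gives $\red{l}{\X} \succ \red{m}{\X}$ and contradicts maximality. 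Since $\sel(D\sigma)$ is positive it contains all $\alwayssgreater{\X}$-maximal literals, so $m \in \sel(D\sigma)$ and $\red{m}{\X}$ is indeed selected.

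It remains to rule out the extra literal $C'$: I must show that whenever $C' = \pp \niseq \true$ is present it is never $\succ$-maximal, so that it legitimately escapes selection; since its presence means $D\sigma$ is not both \eflat{$\hypcst$} and \quasipositive, this is exactly where the careful placement of $\pp$ in the ordering is used, and I expect it to be the main obstacle. If $D\sigma$ is not \eflat{$\hypcst$}, then $\red{D\sigma}{\X}$ contains a term headed by some $g \notin \hypcst \cup \{ \ff, \pp \}$, whence $g(\vec{t}) \succ \pp$ and the literal carrying it dominates $\pp \niseq \true$. If $D\sigma$ is \eflat{$\hypcst$} but not \quasipositive, it has a negative constant literal $a \niseq b$; positivity of $\sel(D\sigma)$ forces this literal to be non-$\alwayssgreater{\X}$-maximal, so some literal of $D\sigma$ is $\alwayssgreater{\X}$-above it, and since two \eflat{$\hypcst$} constant atoms are $\alwayssgreater{\X}$-incomparable (renaming constants can reverse their $\succ$-order), that literal must be a predicate literal $p(\vec{s}) \bowtie \true$ with $p \notin \hypcst$, whose reduction satisfies $\red{p(\vec{s})}{\X} \succ \pp$ and hence exceeds $\pp \niseq \true$. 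In either case $C'$ is not maximal, so the $\succ$-maximal literals of $C$ all lie in $\red{D\sigma}{\X}$ and are selected, completing the verification.
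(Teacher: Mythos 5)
Your proof is correct and follows essentially the same route as the paper's: a case split over the five clause types, using the negative-literal disjunct for type 3 and the binary clauses of type 5, maximality of the $p(\vec{a}) \bowtie \true$ literal for type 4, and, for type 1, the key observation that $\succ$-maximality of $\red{m}{\X}$ in the reduced clause forces $\alwayssgreater{}$-maximality of $m$ in $D\sigma$, together with the placement of $\pp$ in the ordering to rule the literal $\pp \niseq \true$ out of the maximal literals. You are in fact slightly more explicit than the paper in the subcase where $D\sigma$ is \eflat{$\hypcst$} but not \quasipositive (the paper dispatches it with a bare appeal to Assumption \ref{hyp:B}, whereas you argue directly via the $\alwayssgreater{}$-incomparability of \eflat{$\hypcst$} constant literals and the fact that the dominating predicate literal then majorizes $\pp$), so nothing needs to be fixed.
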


\begin{proof}
We must check that for every clause $C \in \projection{S}{\X}$, $\selB(C)$ contains either a negative literal in $C$ or all $\succ$-maximal
literals in $C$ (see Definition \ref{projext:def} for the notations).
This is immediate for  clauses of type \ref{projext:eq} and \ref{projext:pp}, since $\selB(C)=C$.
For  clauses of type \ref{projext:noeq}, we observe that $C$ necessarily contains a negative  literal, obtained from the literal 
$\ff(\red{a}{\X})\not \iseq \ff(\red{b}{\X})$ by Superposition. 
Similarly, all {\Aclause}s of
 type   \ref{projext:pred} contains a (unique) literal of the form \MnachowithAnswer{remplacé $p$ par $f$. Je le signale au cas où il y aurait une raison pour laquelle le symbole $f$ est utilisé}{ok, pas de pb} $p(a_1,\dots,a_n) \bowtie \true$, that is necessarily maximal.
Now assume that $C$ is a clause of type \ref{projext:set}, i.e., that $C = \red{D\sigma}{\X}\vee D''$ for some $\ccl{D}{\Y}$ in $S$ and $D'' \subseteq \set{\pp \niseq \true}$. If we suppose that $\selB(C)$ contains no
negative literal, then the same must hold for $\sel(D\sigma)$, thus $\sel(D\sigma)$ necessarily
    contains all $\alwaysgreater{}$-maximal literals in $D\sigma$, and by Assumption \ref{hyp:B}, \nikonew{ajout} if $D\sigma$ is \eflat{$\hypcst$} then it must be \quasipositive, and in this case $D'' = \Box$. Furthermore, by definition of $\alwaysgreater{}$,
    for all $m \in D\sigma$, if $\red{m}{\X}$ is $\succ$-maximal
    in $\red{D\sigma}{\X}$, then $m$ is $\alwaysgreater{}$-maximal in $D\sigma$, which entails that $\selB(C)$ contains all $\succ$-maximal
    literals in $C$ (note that if $D'' \not = \Box$ then $D$ is not \eflat{$\hypcst$}, hence $D\sigma \succ D''$).
\end{proof}

\MnachowithAnswer{ajout pour factorisation}{oui, ajout compléments}

\begin{newprop}\label{prop:equivps}
  Let $S_{init}$ be a set of standard clauses and let $S$ be a set of clauses generated from $S_{init}$ by \SP. Then $\projection{S}{\X} \models S_{init} \equiv S$.
\end{newprop}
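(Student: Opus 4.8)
I read the statement as a \emph{relative} equivalence modulo $\projection{S}{\X}$: for every interpretation $I$ with $I \models \projection{S}{\X}$, one has $I \models S_{init}$ if and only if $I \models S$, the {\Aclause}s of $S$ being interpreted through the implication semantics introduced for soundness. The plan is to prove the two implications separately, and in particular never to claim that $\projection{S}{\X}$ entails $S_{init}$ or $S$ in isolation.

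The implication $I \models S_{init} \Rightarrow I \models S$ follows from soundness alone and uses nothing about $\projection{S}{\X}$: since $S$ is generated from $S_{init}$ by sound rules, an easy induction on the derivation gives $S_{init} \models S$ (as {\Aclause}s, i.e.\ over $\X$-pure ground substitutions), so every model of $S_{init}$ validates $S$. Observe that if the derivation uses no redundancy-based deletion, then $S$ contains $S_{init}$ (reading a standard clause $C$ as $\ccl{C}{\emptyset}$) and the reverse implication $I \models S \Rightarrow I \models S_{init}$ is immediate by containment; thus $\projection{S}{\X}$ plays a role only when deletions occur, which is the situation I treat next.

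For $I \models S \Rightarrow I \models S_{init}$ in the presence of deletions, I would induct on the derivation, maintaining the invariant that $I \models \projection{S}{\X}$ and $I \models S$ together imply $I \models S_{init}$. The engine of the induction is the monotone-entailment lemma $\projection{S}{\X} \models \projection{S'}{\X}$ for one derivation step from $S'$ to $S$: the type-\ref{projext:eq} and type-\ref{projext:pp} clauses depend only on $\X$ and are unchanged; the type-\ref{projext:set} clauses of $S'$ are either already present in $\projection{S}{\X}$ or, when $S$ was obtained from $S'$ by deleting a {\redundant} clause $R = \ccl{C}{\calZ}$, are reduced ground instances of $R$, which $\projection{S}{\X}$ entails because Definition \ref{def:red_crit} exhibits each such instance as a consequence of ground instances of $S$-clauses (all present in $\projectiontype{S}{\X}{\ref{projext:set}}$) together with the ground encoding of $\X$; finally the type-\ref{projext:noeq} and type-\ref{projext:pred} clauses of $S'$ are sound standard-Superposition consequences of the base $\hypcst$-literals of $\X$ and of type-\ref{projext:set} clauses, hence are entailed as well. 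Granting the lemma, an inference step $S = S' \cup \Delta$ gives $I \models S \supseteq S'$ directly; a deletion step $S = S' \setminus \set{R}$ first recovers $I \models R$ --- again from the redundancy of $R$, using the constant-reduction clauses $c \iseq \red{c}{\X}$ of type \ref{projext:eq} to pass between a term and its $\X$-reduct --- so that $I \models S'$. In either case $I \models \projection{S'}{\X}$ and $I \models S'$, and the induction hypothesis yields $I \models S_{init}$.

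The main obstacle is the deletion step, and underneath it the clash between two notions of ground instance: the {\Aclause} semantics quantifies only over $\X$-pure ground substitutions, whereas the redundancy criterion of Definition \ref{def:red_crit} appeals to \emph{arbitrary} ground substitutions. Reconciling them is exactly the purpose of $\projection{S}{\X}$, which materialises all ground instances of $S$-clauses compatible with $\X$ (type \ref{projext:set}) alongside a ground presentation of $\X$ (types \ref{projext:eq}, \ref{projext:noeq}, \ref{projext:pp}); the delicate point is to align the clause-local constraint $\calZ$ of the deleted clause $R$ with the global constraint $\X$ driving the projection, so that the redundancy witnesses indeed live in $\projectiontype{S}{\X}{\ref{projext:set}}$. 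A secondary complication is that $\projectiontype{S}{\X}{\ref{projext:noeq}}$ and $\projectiontype{S}{\X}{\ref{projext:pred}}$ are defined by a $\superp{\cdot}{\cdot}$-closure together with a non-redundancy filter that both move as $S$ changes, so the comparison with $\projection{S'}{\X}$ must be made in terms of entailment rather than inclusion, relying on soundness of standard Superposition and on the fact that a clause discarded by the non-redundancy filter is still entailed by the smaller clauses that render it redundant.
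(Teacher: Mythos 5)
You have misread the statement, and the gap is not cosmetic. The proposition --- as the paper proves it, and decisively as it is used in the proof of Theorem \ref{theo:comp} --- asserts \emph{absolute} entailment: writing $S_{cl}$ for the set of standard clauses in $S$, the paper shows $\projection{S}{\X} \models S_{cl}$ and $S_{cl} \equiv S_{init} \equiv S$, hence $\projection{S}{\X} \models S_{init}$ and $\projection{S}{\X} \models S$ outright. In Theorem \ref{theo:comp} the step ``$S' \models S''$'' with $S \subseteq S''$ is drawn directly from this; your relative reading (``for every $I \models \projection{S}{\X}$, $I \models S_{init}$ iff $I \models S$''), which you explicitly decline to strengthen, cannot support it, since it leaves open that a model of $\projection{S}{\X}$ satisfies neither $S_{init}$ nor $S$. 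Worse, the part you avoid is the easy part: by construction $\projection{S}{\X}$ contains, for every ground instance of every clause of $S_{cl}$, its $\X$-reduced version (type \ref{projext:set}, possibly with $\pp \niseq \true$ appended), together with the bridging units $c \iseq \red{c}{\X}$ (type \ref{projext:eq}) and $\pp \iseq \true$ (type \ref{projext:pp}); these jointly entail the unreduced instance, so $\projection{S}{\X} \models S_{cl}$ in isolation, with no induction on the derivation at all.

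The genuinely delicate content is then the \emph{unconditional} equivalence $S_{init} \equiv S$: soundness gives $S_{init} \models S$, and the converse is secured by one observation your argument circles around but never isolates --- a standard clause that is \redundant in a set of {\Aclause}s is redundant with respect to the \emph{standard} clauses of that set, because in Definition \ref{def:red_crit} with trivial constraint the conditions $\Y\sigma \subseteq \X\theta$ and $\Y_i\sigma_i \subseteq \X\theta$ force the witnesses to be essentially standard. This is precisely what excludes pathologies such as $S_{init} = \set{p}$ degenerating so that the standard part no longer entails $S_{init}$, and it yields $S_{cl} \equiv S_{init}$ in two lines. Your replacement machinery --- the induction along the derivation with the lemma $\projection{S}{\X} \models \projection{S'}{\X}$ --- is both heavier and misaimed: in your deletion step, the instances of the deleted \Aclause $\ccl{C}{\calZ}$ with $\calZ\sigma \not\subseteq \X$ are materialised nowhere in the projection, so to recover $I \models \ccl{C}{\calZ}$ you must in any case fall back on the \Aclause semantics of the witnesses in $S$, at which point the projection does no work; this contradicts your claim that reconciling the two notions of ground instance ``is exactly the purpose of $\projection{S}{\X}$''. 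Its actual purpose in this proposition is the one your plan excludes: to entail the original clauses outright via the reduction equations.
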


\begin{proof}
  Let $S' = \projection{S}{\X}$ and consider the set of standard clauses $S_{cl}$ occurring in $S$, i.e., $S_{cl} \isdef \setof{C}{\ccl{C}{\emptyset} \in S}$. \nikonew{``Since \SP is sound, $S_{init} \equiv S \equiv S_{cl}$.'' Je pense qu'il faut en dire un peu plus ici car la soundness ne suffit pas: on pourrait penser que les clauses de $S_{init}$ peuvent être  supprimées car redondantes par rapport à des clauses non standard de $S$, dans ce cas on aurait pas équivalence entre $S_{cl}$ et $S_{init}$. Ce serait le cas par exemple si on pouvait transférer des littéraux de la partie clausale vers les contraintes. Par exemple si $S_{init} = \{ p \}$, on pourrait avoir $S = \{ \ccl{\Box}{\neg p} \}$ et $S_{cl} = \emptyset$. J'ai remis ce que j'avais mis dans la preuve précédente.}
   Since \SP is sound, $S_{init} \models S$. Furthermore, if a standard clause is \redundant in a set of {\Aclause}s, then it is also redundant w.r.t.\ the standard clauses in this set, by definition of the redundancy criterion. Thus $S_{cl} \equiv S \equiv S_{init}$.

  By construction, $S'$ contains all the clauses that can be obtained from ground instances of clauses in $S_{cl}$, by replacing every constant $a$ by $\red{a}{\X}$ and possibly adding literals of the form $\pp\niseq\true$. Since $S'$ contains all atoms of the form $a\iseq \red{a}{\X}$ where $a\neq \red{a}{\X}$ as well as the atom $\pp\iseq \true$, we deduce that $S'\models S_{cl}$, and that $S'\models S_{init} \equiv S$.
\end{proof}

\subsection{Saturatedness of $\projection{S}{\X}$}
The next lemma states that $\projection{S}{\X}$ is saturated w.r.t.\ a slight restriction of
the usual Superposition calculus.
We shall also use a refined version of the redundancy criterion.
\begin{newdef}
A set of ground clauses $S$ is {\em weakly saturated}
\wrt an inference rule in \Sup{\prec}{\selB} if every application of the rule on a set of premises $\{ C_1,\ldots,C_n \} \subseteq S$ (with $n =1,2$)
yields a clause $C$ such that there exists $\{ D_1,\ldots,D_m \} \subseteq S$ with $\forall i \in [1,m]$, $D_i \prec \max_{\prec}(\{ C_1,\ldots,C_n \})$ and $\{ D_1,\ldots,D_m \} \models C$.
\end{newdef}
\begin{newlem}
\label{lem:usualcomp}
Let $S$ be a set of ground clauses that is weakly saturated \wrt all rules in  \Sup{\prec}{\selB}. The set
 $S$ is satisfiable iff it does not contain $\Box$.
 \end{newlem}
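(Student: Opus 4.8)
The backward direction is immediate: $\Box$ has no model, so any satisfiable set must avoid it. The real content is the forward direction, and the plan is to adapt the standard candidate-model construction of Bachmair and Ganzinger, the point being that weak saturation is exactly the hypothesis needed to run that argument with $\selB$ as selection function. Since all clauses are ground, no lifting is required, so the whole argument takes place at the propositional/equational level.

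First I would build an interpretation by well-founded induction over $\prec$ on ground clauses. Processing the clauses of $S$ in increasing order, I associate with each $C$ the rewrite system $R_C = \bigcup\{\epsilon_D \mid D \in S,\ D \prec C\}$, where each $\epsilon_D$ is either empty or a single rule. A clause $C = C' \vee s \iseq t$ \emph{produces} the rule $s \to t$ (\ie $\epsilon_C = \{s \to t\}$) exactly when $s \succ t$, the literal $s \iseq t$ is a strictly $\succ$-maximal literal of $\selB(C)$ with no negative literal selected, $s$ is irreducible by $R_C$, $C$ is false in $R_C$, and $C'$ remains false after adding $s \to t$; otherwise $\epsilon_C = \emptyset$. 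Put $R = \bigcup_{D \in S}\epsilon_D$ and let $I$ be the congruence induced by $R$. Because every produced rule is oriented by $\succ$ and its left-hand side is irreducible by the rules produced by smaller clauses, $R$ has no overlaps, hence is a convergent rewrite system, and $I$ is a well-defined interpretation whose valid equalities are exactly the joinability pairs of $R$.

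Next comes the key induction: I claim every clause of $S$ is true in $I$. Suppose not, and let $C \in S$ be a $\prec$-minimal counterexample. Since $\Box \notin S$, $C$ is non-empty and $\selB(C)$ is non-empty, and I split on its eligible literal. If a negative literal $u \niseq v$ is selected, then $I \models u \iseq v$; either $u = v$, so Reflection applies to $C$, or $u \downarrow_R v$ through a rule produced by some $D \prec C$, so Superposition applies from $D$ into $u \niseq v$. If the eligible literal is a maximal positive $s \iseq t$ and $C$ did not produce a rule, then either $s$ is reducible by a rule produced by some $D \prec C$, enabling Superposition from $D$ into $C$, or the maximal literal fails to be strictly maximal, enabling Equational Factorization on $C$. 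In every case the selection and ordering side-conditions of \Sup{\prec}{\selB} are met, and one checks, as in the standard proof, that the conclusion $C''$ is again false in $I$, that every premise is $\preceq C$, and that the maximal premise equals $C$. Weak saturation then yields clauses $D_1,\dots,D_m \in S$, each $\prec C$, with $D_1,\dots,D_m \models C''$; since $C''$ is false in $I$, some $D_i$ is false in $I$ with $D_i \prec C$, contradicting the minimality of $C$. Hence $I \models S$ and $S$ is satisfiable.

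The main obstacle is the case analysis verifying this \emph{reduction property}: that each inference triggered by the minimal counterexample is genuinely admissible under $\selB$ and $\prec$ (the right literal eligible, the producing premise strictly smaller, the superposed term maximal and the replaced side oriented downward), and that its conclusion lies strictly below $C$ in the clause ordering while remaining false in $I$. This is the heart of the Bachmair--Ganzinger completeness proof; the only adaptation here is to read the usual ``saturated up to redundancy'' hypothesis through the weak-saturation definition, which is precisely what supplies the entailment $D_1,\dots,D_m \models C''$ by clauses below the maximal premise and thereby closes the induction.
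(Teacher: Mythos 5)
Your proof is correct and follows essentially the same route as the paper, which simply cites the standard Bachmair--Ganzinger completeness theorem (\citealp{BG94}; \citealp[Theorem 4.8]{DBLP:books/el/RV01/NieuwenhuisR01}): you reconstruct the candidate-model construction and correctly observe that weak saturation is exactly the usual ``every inference is redundant'' hypothesis (conclusion entailed by clauses in $S$ below the maximal premise), so the standard minimal-counterexample argument closes as you describe.
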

 \begin{proof}
See \citep{BG94} or
\citep[][theorem 4.8]{DBLP:books/el/RV01/NieuwenhuisR01}.\MnachowithAnswer{donner la référence exacte? Il me semble que c'est le théorème 4.8}{done mais je n'arrive pas trouver la version finale de BG94 \smiley }
 \end{proof}

Lemma \ref{lem:saturated} below is the main technical result that is used to prove the completeness of \SP.

 \begin{newlem}
\label{lem:saturated}
Let $S$ be an \SP-saturated set of {\Aclause}s and let $\X$ be a ground and satisfiable
\Aset.
The set $\projection{S}{\X}$ is weakly saturated under all inference rules in
\Sup{\prec}{\selB}, except for \Eqfact Factorization on positive {\eflat{$\hypcst$}} clauses.
\end{newlem}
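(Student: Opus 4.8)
The plan is to proceed by a case analysis on the rule of \Sup{\prec}{\selB} that is applied and on the \emph{types} (in the sense of Definition \ref{projext:def}) of its premises, checking in each case that the conclusion is entailed by clauses of $\projection{S}{\X}$ that are strictly $\prec$-smaller than the largest premise. Two principles drive the argument. First, every inference whose premises are all of type \ref{projext:set} lifts to an inference of \SP on the underlying {\Aclause}s of $S$: because the type-\ref{projext:set} clauses are reductions $\red{D\sigma}{\X}$ of ground instances of {\Aclause}s $\ccl{D}{\Y}\in S$, and because reduction by $\X$ identifies precisely the terms that are $\almosteq{\X}$-equal, any matching of reduced terms in the projection corresponds to a most general \Aunifier $(\sigma',\E)$ with $\E\subseteq\X$, yielding a genuine \SP-inference on $S$. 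Second, inferences involving the auxiliary clauses of types \ref{projext:eq}--\ref{projext:pp} are treated directly, using the closure properties built into Definition \ref{projext:def} and the ordering assumptions on $\ff$ and $\pp$ recalled in the remark preceding the lemma.

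For the main case, consider a Superposition, Reflection, or non-excluded \Eqfact Factorization among type-\ref{projext:set} clauses. I would first verify that the side-conditions transfer to the lifted \SP-inference: the definition of $\selB$ together with the stability of $\sel$ under {\Asubstitution}s ensures that the selected reduced literals originate from selected literals of $D\sigma$, while the passage from $\succ$ on $\projection{S}{\X}$ to $\alwaysgreater{}$ on $S$ is justified by the fact that $\red{m}{\X}$ being $\succ$-maximal forces $m$ to be $\alwaysgreater{}$-maximal. The lifted inference produces an {\Aclause} $\ccl{C^{\ast}}{\Y}$ with $\Y$ satisfiable and contained in $\X$. Since $S$ is \SP-saturated, $\ccl{C^{\ast}}{\Y}$ is {\redundant} in $S$, so Definition \ref{def:ared} furnishes either a subsuming {\Aclause} or a family of strictly smaller {\Aclause}s that entail $C^{\ast}$ and whose constraints hold in $\X$. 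Projecting these witnesses through $\projection{\cdot}{\X}$ (Proposition \ref{prop:instred} ensuring that redundancy is preserved under the required grounding instance) yields clauses of $\projection{S}{\X}$ that are strictly $\prec$-smaller than the largest premise and entail the projected conclusion, which is exactly weak saturation for this inference.

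The auxiliary cases are more mechanical. Clauses of type \ref{projext:eq} participate in no inference, since the non-representative constant in $c\iseq\red{c}{\X}$ occurs in no other clause and the clause is a positive unit. Superposition from a positive \elementary\ clause into a clause of type \ref{projext:noeq} or \ref{projext:pred} produces, by the very definition of these sets as $\superp{\cdot}{\projectiontype{S}{\X}{\ref{projext:set}}}$-closures, a clause lying again in the closure, hence either retained in $\projection{S}{\X}$ or discarded as redundant and thereby weakly entailed. A Superposition whose \emph{from} premise is a positive unit $p(\vec s)\iseq\true$ of type \ref{projext:pred} and whose \emph{into} literal is $p(\vec s)\niseq\true$ in a type-\ref{projext:set} clause lifts to an application of the $\hypcst$-Assertion rule on $S$, and is thus again handled by \SP-saturation. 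Finally, the ordering assumptions $\pp\succ\ff(a)$ and $g(\vec t)\succ\pp$ guarantee that $\pp\niseq\true$ is never selected by $\selB$, so the clause $\pp\iseq\true$ of type \ref{projext:pp} triggers no superposition, while the clauses $\ff(u)\niseq\ff(v)\vee u\iseq v$ interact only through their selected negative $\ff$-literal and fall back into the type-\ref{projext:noeq} closure.

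The main obstacle, and the source of the stated exception, is the interaction of \Eqfact Factorization with the restricted redundancy criterion. Whereas the conclusions of Superposition and Reflection are either non-{\quasipositive} (so that the model-entailment clause of Definition \ref{def:ared} is available) or disposed of by subsumption, \Eqfact Factorization on a positive {\eflat{$\hypcst$}}, hence {\quasipositive}, clause produces a small clause carrying a fresh negative \elementary\ literal that may lie strictly below the premise, and for which Definition \ref{def:ared} licenses only subsumption and tautology deletion. There need be no strictly smaller entailing clauses in $\projection{S}{\X}$, so these inferences genuinely cannot be lifted; recognizing that this is the sole obstruction (compare Remark \ref{rem:red}) and that every other inference is weakly entailed is the delicate point, and it is exactly this exception that must be accommodated when Lemma \ref{lem:usualcomp} is later applied.
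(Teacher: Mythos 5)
Your overall architecture (case analysis on the types of Definition \ref{projext:def}, lifting type-\ref{projext:set} inferences to \SP and invoking saturation plus the redundancy criterion) matches the paper's, and your dispatch of types \ref{projext:eq} and \ref{projext:pp} and of Superposition \emph{into} types \ref{projext:noeq}/\ref{projext:pred} is correct. But there are two genuine gaps. First, your lifting principle for type-\ref{projext:set} premises is stated too strongly: it is not true that ``any matching of reduced terms in the projection corresponds to a most general \Aunifier.'' If the superposed position $p$ in the ground clause $t = \red{t'\theta}{\X}$ lies strictly below a position where $t'$ has a variable $x$ (with $x \notin \var(\calZ)$), there is no corresponding position in $t'$ and hence no \SP-inference to lift to. The paper handles this separately by replacing $\theta$ with the substitution $\theta'$ that rewrites $u$ to $v$ inside $x\theta$, observing that the resulting clause is again in $\projection{S}{\X}$ and, together with the from-premise, entails the conclusion while being strictly smaller; without this case the argument does not go through (this is also why superposition into constraint variables had to be permitted in \SP).

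Second, and more seriously, you never treat the Reflection rule on clauses of type \ref{projext:noeq} (nor the resolution-like Superposition between a type-\ref{projext:pred} clause and a clause containing the complementary predicate literal). These are the inferences that actually consume the literals of $\X$, and they are where the distinctive rules of \SP --- $\hypcst$-Assertion and $\hypcst$-Substitutivity --- enter the completeness proof; your sketch never invokes either in an essential way for these cases. A type-\ref{projext:noeq} clause $\ff(a_1) \niseq \ff(a_2) \vee C'$ with $a_1 = a_2$ arises from $b_1 \niseq b_2 \in \X$ by a chain of superpositions from positive \elementary{} clauses; to show that the Reflection conclusion $C'$ is entailed by smaller clauses one must recombine that chain into a single clause $b_1 \iseq b_2 \vee C''$ of type \ref{projext:set} (the paper's Lemma \ref{lem:composeq}, which itself needs the notion of \fixed{} occurrences and Proposition \ref{prop:fixedbis} to deal with occurrences coming from variables), and then use saturation of $S$ under $\hypcst$-Assertion on the preimage $\ccl{u \iseq v \vee E}{\Y}$ to conclude. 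None of this machinery, or any substitute for it, appears in your proposal, so the central case of the lemma is missing. Your diagnosis of why Equational Factorization on positive \eflat{$\hypcst$} clauses must be excluded is essentially right, but the claim that every \emph{other} inference is weakly entailed is precisely what remains unproved.
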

\begin{proof}
The proof is given in Appendix \ref{ap:saturated}.
\end{proof}

\begin{newrem}
\label{rem:fact}
The set $\projection{S}{\X}$ is not saturated under \Eqfact Factorization, because the literal $\pp \niseq \true$
is not added to the clauses that are positive and {\eflat{$\hypcst$}}, and such clauses can have non-positive
descendants.
For example, $\{ a \iseq b \vee a \iseq c, b \not \iseq c \}$ is \SP-saturated,
but
$\projection{S}{\emptyset} = \{ a \iseq b \vee a \iseq c, b \not \iseq c \vee \pp \niseq \true, \pp \iseq \true \}$ is not.
\end{newrem}

\newcommand{\spc}[1]{\mathrm{P^+}(#1)}

\begin{newcor}
\label{cor:unsat}
Let $S$ be an \SP-saturated set of {\Aclause}s and let $\X$ be a ground and satisfiable
\Aset.
If $\projection{S}{\X}$ is unsatisfiable then it contains $\Box$.
\end{newcor}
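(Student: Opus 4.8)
The plan is to derive the result as an application of the refutational completeness of ground Superposition (Lemma~\ref{lem:usualcomp}) to the set $\projection{S}{\X}$, using Lemma~\ref{lem:saturated} to supply the saturation hypothesis. The single obstruction is that Lemma~\ref{lem:usualcomp} requires weak saturation under \emph{all} rules of \Sup{\prec}{\selB}, whereas Lemma~\ref{lem:saturated} grants this only up to one exception: \Eqfact Factorization applied to positive \eflat{$\hypcst$} clauses. Consequently the entire content of the corollary is to show that this missing family of inferences is harmless for the detection of $\Box$, so that $\projection{S}{\X}$ unsatisfiable still forces $\Box \in \projection{S}{\X}$.

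One clean route is to restore full saturation syntactically: let $S'$ be obtained from $\projection{S}{\X}$ by adding every clause produced by \Eqfact Factorization on a positive \eflat{$\hypcst$} clause, and iterating. Each such step replaces a positive equation by a strictly smaller negative one, so the produced clauses are $\prec$-decreasing and the closure is finite; since every added clause is a logical consequence of its premise, $S' \equiv \projection{S}{\X}$, hence $S'$ is also unsatisfiable. I would then have to verify that $S'$ is weakly saturated under \emph{every} rule of \Sup{\prec}{\selB}: factorization holds by construction, and for the remaining rules one checks that the inferences involving the newly added clauses are weakly derivable, exploiting that these clauses are built only from constants of $\hypcst$ and predicate atoms, that they are $\prec$-minimal (lying strictly below the type-\ref{projext:noeq}, type-\ref{projext:pred} and type-\ref{projext:pp} clauses by the $\ff$/$\pp$ separation), and that the superposition consequences of the source clauses are already present by Lemma~\ref{lem:saturated}. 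Applying Lemma~\ref{lem:usualcomp} to $S'$ would then yield $\Box \in S'$, leaving only the transfer $\Box \in S' \Rightarrow \Box \in \projection{S}{\X}$.

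The hard part is precisely this transfer, and I expect it to be the main technical difficulty. As Remark~\ref{rem:fact} shows, the missing factorizations are genuinely absent from $\projection{S}{\X}$: the reduction $\red{\cdot}{\X}$ may identify two distinct abducible constants and thereby create a common maximal side that had no counterpart at the \Aclause level, so one cannot simply lift the missing inference to the \SP-saturation of $S$. Rather than transferring $\Box$ back through the possibly troublesome re-saturation, the cleanest approach is model-theoretic: re-run the Bachmair--Ganzinger candidate-model construction directly on $\projection{S}{\X}$ and show that, under the assumption $\Box \notin \projection{S}{\X}$, the candidate interpretation validates every clause, contradicting unsatisfiability. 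The only delicate case is a positive \eflat{$\hypcst$} clause that would ordinarily require factorization; for such a clause I would argue that the equivalence it expresses among the constants of $\hypcst$ is already pinned down by the type-\ref{projext:eq} equalities $c \iseq \red{c}{\X}$, by the type-\ref{projext:pp} clauses, and by the satisfiability of $\X$, so that the candidate model satisfies it without recourse to its factored form. Controlling this interaction between the reduction-induced merges of abducible constants and the ordering is where I expect the real work to lie.
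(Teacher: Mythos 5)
You have correctly isolated the one obstruction — that Lemma~\ref{lem:saturated} leaves \Eqfact Factorization on positive {\eflat{$\hypcst$}} clauses uncovered — and your overall strategy (show these missing inferences cannot be responsible for producing $\Box$) is the right one. But both of your concrete routes stop exactly where the work begins, and the one substantive claim you make in the delicate case is not correct. In your second, model-theoretic route you assert that a positive {\eflat{$\hypcst$}} clause requiring factorization is already satisfied by the candidate model because the equivalence on $\hypcst$ is ``pinned down'' by the type-\ref{projext:eq} equalities $c \iseq \red{c}{\X}$, the type-\ref{projext:pp} clauses, and the satisfiability of $\X$. It is not: $\X$ is only a \emph{partial} interpretation of $\hypcst$, the type-\ref{projext:eq} clauses only merge each constant with its $\X$-normal form, and the type-\ref{projext:pp} clauses $\ff(u) \niseq \ff(v) \vee u \iseq v$ merely make $\ff$ injective on reduced constants without deciding any equality $u \iseq v$. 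For two distinct $\X$-reduced constants $a,b$ on which $\X$ is silent, nothing outside the type-\ref{projext:set} clauses constrains the candidate model, so a minimal counterexample of the form $C' \vee a \iseq b \vee a \iseq b'$ with a non-strictly-maximal side $a$ remains genuinely unproductive, and your argument does not discharge it. Any correct treatment of this case must appeal to the \SP-saturation of $S$ itself (in particular to the Equational $\hypcst$-Factorization, $\hypcst$-Reflection and $\hypcst$-Assertion rules), which your sketch never invokes.

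Your first route has the same status: after adjoining the factored clauses $C \vee b \niseq b' \vee a \iseq b$ you must handle the cascade of Reflection and Superposition inferences on the new negative literals $b \niseq b'$, and then transfer $\Box$ back to $\projection{S}{\X}$ — you defer the former to ``one checks'' and explicitly abandon the latter. This cascade is precisely the content of the paper's proof: it forms the inference closure $S'$ of $\projection{S}{\X}$ and proves by structural induction that every clause ever derived retains the shape $c \iseq a' \vee a \niseq b \vee C'$ with $c \succ a'$ and a positive residue $C'$, each case being closed off using Proposition~\ref{prop:simulatefact} (which simulates the missing factorization through the $\hypcst$-Factorization and $\hypcst$-Reflection rules of \SP), Lemma~\ref{lem:composeq} and Proposition~\ref{prop:fixedbis}. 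Since every derived clause is nonempty, $\Box$ must already lie in $\projection{S}{\X}$. That invariant — or an equivalent argument in the model construction — is the actual mathematical content of the corollary, and it is missing from your proposal.
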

\begin{proof}
The proof is not straightforward since $\projection{S}{\X}$ is not saturated w.r.t. Equational Factorization on \eflat{$\hypcst$} {\Aclause}s, as explained above.
However it can be shown that the application of this rule on \eflat{$\hypcst$} is useless in our context; this is due to the fact that the constants in $\hypcst$ are not ordered (see Appendix \ref{ap:unsat} for details).
\end{proof}

\subsection{Deductive Completeness Theorem}
The previous results lead to the following theorem, which states that the calculus \SP can be used to generate all  ground implicates built on $\hypcst$.
\begin{newtheo}
\label{theo:comp}
Let $S_{init}$ be a set of standard clauses and let $S$ be a set of {\Aclause}s obtained from $S_{init}$ by \SP-saturation. Then $\abdcl{S} \dominates \imp{S_{init}}$.\MnachowithAnswer{est-ce qu'on ne pourrait pas mettre $\imp{S_{init}}$? C'est équivalent et ça me semble mieux.}{done}
\end{newtheo}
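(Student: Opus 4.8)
The plan is to prove Theorem~\ref{theo:comp} by combining the machinery developed so far: the projection construction $\projection{S}{\X}$, its saturatedness (Lemma~\ref{lem:saturated} and Corollary~\ref{cor:unsat}), its semantic faithfulness (Proposition~\ref{prop:equivps}), and the refutational completeness of the adapted ground calculus (Lemma~\ref{lem:usualcomp}). The statement $\abdcl{S} \dominates \imp{S_{init}}$ unfolds to: for every $\hypcst$-implicate $C$ of $S_{init}$, there is a clause in $\abdcl{S}$ that entails $C$. By definition of $\abdcl{S}$, such a clause has the form $(\X'\sigma)^{\compl}$ arising from some $\ccl{\Box}{\X'} \in S$. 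So the core goal is: \emph{whenever $S_{init} \models C$ with $C \in \flatcl{\hypcst}$ ground and non-tautological, the saturated set $S$ contains an \Aclause $\ccl{\Box}{\Y}$ with $C^{\compl} \models \Y$}, exactly as announced at the start of Section~\ref{sect:comp}.

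\textbf{First} I would fix an arbitrary $\hypcst$-implicate $C$ of $S_{init}$ and let $\X$ be the smallest \Aset containing $C^{\compl}$. Since $C$ is ground, $\A$-flat, and non-tautological, $\X$ is a ground, satisfiable \Aset, so Lemma~\ref{lem:saturated} and Corollary~\ref{cor:unsat} apply. \textbf{Next} I would argue that $\projection{S}{\X}$ is unsatisfiable. By Proposition~\ref{prop:equivps}, $\projection{S}{\X} \models S_{init}$; and by construction $\projection{S}{\X}$ contains a set of unit clauses equivalent to $\X$ (the type~\ref{projext:eq} equations together with the $\ff$- and $\pp$-clauses encode the atoms of $\X \equiv C^{\compl}$). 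Since $S_{init} \models C$, i.e.\ $S_{init} \cup C^{\compl}$ is unsatisfiable, and $\projection{S}{\X}$ entails both $S_{init}$ and $C^{\compl}$, the set $\projection{S}{\X}$ is unsatisfiable. \textbf{Then} Corollary~\ref{cor:unsat} forces $\Box \in \projection{S}{\X}$.

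\textbf{The main obstacle}, and the real content of the argument, is to trace back the occurrence of $\Box$ in $\projection{S}{\X}$ to an \Aclause of the required shape in $S$. An empty clause cannot be of type~\ref{projext:eq}, \ref{projext:noeq}, \ref{projext:pred}, or~\ref{projext:pp}, since each of those is nonempty by construction; hence it must be of type~\ref{projext:set}, arising as $\red{D\sigma}{\X} \vee C'$ for some $\ccl{D}{\Y} \in S$. For this to be $\Box$ we need $C' = \Box$ (so $D\sigma$ is \eflat{$\hypcst$} and \quasipositive) and $\red{D\sigma}{\X} = \Box$, whence $D = \Box$ itself and $\Y\sigma \subseteq \X$. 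Thus $\ccl{\Box}{\Y} \in S$ with $\Y\sigma \subseteq \X$. \textbf{Finally} I would recover the entailment $C^{\compl} \models \Y$: since $\X$ is the smallest \Aset containing $C^{\compl}$ and $\Y\sigma$ (after the grounding witnessing $\Y\sigma \subseteq \X$) sits inside $\X$, dualizing gives $(\X)^{\compl} \models (\Y\sigma)^{\compl}$, so the clause $(\Y\sigma)^{\compl} \in \abdcl{S}$ entails $C = (C^{\compl})^{\compl}$, which is precisely the domination $\abdcl{S} \dominates \imp{S_{init}}$. I expect the delicate points to be checking that the grounding substitution $\sigma$ can be chosen so that $\Y\sigma$ is satisfiable (so that $(\Y\sigma)^{\compl} \in \abdcl{S}$ genuinely holds) and that the $\red{\cdot}{\X}$ normalization does not collapse $D$ prematurely, both of which follow from the satisfiability of $\X$ and the explicit selection-function bookkeeping in Definition~\ref{projext:def}.
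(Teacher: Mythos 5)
Your proposal is correct and follows essentially the same route as the paper: reduce $C^{\compl}$ to the \Aset $\X$, show $\projection{S}{\X}$ entails $S_{init}\cup C^{\compl}$ (via Proposition~\ref{prop:equivps} and the type~2--5 clauses) and is hence unsatisfiable, invoke Corollary~\ref{cor:unsat} to place $\Box$ in the projection, and trace it back to an \Aclause $\ccl{\Box}{\Y}$ with $\Y\sigma\subseteq\X\equiv C^{\compl}$. The only differences are cosmetic: the paper proves the full equivalence $\projection{S}{\X}\equiv S\cup C^{\compl}\cup\projectiontype{S}{\X}{\ref{projext:pp}}$ where you only establish the one entailment actually needed, and your intermediate ``dualizing gives $(\X)^{\compl}\models(\Y\sigma)^{\compl}$'' has the entailment reversed (a subclause entails its superclause), though the conclusion $(\Y\sigma)^{\compl}\models C$ you draw from $C^{\compl}\models\X\supseteq\Y\sigma$ is the correct one.
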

\begin{proof}
  \Mnacho{réécriture}
Let $C \in \imp{S_{init}}$, let $\X$ be the smallest \Aset containing $C^{\compl}$
 and
let $S' \isdef \projection{S}{\X}$. Note that $\X$ is ground since $C$ is ground.  Since $\X$ is
equivalent to $C^{\compl}$ and $C$ is not a tautology, this {\Aset} is satisfiable.
We first prove that $S'$ is equivalent to $S'' \isdef S \cup C^{\compl} \cup \{ \pp \iseq \true,
   \ff(u) \niseq \ff(v)
   \vee u \iseq v \mid u,v \in \hypcst \}$, and therefore unsatisfiable. By Proposition \ref{prop:equivps} \nikonew{remplacement de $S$ par $S_{init} \equiv S$} $S' \models S_{init} \equiv S$; since $\projectiontype{S}{\X}{\ref{projext:eq}} \cup \projectiontype{S}{\X}{\ref{projext:noeq}} \subseteq S'$, we have $S'\models C^\compl$, and since $\projectiontype{S}{\X}{\ref{projext:pp}} \subseteq S'$, we conclude that $S'\models S''$.
We now show that $S''$ entails all clauses in $S'$.
   \begin{description}
   \item [Clauses in $\projectiontype{S}{\X}{\ref{projext:pp}}$.] All the clauses in $\projectiontype{S}{\X}{\ref{projext:pp}}$ are in $S''$, and the result is obvious.
   \item [Clauses in $\projectiontype{S}{\X}{\ref{projext:eq}}$.] For all $c\in \hypcst$, $C^\compl \models c\iseq \red{c}{\X}$. Since $C^\compl \subseteq S''$, we have the result.
   \item [Clauses in $\projectiontype{S}{\X}{\ref{projext:set}}$.] Let $\ccl{D}{\Y} \in S$, and consider a ground substitution $\sigma$ such that $\Y\sigma \subseteq \X$ and $\red{x\sigma}{\X} = x\sigma$ for all  $x \in \var(D)$. Then $S''\models D\sigma$, and since $\Y\sigma \subseteq \X \equiv C^\compl$, $S''\models \red{(D\sigma)}{\X}$. But $\pp\iseq\true \in S''$, thus $S'' \models \red{(D\sigma)}{\X}\vee C'$, regardless of whether $C' = \Box$ or $C' = (\pp\niseq \true)$.
  \item [Clauses in $\projectiontype{S}{\X}{\ref{projext:pred}}$.] These clauses are all in $C^\compl$, hence the result is obvious.
   \item [Clauses in $\projectiontype{S}{\X}{\ref{projext:noeq}}$.] Consider a literal $a\niseq b \in \X$. Since the clause $a\iseq b \vee \ff(a) \niseq \ff(b)$ occurs in $S''$, we have $S''\models \ff(a)\niseq \ff(b)$; therefore, $S''\models \superp{\ff(\red{a}{\X}) \not \iseq
      \ff(\red{b}{\X})}{\projectiontype{S}{\X}{\ref{projext:set}}}$.
   \end{description}


Since $S''$ is unsatisfiable by construction, so is $S'$ and by Corollary \ref{cor:unsat}, $S'$
contains the empty clause.
This means that $S$ must contain
an \Aclause of
the form $\ccl{\Box}{\Y}$
where $\Y\theta \subseteq C^{\compl}$. By definition $\abdcl{S}$ contains the clause
$(\Y\theta)^{\compl}$ and since $\Y\theta \subseteq C^{\compl}$ we have $(\Y\theta)^{\compl} \models C$.
\end{proof}

Note that Theorem \ref{theo:comp} does not hold if $S$ is not obtained by \SP-saturation from a set of standard clauses; this is due to the fact that no inference is performed on the literals occurring in  the constraints.
For example, the set: $S = \{ \ccl{\Box}{a \iseq b}, \ccl{\Box}{a \niseq b} \}$ is clearly unsatisfiable and \SP-saturated, however we have $\abdcl{S} = \{ a \iseq b, a \niseq b \} \not \dominates \imp{S}$, since $\Box\in \imp{S}$.
We also provide an example showing that the theorem 
does not hold if $\hypcst$-Superposition into the variables occurring in the constraints is not allowed.
\begin{newex}
\label{ex:vars}
Let $S \isdef \{ x \iseq a \vee x \iseq c, x \iseq b \vee x \iseq d \}$ and $C \isdef e \iseq a \vee e \iseq b \vee c \iseq d$.
It is straightforward to verify that $S \models C$. The only way of generating an \Aclause $\ccl{\Box}{\X}$ such that
$\X\sigma \models C^{\compl}$ is to apply the Superposition rule on the literals $x \iseq c$ and $x \iseq d$ upon the term $x$, which is usually forbidden. This can be done by first applying the $\hypcst$-Assertion rule on the literals $x \iseq a$ and $x \iseq b$, yielding
$\ccl{x \iseq c}{\{ x \niseq a \}}$ and $\ccl{x \iseq d}{\{ x \niseq b \}}$. Then it is possible to apply the Superposition on the term $x$ since it occurs in the constraints. This yields
$\ccl{c \iseq d}{\{ x \niseq a, x \niseq b \}}$, and by applying  the $\hypcst$-Assertion rule again, we obtain the \Aclause $\ccl{\Box}{\{ x \niseq a, x \niseq b, c \niseq d \}}$, which satisfies the required property.
\end{newex}

\section{Refinements}

\label{sect:ref}

\newcommand{\proper}{{\frak P}}
\newcommand{\nice}{closed under subsumption\xspace}

Theorem \ref{theo:comp} proves that \SP-saturation permits to obtain the prime $\hypcst$-implicates of any set of clauses. This set may still be very large, it could thus require a lot of time to be generated and be difficult to handle.
In this section we introduce some refinements of the calculus \SP, showing that at almost no cost, it is possible to generate only those prime $\hypcst$-implicates of a clause set $S$ that satisfy properties that are \nice (see Definition \ref{def:nice}), or to obtain a more concise representation of all the $\hypcst$-implicates of $S$.

\subsection{Imposing Additional Restriction on the Implicates}
\label{sect:cond}
The first refinement is rather straightforward: it consists in
investigating how the calculus can be adapted to generate implicates
satisfying additional arbitrary restrictions  (e.g., for generating
implicates of some bounded cardinality, or purely positive
implicates). 
We show that some
restrictions can be imposed on the constraint part of all the
{\Aclause}s occurring in the search space without losing deductive
completeness; in other words, inferences yielding to {\Aclause}s whose
constraints do not fulfill the considered restriction can be blocked.
This is possible if these implicates belong to some class that is
closed under \MnachowithAnswer{ajout:}{ok (peut-être on pourrait mettre simplement ``subsumption'' ?} some form of logical generalization. More formally:


\begin{newdef}\label{def:nice}
A set of clauses $\proper$ is {\em \nice} if
 for every $C \in \proper$ and for every clause $D$ such that $D\sigma \subseteq C$ for some substitution $\sigma$, 
 we have $D \in \proper$.
An \Aclause $\ccl{C}{\X}$ is {\em $\proper$-compatible} if $\X^{\compl} \in\proper$.
\end{newdef}

\begin{newprop}\label{prop:compatible}
  Let $\proper$ be a set of clauses that is \nice, and let
  $\ccl{E}{\calZ}$ be an \Aclause generated by an \SP-rule, with $\ccl{C}{\X}$ as a premise. If
  $\ccl{E}{\calZ}$ is $\proper$-compatible, then so is $\ccl{C}{\X}$.
\end{newprop}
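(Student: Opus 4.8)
The plan is to exploit the fact that in every rule of \SP the constraint of a premise is contained---up to the substitution produced by \Aunification---in the constraint of the conclusion, and then to transfer membership in $\proper$ backwards along this inclusion using closure under subsumption.

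Concretely, I would first establish the following uniform claim: whenever $\ccl{E}{\calZ}$ is obtained by an \SP-rule from a premise $\ccl{C}{\X}$, there is a substitution $\sigma$ (the one produced by the rule, or the identity when no \Aunification is performed) such that $l\sigma \in \calZ$ for every $l \in \X$. This is a short case analysis. In $\hypcst$-Superposition the conclusion constraint is $(\X \cup \Y \cup \E)\sigma$, in $\hypcst$-Reflection and Equational $\hypcst$-Factorization it is $(\X \cup \E)\sigma$, in $\hypcst$-Assertion it is $\X \cup \{\,t \niseq s\,\}$, and in $\hypcst$-Substitutivity it is $\{\,p(s_1,\dots,s_n) \bowtie \true\,\} \cup \bigcup_i \X_i$. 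In every case the constraint of each premise occurs among the {\Aset}s being combined, and since that constraint is already an \Aset it is contained in the smallest \Aset formed by the union; applying $\sigma$ literalwise and closing therefore places $l\sigma$ in $\calZ$ for every literal $l$ of the premise constraint. Because the constants of $\hypcst$ are never moved by a substitution, the reflexive literals $a \iseq a$ present in $\X$ cause no difficulty.

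From the claim the result is immediate. Complementation commutes with substitution, so $\X^{\compl}\sigma = \{\,(l\sigma)^{\compl} \mid l \in \X\,\}$, and by the claim each $(l\sigma)^{\compl}$ lies in $\calZ^{\compl}$; hence $\X^{\compl}\sigma \subseteq \calZ^{\compl}$. By hypothesis $\ccl{E}{\calZ}$ is $\proper$-compatible, i.e.\ $\calZ^{\compl} \in \proper$. Since $\proper$ is closed under subsumption and $\X^{\compl}\sigma \subseteq \calZ^{\compl}$, Definition \ref{def:nice} applied with $D = \X^{\compl}$ and $C = \calZ^{\compl}$ yields $\X^{\compl} \in \proper$, that is, $\ccl{C}{\X}$ is $\proper$-compatible.

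The main point to handle with care is the bookkeeping around the \Aset operations: since $\X \cup \Y \cup \E$ and $\X\sigma$ denote the \emph{smallest} \Aset containing a union, obtained by transitive closure and by propagating equalities through literals, one must check that the literals $l\sigma$ with $l \in \X$ genuinely occur in $\calZ$. This is unproblematic because we only need the forward inclusion $\X \subseteq \X \cup \Y \cup \E$, which holds simply because $\X$ is itself an \Aset; the extra literals introduced by the closure never need to be controlled.
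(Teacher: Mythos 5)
Your proof is correct and follows essentially the same route as the paper's: observe that the premise's constraint is included, up to the substitution produced by the rule, in the conclusion's constraint, so $\X^{\compl}\sigma \subseteq \calZ^{\compl}$, and then apply closure under subsumption. The only difference is that you spell out the case analysis for every rule, whereas the paper details only $\hypcst$-Superposition and declares the unary rules similar.
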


\begin{proof}
  We only consider the case where $\ccl{E}{\calZ}$ is generated by the
  $\hypcst$-Superposition rule applied to $\ccl{C}{\X}$ and
  $\ccl{D}{\Y}$, the case for the unary inference rules is
  similar. Then by definition, $\calZ = (\X \cup \Y \cup \E)\sigma$,
  where $(\sigma, \E)$ is an $(\X\cup \Y)$-pure \Asubstitution, and we
  have
    \[\X^\compl\sigma \subseteq [(\X \cup \Y\cup \E)\sigma]^\compl\ =\ \calZ^\compl.\]
  Since $\proper$ is \nice, we deduce that $\ccl{C}{\X}$ is $\proper$-compatible.
\end{proof}

\SPR{\proper} denotes the calculus \SP in which all inferences that generate non-$\proper$-compatible \Aclause are blocked.
The following theorem shows that the calculus \SPR{\proper} is deductive complete for the clauses in $\flatcl{\hypcst} \cap \proper$.

\MnachowithAnswer{modif énoncé}{oui}

\begin{newtheo}
\label{theo:compbis}
Let $S_{init}$ be a set of standard clauses and let $S$ be a set of {\Aclause}s obtained from $S_{init}$ by \SPR{\proper}-saturation.
If $\proper$ is \nice then $\abdcl{S} \dominates \imp{S_{init}} \cap \proper$.
\end{newtheo}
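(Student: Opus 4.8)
The plan is to retrace the proof of Theorem~\ref{theo:comp}, verifying that the $\proper$-restriction never deletes an inference that is needed to refute the relevant ground projection. Fix $C \in \imp{S_{init}} \cap \proper$, let $\X$ be the smallest \Aset containing $C^\compl$ (which is ground and satisfiable, exactly as in Theorem~\ref{theo:comp}), and form $\projection{S}{\X}$. Soundness of \SPR{\proper} is inherited from that of \SP, so Proposition~\ref{prop:equivps} and the equivalence computations of Theorem~\ref{theo:comp} still apply and show that $\projection{S}{\X}$ is unsatisfiable. The only place where the proof of Theorem~\ref{theo:comp} genuinely uses \emph{full} \SP-saturation of $S$ is the weak saturation of $\projection{S}{\X}$ (Lemma~\ref{lem:saturated}), so this is where the new work lies.

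First I would record the bookkeeping that every \Aclause in $S$ is $\proper$-compatible: the clauses of $S_{init}$ are standard, hence have empty constraint, and $\Box = \emptyset^\compl \in \proper$ because $\proper$ is \nice; and by definition \SPR{\proper} only produces $\proper$-compatible conclusions. I would then reprove weak saturation of $\projection{S}{\X}$ by lifting each ground \Sup{\prec}{\selB}-inference to an \SP-inference $\iota$ with conclusion $\ccl{E}{\calZ}$, precisely as in Lemma~\ref{lem:saturated}. If $\ccl{E}{\calZ}$ is $\proper$-compatible, then $\iota$ is a legal \SPR{\proper}-inference, so \SPR{\proper}-saturation of $S$ makes $\ccl{E}{\calZ}$ redundant in $S$, and the projection argument of Lemma~\ref{lem:saturated} carries over without change.

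The hard part is the remaining case, where $\iota$ is \emph{blocked} because $\calZ^\compl \notin \proper$; here \SPR{\proper}-saturation supplies no redundancy information, so I must argue that the ground conclusion of such an $\iota$ is irrelevant to the refutation. The main tool is Proposition~\ref{prop:compatible}: since $\proper$ is \nice, $\proper$-compatibility propagates from the conclusion of an \SP-rule to all its premises, hence to every ancestor along a derivation. The witness we eventually need is an \Aclause $\ccl{\Box}{\Y}\in S$ with $\Y\theta \subseteq C^\compl$; since $(\Y^\compl)\theta = (\Y\theta)^\compl \subseteq C \in \proper$, Definition~\ref{def:nice} gives $\Y^\compl \in \proper$, so this witness is $\proper$-compatible. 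Consequently a $\proper$-incompatible $\ccl{E}{\calZ}$ can never be an ancestor of the witness, and I would exploit this to show that the ground clauses produced by blocked inferences are not required to derive $\Box$, so that Corollary~\ref{cor:unsat} still yields $\Box \in \projection{S}{\X}$. Reconciling this derivation-level propagation (Proposition~\ref{prop:compatible}) with the fixpoint formulation of weak saturation is the step I expect to be the main obstacle.

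Once $\Box \in \projection{S}{\X}$ is established, I would finish exactly as in Theorem~\ref{theo:comp}: the empty clause can only originate from an \Aclause $\ccl{\Box}{\Y} \in S$ with $\Y\theta \subseteq C^\compl$, whence $(\Y\theta)^\compl \in \abdcl{S}$ and $(\Y\theta)^\compl \models C$. Since $C$ was an arbitrary element of $\imp{S_{init}} \cap \proper$, this gives $\abdcl{S} \dominates \imp{S_{init}} \cap \proper$, as required.
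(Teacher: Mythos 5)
Your plan is built from exactly the ingredients the paper's own proof uses: Proposition \ref{prop:compatible}, an induction showing that every ancestor of a $\proper$-compatible \Aclause is itself $\proper$-compatible, the fact that $\Box \in \proper$ (so the standard clauses of $S_{init}$, having empty constraint, are $\proper$-compatible), and the observation that the needed witness $\ccl{\Box}{\Y}$ with $\Y\theta \subseteq C^{\compl}$ is $\proper$-compatible because $(\Y^{\compl})\theta = (\Y\theta)^{\compl} \subseteq C \in \proper$ and $\proper$ is \nice. The difference is one of ambition rather than of route. The published proof is two sentences long and stays entirely at the derivation level --- ancestors of $\proper$-compatible clauses are $\proper$-compatible, the initial clauses are $\proper$-compatible, ``hence the result'' --- and never re-enters the $\projection{S}{\X}$ / Lemma \ref{lem:saturated} machinery at all. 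The step you single out as the main obstacle, namely reconciling the ancestor-propagation argument with the fixpoint formulation of saturation (equivalently, showing that inferences blocked by the $\proper$-filter never need to be made redundant for Lemma \ref{lem:saturated} and Corollary \ref{cor:unsat} to go through), is precisely the step the paper treats as immediate and does not spell out. So you have correctly located both the key lemma and the genuine subtlety; what you propose would, if completed, be a more careful justification than the one in the paper, but as written it leaves open the same gap that the published proof closes only by assertion. If you want to finish it along your lines, the cleanest fix is to observe that in the proof of Lemma \ref{lem:saturated} the redundancy of a lifted conclusion $\ccl{E}{\calZ}$ is only ever invoked when $\calZ \subseteq \X$ up to instantiation, where $\X$ is the \Aset built from $C^{\compl}$ with $C \in \proper$; closure of $\proper$ under subsumption then forces $\calZ^{\compl} \in \proper$, so the inference was never blocked in the first place and \SPR{\proper}-saturation suffices.
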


\begin{proof}
A simple induction together with Proposition \ref{prop:compatible} proves that all the ancestors of $\proper$-compatible clauses generated by \SP are necessarily $\proper$-compatible themselves. Since $\Box \in \proper$ for all sets $\proper$ that are \nice, all the clauses in $S_{init}$ must be $\proper$-compatible, hence the result.
\end{proof}

Examples of  classes of clauses that are \nice include the following sets that are of some practical interest:
\begin{itemize}
    \item{The set of clauses $C$ such that there exists a substitution $\sigma$ such that $C\sigma$ is equivalent to a clause of length at most $k$.}
\item{The set of positive (resp. negative) clauses.}
\item{The set of implicants of some formula $\phi$.}
\end{itemize}
Note also that the class of clause sets that are \nice is closed under union and intersection, which entails that these criteria can be combined easily.

\subsection{Discarding the Inferences on {\eflat{$\hypcst$}} Clauses}

\label{sect:ep12}

In this section we impose a restriction on the calculus that consists in preventing inferences on  $\hypcst$-literals. The obtained calculus 
is not complete since it does not generate all $\hypcst$-implicates in general, but it is complete in a restricted sense: every $\hypcst$-implicate is a logical consequence of the set of {\eflat{$\hypcst$}} clauses generated by the calculus.

\MnachowithAnswer{ajout environnement définition}{oui}
\begin{newdef}
We denote by \SPP the calculus \SP in which no inference upon $\hypcst$-literals is allowed, except for the $\hypcst$-Assertion and $\hypcst$-Reflection rules. We denote by $\projectionbis{S}{\X}$ the set obtained from $\projection{S}{\X}$ by deleting, in every clause $C \in \projection{S}{\X}$,
each literal $l$ such that the unit clause $l^{\compl}$ belongs to $\X \cup \{ \pp \iseq \true \}$.
\end{newdef}

\MnachowithAnswer{ajout}{oui. L'exemple n'était pas forcément très parlant parce que seul le littéral $\pp \niseq \trueform$ était supprimé. J'ai ajouté un littéral $a \niseq b$ dans la première clause.}
\begin{newex}
  Consider the set of clauses and \Aset from Example \ref{ex:projection}. The set $\projectionbis{S}{\X}$ contains the following clauses:
  \begin{itemize}
  \item $f(b) \iseq d$, $b\niseq d$, $d\iseq d$ and $g(t,b) \iseq
    f(d)$, where $t$ ranges over the set of all ground terms;
  \item $a \iseq b$ and $c \iseq d$;
  \item $\ff(b) \not \iseq \ff(e)$ and $\ff(d) \not \iseq \ff(e)$;
  \item $\pp \iseq \true$, $\ff(b) \niseq \ff(d) \vee b \iseq d$, $\ff(b) \niseq \ff(e)$ and $\ff(d) \niseq \ff(e) \vee d \iseq e$.
  \end{itemize}
\end{newex}

\MnachowithAnswer{ajout proposition}{oui}

\begin{newprop}\label{prop:projbis}
  For all sets of {\Aclause}s $S$ and {\Aset}s $\X$, $\projection{S}{\X} \equiv \projectionbis{S}{\X}$.
\end{newprop}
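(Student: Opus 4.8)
The plan is to prove the two entailments $\projectionbis{S}{\X} \models \projection{S}{\X}$ and $\projection{S}{\X} \models \projectionbis{S}{\X}$ separately. For the first, I would simply observe that, by construction, every clause $C \in \projection{S}{\X}$ gives rise to a clause $C' \in \projectionbis{S}{\X}$ obtained by deleting some of its literals, so that $C' \subseteq C$ and hence $C' \models C$. Consequently $\projectionbis{S}{\X} \models C$ for every $C \in \projection{S}{\X}$, which yields this direction at once (even in the degenerate case where all literals of some $C$ are deleted, so that $C' = \Box$).

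The converse is the substantial part. Here I would show that each literal deleted when passing from $\projection{S}{\X}$ to $\projectionbis{S}{\X}$ can be resolved away using clauses already present in $\projection{S}{\X}$. Concretely, a literal $l$ is deleted exactly when $l^{\compl} \in \X \cup \{ \pp \iseq \true \}$; since deletion only affects ground clauses, it suffices to treat the ground members of $\X$ together with $\pp \iseq \true$. The key claim is that $\projection{S}{\X} \models l^{\compl}$ for every such deleted $l$. Granting this, if $C = C' \vee l_1 \vee \dots \vee l_k$ with the $l_i$ the deleted literals and $C' \in \projectionbis{S}{\X}$, then from $C$ and the unit consequences $l_1^{\compl}, \dots, l_k^{\compl}$ we obtain $\projection{S}{\X} \models C'$ by repeated resolution, establishing $\projection{S}{\X} \models \projectionbis{S}{\X}$.

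To prove the key claim I would proceed by cases on the shape of $l^{\compl}$. The literal $\pp \iseq \true$ belongs to $\projectiontype{S}{\X}{\ref{projext:pp}}$ outright. For an equality $a \iseq b \in \X$ between abducibles, the closure properties of $\hypcst$-sets give $\red{a}{\X} = \red{b}{\X}$, and the unit equations $a \iseq \red{a}{\X}$, $b \iseq \red{b}{\X}$ of $\projectiontype{S}{\X}{\ref{projext:eq}}$ (with reflexivity) yield $\projection{S}{\X} \models a \iseq b$. For a predicate literal $p(\vec{a}) \bowtie \true \in \X$, the base clause $p(\red{a_1}{\X}, \dots, \red{a_n}{\X}) \bowtie \true$ is entailed by $\projectiontype{S}{\X}{\ref{projext:pred}}$, and the same type-\ref{projext:eq} equalities recover $p(\vec{a}) \bowtie \true$. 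The remaining case, a disequality $a \niseq b \in \X$, is the one I expect to be the main obstacle, and it is where the auxiliary symbol $\ff$ earns its keep: the base clause $\ff(\red{a}{\X}) \niseq \ff(\red{b}{\X})$ belongs to the superposition set $\projectiontype{S}{\X}{\ref{projext:noeq}}$, so $\projection{S}{\X}$ entails it; by functionality of $\ff$ (interpretations being congruences) this forces $\red{a}{\X} \niseq \red{b}{\X}$, and once more the type-\ref{projext:eq} equalities deliver $a \niseq b$. The points needing care throughout are that only ground literals of $\X$ can match literals of the ground clauses in $\projection{S}{\X}$, that for abducibles $a \iseq b \in \X$ holds precisely when $\red{a}{\X} = \red{b}{\X}$, and that replacing a superposition closure by its non-redundant subset preserves entailment of the corresponding base clause.
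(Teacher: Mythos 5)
Your proof is correct. The paper states this proposition without an explicit proof, so there is no argument to compare line by line; but the key claim you isolate --- that $\projection{S}{\X}$ entails $\pp \iseq \true$ together with every ground literal of $\X$ --- is precisely the content of Proposition \ref{prop:xcont}, which the paper does prove in the appendix by the same case analysis you sketch (the type-\ref{projext:eq} unit equations for positive equations between abducibles, the $\ff$-disequations of type \ref{projext:noeq} for disequations, and the base clauses of type \ref{projext:pred} for predicate literals, each recovered from the non-redundant closure since a set is equivalent to its non-redundant subset). Combined with the trivial direction that a subclause entails its superclause, this yields both entailments, so your argument is essentially the one the paper intends.
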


\SPP essentially simulates the calculus  in \citep{EP12a}, but there are some important differences: in particular our previous approach does not handle variable-active axioms and is complete only for implicates containing no
predicate symbol other than $\iseq$. This entails that for example, an implicate of the form $p(c_1,\dots,c_n) \iseq d$ can only be generated if a new constant
$c$ is added to  $\hypcst$, along with the axiom $c \Leftrightarrow p(c_1,\dots,c_n)$.
It is clear that applying this operation on all ground atoms is costly from a practical point of view. This is avoided with the new calculus \SPP, thanks to the addition of new inference rules.

\begin{newlem}
\label{lem:saturatedbis}
Let $S$ be an \SPP-saturated set of {\Aclause}s and let $\X$ be a complete and satisfiable \Aset.
The set $\projectionbis{S}{\X}$ is \Sup{\prec}{\selB}-saturated.
\end{newlem}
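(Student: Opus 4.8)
The plan is to establish saturation inference-by-inference: I would take an arbitrary \Sup{\prec}{\selB}-inference whose premises lie in $\projectionbis{S}{\X}$ and show that its conclusion is redundant in $\projectionbis{S}{\X}$. The argument runs parallel to the proof of Lemma~\ref{lem:saturated}, so I would reuse its case analysis on the types of the premises and the shape of the selected literals; the genuinely new ingredient is that here $\X$ is \emph{complete}, and I would show that completeness is precisely what removes the two weaknesses of Lemma~\ref{lem:saturated} (its restriction to \emph{weak} saturation and its exception for \Eqfact Factorization on positive \eflat{$\hypcst$} clauses), while simultaneously compensating for the fact that $S$ is only \SPP-saturated rather than \SP-saturated.

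The core observation I would prove first is that, since $\X$ is complete and satisfiable, every ground \eflat{$\hypcst$} literal is decided by $\X$. Hence, in passing from $\projection{S}{\X}$ to $\projectionbis{S}{\X}$, each \eflat{$\hypcst$} literal is either deleted (when false under $\X$) or kept, and a kept literal is necessarily a disequation between two $\prec$-minimal representatives. Two consequences follow. First, every positive \eflat{$\hypcst$} clause collapses: if one of its equations is true under $\X$ it reduces to a tautological literal $c \iseq c$ and the whole clause becomes a tautology, and otherwise all its equations are deleted and the clause reduces to $\Box$; in either case no Equational Factorization is ever required on such a clause, so the exception present in Lemma~\ref{lem:saturated} (see Remark~\ref{rem:fact}) disappears. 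Second, a surviving \eflat{$\hypcst$} literal is always a disequation $a \niseq b$ between distinct minimal representatives, into which no superposition applies: representatives are $\prec$-minimal, and the only positive \eflat{$\hypcst$} equations available are the type~\ref{projext:eq} clauses $c \iseq \red{c}{\X}$, whose left-hand sides are non-representatives; nor can Reflection fire on $a \niseq b$ since $a \neq b$.

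With these facts I would argue that the inferences blocked by \SPP, namely superpositions acting on $\hypcst$-literals, have no \emph{live} ground counterpart in $\projectionbis{S}{\X}$: a selected positive \eflat{$\hypcst$} equation can occur only in a clause that is a tautology or $\Box$ (these being the clauses on which $S$ uses the $\hypcst$-Assertion rule), so completeness makes them inert here. Every remaining inference therefore acts on a non-$\hypcst$ literal, or entirely within the auxiliary clauses of types~\ref{projext:noeq}, \ref{projext:pred} and~\ref{projext:pp} built from $\ff$ and $\pp$, which are handled exactly as in the proof of Lemma~\ref{lem:saturated}. Such an inference lifts to an \SPP-inference between the $\hypcst$-clauses of $S$ from which the premises were projected; \SPP-saturation yields the redundancy of the lifted conclusion, and this redundancy projects back, through the same reduction $\red{\cdot}{\X}$ and literal-deletion, to redundancy of the ground conclusion in $\projectionbis{S}{\X}$. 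Combining the two halves gives full \Sup{\prec}{\selB}-saturation.

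The step I expect to be the main obstacle is the bookkeeping that ties the literal-deletion of $\projectionbis{\cdot}{\cdot}$ to the selection function $\selB$ and to the lifting. I must check that a selected literal of a \emph{live} clause is never one that the reduction deletes, which is exactly what completeness guarantees, since a deletable literal is a false \eflat{$\hypcst$} literal whereas a selected positive \eflat{$\hypcst$} equation forces its clause to be a tautology or $\Box$; and I must check that the ordering side-conditions of the lifted \SPP-inference survive the reduction. Showing that the $\hypcst$-inferences blocked by \SPP genuinely have no live instances under a complete $\X$ is where the completeness hypothesis is indispensable, and it is the crux that distinguishes this lemma from Lemma~\ref{lem:saturated}.
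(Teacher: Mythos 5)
Your proposal follows essentially the same route as the paper: completeness of $\X$ is used to neutralise the $\hypcst$-flat part of the projection (the paper does this by showing that every $\hypcst$-flat clause of type 1 in $\projectionbis{S}{\X}$ is redundant, being entailed by the smaller type-3 clause $\ff(a)\niseq\ff(b)$ whose presence is exactly what completeness guarantees, while you argue the equivalent facts that positive flat clauses collapse to tautologies or $\Box$ and that surviving flat literals are inert disequations between reduced representatives), after which both arguments conclude that the only remaining inferences act on non-$\hypcst$-flat literals, which \SPP does not restrict, and defer to the simulation argument of Lemma~\ref{lem:saturated}. Your identification of the crux --- that under a complete $\X$ no live inference in $\projectionbis{S}{\X}$ touches a literal that the reduction deletes or that \SPP would refuse to process --- is precisely the paper's reasoning.
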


\begin{proof}
  We prove that every {\eflat{$\hypcst$}} clause of type
  \ref{projext:set} in $\projectionbis{S}{\X}$ is redundant in
  $\projectionbis{S}{\X}$. Let $C = a \bowtie b \vee C'$ be such a
  clause, by definition of $\projection{S}{\X}$, we have $a =
  \red{a}{\X}$ and $b = \red{b}{\X}$.  If $a=b$ then either $\bowtie =
  \iseq$, in which case $a \bowtie b \vee C$ is a tautology, or
  $\bowtie = \niseq$, in which case $(a\niseq a)^\compl \in \X$, and
  $C$ cannot occur in $\projectionbis{S}{\X}$. Thus $a \not = b$, and
  since $\X$ is complete, we deduce that $a \not \iseq b \in \X$ ad
  that $\ff(a) \not \iseq \ff(b)$ occurs in $\projection{S}{\X}$.
  This implies that $\bowtie$ is $\niseq$, since otherwise the literal
  $a \iseq b$ would have been deleted from the clause. Thus, $C$ is of the form $a\niseq b\vee C'$; it is
  not positive, and by construction, it contains the literal $\pp
  \niseq \true$. We deduce that $\ff(a) \niseq \ff(b) \models C$ and that
  $\ff(a) \niseq \ff(b) \prec C$; $C$ is therefore
  redundant. This implies that the only \MnachowithAnswer{non-redundant?}{oui}
  non-redundant
  inferences that can be applied on
  clauses in $\projectionbis{S}{\X}$ are upon literals that are not
  {\eflat{$\hypcst$}}.  The restriction on the calculus \SPP does not
  affect such inferences, thus, as shown in the proof of Lemma
  \ref{lem:saturated}, they can be simulated by inferences on the
  corresponding {\Aclause}s in $S$.
\end{proof}

The next theorem states a form of completeness for the restricted calculus \SPP, which is weaker than that of the calculus \SP (compare with Theorem \ref{theo:comp})  and similar to that of \citep{EP12a}.
The proof is based on the following result.

\MnachowithAnswer{modif énoncé}{En fait après m'être replongé dedans, j'ai vu qu'on avait pas besoin de $S_{init}$. Ca marche pour tout $S$ saturé. L'idée est que si $\X$ est complet, et si $\Y \not \subseteq \X$ alors forcément $\X \models \Y^c$, donc les instances des clauses de $S$ qui ne sont pas dans $\projection{S}{\X}$ sont des conséquences logiques de $\X$. }

\nikonew{ajout proposition ci-dessous}

\begin{newprop}
\label{proj:projcomplete}
Let $S$ be a set of {\Aclause}s and let $\X$ be a complete \Aset. Then
$\projectionbis{S}{\X} \models S$.
\end{newprop}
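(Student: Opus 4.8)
The plan is to prove the equivalent statement $\projection{S}{\X}\models S$ and then transfer it through Proposition~\ref{prop:projbis}, which gives $\projection{S}{\X}\equiv\projectionbis{S}{\X}$. So I would fix an interpretation $I$ validating $\projection{S}{\X}$ and show that $I$ validates every \Aclause $\ccl{D}{\Y}\in S$, that is, that $I\models D\sigma$ for every $\Y$-pure ground substitution $\sigma$ of domain $\var(D)\cup\var(\Y)$ with $I\models\Y\sigma$.

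First I would record the auxiliary fact that $I$ validates every ground literal occurring in $\X$. For a positive elementary equation $a\iseq b\in\X$ one has $\red{a}{\X}=\red{b}{\X}$, so the unit clauses of $\projectiontype{S}{\X}{\ref{projext:eq}}$ together with the fact that $I$ is a congruence give $I\models a\iseq b$. For a disequation $a\niseq b\in\X$, the set $\projectiontype{S}{\X}{\ref{projext:noeq}}$ entails the seed literal $\ff(\red{a}{\X})\niseq\ff(\red{b}{\X})$, hence $I\models\ff(\red{a}{\X})\niseq\ff(\red{b}{\X})$; since $I$ is a congruence this forces $I\models\red{a}{\X}\niseq\red{b}{\X}$, whence $I\models a\niseq b$ using $\projectiontype{S}{\X}{\ref{projext:eq}}$ again. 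Predicate literals $p(\vec{a})\bowtie\true\in\X$ are handled identically through $\projectiontype{S}{\X}{\ref{projext:pred}}$. The only point requiring care is that the seeds of types~\ref{projext:noeq} and~\ref{projext:pred} may be redundant and thus absent from $\projection{S}{\X}$; this is harmless, because the retained non-redundant clauses are logically equivalent to the full superposition closure, which does contain the seed.

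Then I would run the main argument on a substitution $\sigma$ as above. The decisive step, and the sole use of the completeness of $\X$, is to show $\Y\sigma\subseteq\X$: if some literal $L\sigma$ of the ground set $\Y\sigma$ were not in $\X$, completeness of $\X$ would put its complement $(L\sigma)^{\compl}$ into $\X$; being a ground literal of $\X$, it is validated by $I$ by the previous paragraph, contradicting $I\models L\sigma$. Once $\Y\sigma\subseteq\X$ is known, I set $\tau=\red{\sigma}{\X}$; closure of $\X$ under its own equations yields $\Y\tau\subseteq\X$, and $\red{x\tau}{\X}=x\tau$, so $\projectiontype{S}{\X}{\ref{projext:set}}$ contains a clause $\red{D\tau}{\X}\vee C'$ with $C'\in\{\Box,\ \pp\niseq\true\}$. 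Since $\pp\iseq\true\in\projectiontype{S}{\X}{\ref{projext:pp}}$, the literal $C'$ is discharged and $I\models\red{D\tau}{\X}$ in both cases; finally $\red{D\tau}{\X}=\red{D\sigma}{\X}$ and the congruences $a\iseq\red{a}{\X}$ give $I\models D\sigma$, as required.

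The step I expect to be the real obstacle is $\Y\sigma\subseteq\X$: it is precisely what fails for a non-complete \Aset, and it can only be obtained after the ground fragment of $\X$ has been shown to hold in $I$, which is why the auxiliary fact of the second paragraph has to be set up first. The remaining ingredients, namely the congruence step eliminating $\ff$, the idempotence of $\red{\cdot}{\X}$, and the elimination of the bookkeeping literal $\pp\niseq\true$, are routine once Definition~\ref{projext:def} is unfolded.
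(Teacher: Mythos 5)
Your proof is correct and follows essentially the same route as the paper's: reduce to $\projection{S}{\X}$ via Proposition~\ref{prop:projbis}, use completeness of $\X$ to force $\Y\sigma\subseteq\X$ (the paper phrases this contrapositively, showing $\projection{S}{\X}\models(\Y\sigma)^{\compl}$ when the inclusion fails), and then read off the type-\ref{projext:set} clause together with $\pp\iseq\true$. The only difference is that you re-derive the auxiliary fact $\projection{S}{\X}\models\X$ from the structure of the projection, where the paper simply invokes Proposition~\ref{prop:xcont}.
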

\begin{proof}
By Proposition \ref{prop:projbis} we have
$\projectionbis{S}{\X} \equiv \projection{S}{\X}$.
  Let $\ccl{C}{\Y}$ be a clause in $S$ and let $\sigma$ be a ground $\Y$-pure substitution. If $\Y\sigma \not \subseteq \X$, then there exists $l \in \Y\sigma$ such that $l \not \in \X$ and since $\X$ is complete we deduce that $l^c \in \X$, which entails that $\X \models \Y\sigma^c$, and thus $\projection{S}{\X} \models \Y\sigma^c$ (since by Proposition \ref{prop:xcont} $\projection{S}{\X} \models \X$).
  Otherwise, we must have $\red{C\sigma}{\X} \vee C' \in \projection{S}{\X}$, with $C' \subseteq \pp \niseq \trueform$ and since $\pp \iseq \trueform \in \projection{S}{\X}$ we deduce that $\projection{S}{\X} \models C\sigma$.
\end{proof}

\begin{newtheo}
\label{theo:compter}
Let $S$ be an \SPP-saturated set of {\Aclause}s. Then
$\abdcl{S} \models \imp{S}$.

\end{newtheo}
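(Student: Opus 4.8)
The plan is to argue by contradiction through the ground projection $\projectionbis{S}{\X}$, mirroring the completeness proof of Theorem \ref{theo:comp} but using the machinery available for the restricted calculus. Let $C\in\imp{S}$ and suppose $\abdcl{S}\not\models C$; fix an interpretation $M$ with $M\models\abdcl{S}$ and $M\not\models C$. First I would let $\X$ be the ground complete \Aset that records, for every ground \eflat{$\hypcst$} atom $A$, whether $M\models A$. By construction $\X$ is complete and satisfiable (it is validated by $M$), and since $M$ falsifies the ground \eflat{$\hypcst$} clause $C$ we have $C^{\compl}\subseteq\X$ and $M\models\X$.

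Next I would show that $\projectionbis{S}{\X}$ is unsatisfiable. By Proposition \ref{proj:projcomplete} (applicable since $\X$ is complete) we have $\projectionbis{S}{\X}\models S$, and by Propositions \ref{prop:projbis} and \ref{prop:xcont} we have $\projectionbis{S}{\X}\equiv\projection{S}{\X}\models\X\models C^{\compl}$. Hence $\projectionbis{S}{\X}\models S\cup C^{\compl}$, which is unsatisfiable because $S\models C$ and $C$ is not a tautology. Since $S$ is \SPP-saturated and $\X$ is complete and satisfiable, Lemma \ref{lem:saturatedbis} guarantees that $\projectionbis{S}{\X}$ is \Sup{\prec}{\selB}-saturated, so by Lemma \ref{lem:usualcomp} it must contain $\Box$.

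Then I would trace the origin of this empty clause. Inspecting the five types of Definition \ref{projext:def}, clauses of types \ref{projext:eq}--\ref{projext:pp} always retain a literal after the deletions defining $\projectionbis{S}{\X}$ (an $\ff$- or $\pp$-literal, a literal $c\iseq\red{c}{\X}$, or a predicate literal belonging to $\X$), so $\Box$ can only come from a clause of type \ref{projext:set}. Thus there are $\ccl{D}{\Y}\in S$ and a ground $\sigma$ with $\Y\sigma\subseteq\X$ such that every literal of $\red{D\sigma}{\X}$ is deleted, i.e.\ its complement lies in $\X$; in particular $\red{D\sigma}{\X}$ is \eflat{$\hypcst$} (so $D$ is too, by Proposition \ref{prop:flatred}) and $\X$ falsifies $D\sigma$.

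The heart of the proof, and the main obstacle, is that this $D$ need not be empty: unlike for $\projection{S}{\X}$, the deletions can collapse a nonempty clause to $\Box$. I would overcome this by a descent argument, a well-founded induction on the ground clause $\red{D\sigma}{\X}$ under $\succ$, proving that whenever $\ccl{D}{\Y}\in S$ has an \eflat{$\hypcst$} clausal part, $\Y\sigma\subseteq\X$, and $\X$ falsifies $D\sigma$, then $S$ contains some $\ccl{\Box}{\Y'}$ with $\Y'\tau\subseteq\X$ for a suitable ground $\tau$. If $D=\Box$ this is immediate. Otherwise the selected literal $l$ of $D$ is \eflat{$\hypcst$}, so the $\hypcst$-Assertion rule (positive or predicate literals) or the $\hypcst$-Reflection rule (negative elementary literals) applies within \SPP and yields $\ccl{D_1}{\Y_1}$ with $D_1=D\setminus\{l\}$ and $\Y_1=\Y\cup\{l^{\compl}\}$; one checks $\Y_1\sigma\subseteq\X$ and that $\ccl{D_1}{\Y_1}$ is not a tautology (a trivial or complementary reduced literal would contradict the falsification hypothesis). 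By \SPP-saturation $\ccl{D_1}{\Y_1}$ is \redundant in $S$, so Definition \ref{def:red_crit} supplies either a subsuming clause of $S$ or clauses $\ccl{D_i}{\Y_i}\in S$ with $\Y_i\sigma_i\subseteq\X$ and $D_i\sigma_i\prec D_1\sigma$ that, together with $\X$, entail $D_1\sigma$. In the first case the witness strictly decreases and remains \eflat{$\hypcst$}; in the second, since $\X$ falsifies $D_1\sigma$, the model $M$ must falsify some $D_j\sigma_j$, and the ordering assumptions ($f(\vec t)\succ a$ for $a\in\hypcst$) force $D_j\sigma_j$ to be \eflat{$\hypcst$} — a non-\eflat{$\hypcst$} literal would be strictly greater than every literal of the \eflat{$\hypcst$} clause $D_1\sigma$, contradicting $D_j\sigma_j\prec D_1\sigma$. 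Either way the induction hypothesis applies to a strictly smaller instance. Finally, from $\ccl{\Box}{\Y'}\in S$ with $\Y'\tau\subseteq\X$ I obtain $(\Y'\tau)^{\compl}\in\abdcl{S}$; but $M\models\X\supseteq\Y'\tau$ gives $M\not\models(\Y'\tau)^{\compl}$, contradicting $M\models\abdcl{S}$. This contradiction establishes $\abdcl{S}\models C$, hence $\abdcl{S}\models\imp{S}$. The degenerate case $\hypcst=\emptyset$, where the \eflat{$\hypcst$} literals are trivial, is handled separately.
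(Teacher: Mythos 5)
Your proposal is correct and follows essentially the same route as the paper's own proof: construct the complete, satisfiable \Aset $\X$ from a counter-model, show $\projectionbis{S}{\X}$ is unsatisfiable yet \Sup{\prec}{\selB}-saturated (Lemma \ref{lem:saturatedbis} together with Proposition \ref{proj:projcomplete}), extract $\Box$, and then descend via the $\hypcst$-Assertion and $\hypcst$-Reflection rules and \SPP-saturation to an \Aclause $\ccl{\Box}{\Y}$ with $\Y\sigma\subseteq\X$, contradicting $M\models\abdcl{S}$. The only differences are in the bookkeeping of the descent: the paper argues by minimality of the number of literals and invokes the subsumption case of Definition \ref{def:red_crit}, whereas you run a $\succ$-well-founded induction and additionally treat the entailment case of the redundancy criterion explicitly.
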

\begin{proof}
  \MnachowithAnswer{reformulation}{ok, changements pour coller au nouvel énoncé.}
  We prove the contrapositive, i.e., that every counter-model of
  $\imp{S}$ is a counter-model of $\abdcl{S}$.  Let $\M$ be a
  counter-model of $\imp{S}$ and let $\X$ be the corresponding \Aset,
  i.e.\ the set containing all {\eflat{$\hypcst$}} literals that are
  true in $\M$.  By definition, $\X$ is complete and satisfiable.
 By Proposition \ref{proj:projcomplete}, $\projectionbis{S}{\X} \models S \models \imp{S}$.
Since
  $\projectionbis{S}{\X} \models \X$ and $\X \cup \imp{S}$ is
  unsatisfiable, we deduce that $\projectionbis{S}{\X}$ is
  unsatisfiable; but $\projectionbis{S}{\X}$ is
  \Sup{\prec}{\selB}-saturated by Lemma \ref{lem:saturatedbis}, hence $\Box \in \projectionbis{S}{\X}$. We deduce that $S$
  contains an \Aclause of the form $\ccl{C}{\Y}$ and there exists a
  substitution $\sigma$ such that $\red{C^{\compl}\sigma}{\X} \cup
  \Y\sigma \subseteq \X$. Without loss of generality, we assume that
  $C$ is the  clause with the least number of literals
  satisfying this property. Assume that $C$ is nonempty. Then
  $\selB(\red{C\sigma}{\X})$ contains at least one literal $\red{(u
    \bowtie v)\sigma}{\X}$ and $C$ is of the form $u \bowtie v \vee
  D$.  If $\bowtie$ is $\iseq$, then the $\hypcst$-Assertion rule
  can be applied to this literal, yielding the \Aclause $\ccl{D}{\Y \cup \{
    u \not \iseq v \}}$. Since $S$ is \SPP-saturated, this \Aclause
  must be \redundant and by Definition \ref{def:red_crit}, $S$ contains an
  \Aclause $\ccl{D'}{\Y'}$, such that, for some substitution $\theta$, $D'\theta \subseteq D\sigma$ and
  $(\Y'\cup \set{u\niseq v})\theta \subseteq \Y\sigma$ (note that $D\sigma$ cannot be a
  tautology because $D^{\compl}\sigma \subseteq \X$ and $\X$
  is satisfiable).  This is impossible because then $\ccl{D'}{\Y'}$
  would then satisfy the above restriction, thus contradicting the
  minimality of $C$.  If $\bowtie$ is $\niseq$ then  $\red{(u
  \iseq v)\sigma}{\X}$ must occur in $\X$ since $\red{C^{\compl}\sigma}{\X} \subseteq \X$; this implies that $\red{u\sigma}{\X}
  = \red{v\sigma}{\X}$, hence that $u\sigma\almosteq{\X} v\sigma$. Thus the
  $\hypcst$-Reflection rule applies, yielding $\ccl{D}{\Y \cup
    \E}\eta$, where $(\eta,\E)$ is the most general unifier of $u$ and
  $v$.  There exists a substitution $\sigma'$ such
  that $\sigma \almosteq{\X} \eta\sigma'$, and by the same reasoning
  as previously, since $S$ is \SPP-saturated, it contains an \Aclause
  $\ccl{D'}{\Y'}$ and there exists a substitution $\theta'$ such that $D'\theta' \subseteq D\eta\sigma'$ and $\Y'\theta' \subseteq \Y\eta\sigma' \cup \E$.
  But then $(\red{D'\theta'}{\X})^\compl \subseteq (\red{D\eta\sigma'}{\X})^\compl = (\red{D\sigma}{\X})^\compl \subseteq \X$, and $\red{\Y'\theta'}{\X} \subseteq \red{(\Y\eta\sigma'\cup \E)}{\X} = \red{(\Y\sigma \cup \E)}{\X} \subseteq \X$.  Again, this contradicts the minimality of
  $C$.  Therefore, $C$ is empty, and  $(\Y\sigma)^{\compl} \in
  \abdcl{S}$. Now $\Y\sigma\subseteq \X$, thus $\M \not\models (\Y\sigma)^\compl$, which proves that  $\M$ is indeed a counter-model of $\abdcl{S}$, and the proof is completed.
\end{proof}

The difference between the calculi \SP and \SPP can be summarized as follows.
\begin{itemize}
\item{The calculus \SP explicitly generates all prime implicates in $\imp{S}$, whereas \SPP only generates a finite representation of them, in the form of an {\eflat{$\hypcst$}} implicant $S'$ of $\imp{S}$. The formula $S'$ can still contain redundancies and some additional post-processing step is required to generate explicitly the prime implicates of $S'$ if needed. Any algorithm for generating prime implicates of propositional clause sets can be used for this purpose, since flat ground equational clause sets can be reduced into equivalent sets of propositional clauses by adding equality axioms. In \citep{EPT13} a much more efficient algorithm has been proposed, in which equality axioms are directly taken into account in the inference engine and redundancy pruning mechanism. From a practical point of view, the set $\imp{S}$ can be very large, thus $S'$ can also be viewed as a concise and suitable representation of such a set.
    }
\item{The calculus \SPP restricts inferences on {\eflat{$\hypcst$}} literals to those that actually delete such literals, possibly by transferring them to the constraint part of the clauses (the $\hypcst$-Assertion and $\hypcst$-Reflection rules). From a practical point of view, this entails that these literals do not need to be considered anymore in the clausal part of the {\Aclause}: they can be transferred {\em systematically} in the constraints. This can reduce the number of generated clauses by an exponential factor, since a given {\eflat{$\hypcst$}} clause $l_1 \vee \ldots \vee l_n$ can be in principle represented by $2^n$ distinct {\Aclause}s depending on whether $l_i$ is stored to the clausal or constraint part of the \Aclause (for instance $a \iseq b$ can be represented as $\ccl{a \iseq b}{\emptyset}$ or $\ccl{\Box}{a \not \iseq b}$). Furthermore, the number of applicable inferences is also drastically reduced, since the rules usually apply in many different ways on (selected) $\hypcst$-literals, due to the fact that two {\eflat{$\hypcst$}} terms are always \Aunifiable and that the ordering $\alwayssgreater{}$ is empty when applied on terms in $\hypcst$. For example the clauses
    $a \iseq b$ and $c \iseq d$ generate the {\Aclause}s
    \[\ccl{d \iseq b}{\{ a \iseq c \}},\
    \ccl{d \iseq a}{\{ b \iseq c \}},\
    \ccl{c \iseq b}{\{ a \iseq d \}},\
    \ccl{c \iseq a}{\{ b \iseq d \}},\] regardless of the ordering $\prec$.
    }
\end{itemize}

The following example illustrates the differences between \SP and \SPP.

\begin{newex}
Let $S = \{ f(a,b) \niseq f(c,d), g(x) \iseq 0 \vee x \iseq c, g(a) \niseq 0 \}$, where $x \in \V$, $g(x) \succ a \succ b \succ c \succ d$ and $\hypcst = \{ a,b,c,d \}$.
It is easy to check that
\SPP generates the implicates $\ccl{\Box}{ \{ a \iseq c, b \iseq d \}}$ (by $\hypcst$-Reflection on the first clause) and $\ccl{\Box}{\{ a \niseq c \}}$ (by an application of the $\hypcst$-Superposition rule from the second clause into the third one, followed by an application of the $\hypcst$-Assertion rule).
However, the implicate $\ccl{\Box}{\{ b \iseq d \}}$ that is a logical consequence of the above {\Aclause}s is not generated.
In contrast, it is possible to infer this implicate with \SP: First the $\hypcst$-Superposition rule generates as usual the clauses $a \iseq c$ and then $f(c,b) \niseq f(c,d)$ (the constraints are empty at this point since all the considered {\Aunifier}s are standard unifiers), and $\ccl{\Box}{\{ b \iseq d \}}$ is inferred by applying $\hypcst$-Reflection on the latter clause.
Note that \SP has a larger search space than \SPP. Consider for instance a clause $a \iseq b \vee c \iseq d$. \SPP simply reduces this clause into
$\ccl{\Box}{ \{ a \niseq b, c \niseq d \}}$ and no further inference is applicable on it, while
\SP also generates the {\Aclause}s
$\ccl{a \iseq b}{\{ c \niseq d \}}$ and
$\ccl{c \iseq d}{\{ a \niseq b \}}$, which in turn possibly enable other inferences.
\end{newex}
It is possible to combine the two calculi \SP and \SPP. This can be done as follows.
\begin{itemize}
\item{Starting from a set of clauses $S$, \SPP is first applied until saturation, yielding a new set $S'$. By Theorem \ref{theo:compter} we have $\abdcl{S'} \equiv \imp{S}$.}
\item{Then \SPR{\proper} is applied on $\abdcl{S'}$ until saturation yielding a set $S''$, where $\proper$ denotes the set of clauses that logically entail at least one clause in $\abdcl{S'}$. It is clear that this set of clauses is \nice, hence by Theorem \ref{theo:compbis}, we eventually obtain a set of clauses $\abdcl{S''} \dominates \imp{\abdcl{S'}} \cap \proper$.
    But $\imp{\abdcl{S'}}\cap \proper \dominates \abdcl{S'}$, hence $\abdcl{S''} \dominates \abdcl{S'}$, and $\abdcl{S''} \equiv \imp{S}$. The set of clauses $\abdcl{S''}$ can therefore be considered as a concise representation of $\imp{S}$. This approach is appealing since $\abdcl{S''}$ is in generally much smaller than $\imp{S}$, and contrary to $\abdcl{S'}$, this set is free of redundancies.}
\end{itemize}
Another straightforward method to eliminate redundant literals from the clauses in $\abdcl{S'}$  without having to explicitly compute the set $\imp{S'}$ is to test, for every clause $l \vee C \in \abdcl{S'}$, whether the relation $\abdcl{S'} \models C$, holds, in which case the literal $l$ can be safely removed. The test  can be performed by using any decision procedure for ground equational logic \citep[see for instance][for a similar approach]{Meir:2005:YDP:2153230.2153271,DBLP:conf/sas/DilligDA10}.
Note however that removing redundant literals is not sufficient to obtain prime implicates, as shown in the following example.
\begin{newex}
\label{ex:paar}
Consider the clause set: $S \isdef \{ a \not \iseq c \vee b \not \iseq c \vee d \iseq e,\ a \iseq c \vee a \iseq f,\ b \iseq c \vee a \iseq f,\ f \not \iseq b \}$. It is easy to check that $a \niseq b \vee d \iseq e$ is an implicate of $S$ and that this clause is strictly more general than $a \not \iseq c \vee b \not \iseq c \vee d \iseq e$. The calculus \SP computes the \Aclause
$\ccl{\Box}{ \{a \iseq b, d \iseq e \}}$, yielding the
set of prime implicates:
$S' \isdef \{ a \not \iseq b \vee d \iseq e, a \iseq c \vee a \iseq f, b \iseq c \vee a \iseq f, f \not \iseq b \}$. $S'$ is equivalent to $S$ and strictly smaller. In contrast, the approach devised by \cite{DBLP:conf/sas/DilligDA10} cannot simplify $S$ since there is no useless literal.
\end{newex}

\section{Termination}

\newcommand{\stronglysgreater}{\vartriangleright_{\hypcst}}
\newcommand{\stronglygreater}{\trianglerighteq_{\hypcst}}
\newcommand{\stronglyssmaller}{\vartriangleleft_{\hypcst}}
\newcommand{\stronglysmaller}{\trianglelefteq_{\hypcst}}

\label{sect:term}

\newcommand{\projectionter}[2]{\Gamma(#1,#2)}
\newcommand{\ords}{\preccurlyeq}
\newcommand{\strsel}{\sel_{\hypcst}}

\newcommand{\stronglyredundant}{strongly redundant\xspace}

We relate the termination behavior of \SP to that of the usual Superposition calculus.
We first introduce restricted ordering and redundancy criteria.
For all expressions (terms, atoms, literals or clauses) $t$ and $s$, we write
$t \stronglysgreater s$ if $t' \succ s'$ holds for all expressions $t',s'$ such that
$t \equivto t'$ and $s \equivto s'$. Note that the ordering $\stronglysgreater$ is stronger
than $\alwaysgreater{}$ (and also stronger than $\succeq$) because the constants \MnachowithAnswer{constants?}{oui} in $t$ and $s$ can be rewritten independently of each other.
Assume for instance that $\prec$ is such that $f(a) \prec g(a) \prec f(b) \prec g(b)$ with $\hypcst = \{ a,b \}$.
Then it is easy to check that $g(a) \alwaysgreater{} f(a)$ but $g(a) \not \stronglysgreater f(a)$ since $g(a) \prec f(b) \equivto f(a)$.
Also, let $\strsel$ be the selection function defined from the function $\sel$ as follows: for every clause $l \vee C$,
$l \in \strsel(l \vee C)$ if there exists $l',C'$ such that $l' \equivto l$, $C' \equivto C$ and
$l' \in \sel(l' \vee C')$.
We show that most termination results for the calculus
\Sup{\stronglyssmaller}{\strsel} also apply to \SP. To this purpose, we consider a restricted form of redundancy testing.
\begin{newdef}

 A standard clause $C$ is {\em \stronglyredundant} in a set of standard clauses $S$ iff for every clause $C' \equivto C$, $C'$ is \redundant in $S$.
\end{newdef}

\begin{newdef}
For every set of {\Aclause}s $S$ and for every ground \Aset $\Y$,
we denote by $\projectionter{S}{\Y}$ the set of standard clauses
 $C\sigma$, where $\ccl{C}{\X} \in S$ and $\sigma$ is  an $\X$-pure substitution of
 domain $\var(\X)$ such that $\red{\sigma}{\Y} =\sigma$ and $\X\sigma \subseteq \Y$.
\end{newdef}

The definition of $\projectionter{S}{\Y}$ is similar to that of
$\projection{S}{\Y}$ (see Section \ref{sect:comp}), except that: (i) only the variables occurring in $\X$ are instantiated; (ii) the clauses are not reduced with respect to the equations in the constraint part (but the constants replacing the variables in $\X$ are reduced).
\begin{newex}
Let $S = \{ \ccl{f(x,y) \iseq a}{\{ x \niseq b \} } \}$ with $\hypcst = \{ a,b,c \}$ and $a \succ b \succ c$.
We have $\projectionter{S}{\{ a \niseq b, a \iseq c \}} = \{ f(c,y) \iseq a \}$ and
$\projectionter{S}{\{ a \niseq b, c \niseq b \}} = \{ f(a,y) \iseq a, f(c,y) \iseq a \}$.
\end{newex}

 \begin{newlem}
\label{lem:term}
Let $S$ be a set of {\Aclause}s, $E$ be an \Aclause and $\U$ be a ground \Aset.
\begin{itemize}
\item{If $E$ can be deduced from $S$ by \SP, then every clause  in $\projectionter{E}{\U}$
can be deduced from $\projectionter{S}{\U} \cup \U$
by \Sup{\stronglyssmaller{}}{\strsel}.}
\item{If $S$ is a set of standard clauses and $\projectionter{E}{\U}$  contains a clause that is \stronglyredundant in $\projectionter{S}{\U}$ then $E$ is \redundant in $S$.}
\end{itemize}
\end{newlem}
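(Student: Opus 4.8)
The plan is to prove the two items independently. For the first it suffices to treat a single application of each inference rule of \SP, the general case following by induction on the length of the derivation. The uniform idea is that $\projectionter{\cdot}{\U}$ instantiates only the constraint variables and leaves the clause part \emph{unreduced}, so that the ground abducible equations $\E$ produced by each \Aunification, being part of the constraint of the conclusion, are forced to lie inside $\U$ whenever a clause of $\projectionter{E}{\U}$ exists at all. These equations therefore appear as unit (dis)equations of $\U$ and can be used as ordinary premises in \Sup{\stronglyssmaller}{\strsel}.

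Concretely, for $\hypcst$-Superposition I would first perform the standard superposition steps against the equations of $\E\subseteq\U$ that rewrite the projected \emph{from}- and \emph{into}-premises until $u$ and $t|_p$ become syntactically unifiable, and then a single ordinary superposition step yielding the projection of the conclusion; $\hypcst$-Reflection and Equational $\hypcst$-Factorization are handled the same way, closing with a standard Reflection (resp.\ Factorization). For $\hypcst$-Assertion the asserted literal becomes a constraint literal, hence the matching (dis)equation lies in $\U$; one superposition of the projected premise's literal into that $\U$-unit — its most general unifier instantiating the asserted variable — followed by Reflection deletes the literal and produces the projected conclusion. $\hypcst$-Substitutivity is simulated by rewriting, one argument at a time, the predicate fact $p(s_1,\dots,s_n)\bowtie\true\in\U$ using the projected premises $t_i\iseq s_i\vee C_i$. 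In each case the ordering conditions transfer because $\stronglysgreater$ is stronger than $\alwaysgreater{}$: every $\not\alwaysgreater{}$ hypothesis of an \SP-rule yields the required $\not\stronglysgreater$ condition on the projected terms, since $\stronglysgreater$ quantifies over exactly the independent abducible renamings that witness the failure of $\alwaysgreater{}$; and the selection conditions transfer because $\strsel$ is, by construction, the $\equivto$-robust version of $\sel$, so a literal selected in a premise remains selected in its projection.

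The delicate point — and the main obstacle of the first item — is to verify that every bridging step is a \emph{genuine} ordered inference of \Sup{\stronglyssmaller}{\strsel} (correct orientation, eligible/maximal literal), not merely a semantic consequence. This is precisely where the robustness of $\stronglysgreater$ and $\strsel$ under independent renaming of abducibles is indispensable, together with the fact that $\stronglysgreater$ is empty on $\hypcst$, which frees the orientation of the Assertion and Substitutivity steps (where abducible arguments must be rewritten in the ``building-up'' direction) and keeps the maximality bookkeeping consistent after the identifications recorded in $\E$.

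For the second item, since $S$ consists of standard clauses every constraint is empty, the projecting substitution has empty domain, and hence $\projectionter{S}{\U}=S$; moreover on standard clauses and standard clause sets $\hypcst$-redundancy collapses to the ordinary redundancy criterion. Writing $E=\ccl{C}{\X}$ and letting $C\sigma\in\projectionter{E}{\U}$ be the strongly redundant clause, I would verify Definition~\ref{def:red_crit} for $E$ directly: given an arbitrary ground $\theta$ with $\X\theta$ satisfiable, I identify the renaming $C'\equivto C\sigma$ induced by the abducible identifications that $\X\theta$ forces, invoke strong redundancy to obtain clauses of $S$ lying strictly below $C'$ that entail it, and reduce them to the instance $C\theta$ so as to produce the witnesses demanded by the criterion. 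The $\alwaysgreater{}$-comparison required by the definition is supplied by the strong redundancy, which guarantees the comparison for every abducible renaming at once. The obstacle here is to reconcile the two universal quantifications — the ``for every ground $\theta$'' of $\hypcst$-redundancy against the ``for every $C'\equivto C\sigma$'' built into strong redundancy — and, in particular, to upgrade the plain $\succ$-comparison coming from standard redundancy to the renaming-robust $\alwaysgreater{}$-comparison that Definition~\ref{def:red_crit} demands.
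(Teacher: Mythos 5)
Your plan coincides with the paper's own proof: for the first item the paper likewise rewrites the projected premises with the unit equations of $\U$ (which contain the \Aset $\E$ of the \Aunifier, since the constraint of the conclusion must lie in $\U$) until the relevant terms become syntactically unifiable and then performs a single standard inference, using exactly the $\equivto$-robustness of $\stronglysgreater$ and $\strsel$ to transfer the ordering and selection conditions (this is the content of Propositions \ref{prop:redunif}, \ref{prop:redsel} and \ref{prop:unifsig}); for the second item it likewise uses $\projectionter{S}{\U}=S$ for standard $S$ together with the quantification over $\equivto$-variants built into strong redundancy. The only caveat is that your final ``obstacle'' is a phantom: strong redundancy is defined directly in terms of $\hypcst$-redundancy (not standard redundancy), so the $\alwaysgreater{}$-comparison required by Definition \ref{def:red_crit} is already supplied by the hypothesis and no upgrade from $\succ$ is needed.
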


\newcommand{\fstsub}{\gamma}
\newcommand{\smgu}{\nu}

\begin{proof}
See Appendix \ref{ap:term}.
\end{proof}

We denote by $\U_{\hypcst}$ the set of all unit clauses of the form $p(a_1,\dots,a_n) \bowtie \true$ or $a \bowtie b$, with $a_1,\dots,a_n,a,b \in \hypcst$.
For any set of clauses $S$, we denote by $S^{\star}$ the set of clauses inductively defined as follows.
\begin{itemize}
\item{$S \subseteq S^{\star}$.}
\item{If $C$ is not \stronglyredundant in $S$ and is deducible from $S^{\star} \cup \U_{\hypcst}$ by applying the rules in \Sup{\stronglyssmaller}{\strsel} (in one step), then $C \in S^{\star}$.}
\end{itemize}
Lemma \ref{lem:term} immediately entails the following:
\begin{newcor}
\label{cor:term}
Let $S$ be a set of clauses.
If $S^\star$ is finite then \SP terminates on $S$ (up to redundancy).
\end{newcor}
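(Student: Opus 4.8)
The plan is to show that, when $S^\star$ is finite, \SP can generate only finitely many non-\redundant {\Aclause}s up to logical equivalence, which is precisely what ``terminates up to redundancy'' means. To this end I associate with every generated \Aclause $E = \ccl{C}{\X}$ its set of \emph{satisfiable ground instances}
\[\Theta(E) \isdef \setof{\ccl{C\sigma}{\X\sigma}}{\sigma \colon \var(\X) \to \hypcst \text{ and } \X\sigma \text{ is satisfiable}},\]
and show that each component clause lives in the finite set $S^\star$ while each component constraint is a ground {\Aset} over $\hypcst$. Since $\hypcst$ and the signature are finite, there are only finitely many ground {\Aset}s, so $\Theta$ takes values in a fixed finite family of finite sets of {\Aclause}s.

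The core step uses Lemma \ref{lem:term}. Fix a non-\redundant $E = \ccl{C}{\X}$ generated by \SP and a $\sigma$ as above. Choosing a complete satisfiable ground {\Aset} $\U$ with $\X\sigma \subseteq \U$ and $\red{\sigma}{\U}=\sigma$, we get $C\sigma \in \projectionter{E}{\U}$. By the first item of Lemma \ref{lem:term}, $C\sigma$ is derivable from $\projectionter{S}{\U} \cup \U$ by \Sup{\stronglyssmaller}{\strsel}; since $S$ consists of standard clauses, $\projectionter{S}{\U} = S$, and the unit clauses of $\U$ all belong to $\U_{\hypcst}$, so $C\sigma$ is derivable from $S \cup \U_{\hypcst}$. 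By the contrapositive of the second item of Lemma \ref{lem:term} (again with $\projectionter{S}{\U}=S$), the non-redundancy of $E$ forces $C\sigma$ to be non-\stronglyredundant in $S$. A routine induction on the length of the derivation then places $C\sigma$ in $S^\star$, directly from the inductive definition of $S^\star$.

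It remains to conclude. Every generated non-\redundant \Aclause $E$ is logically equivalent to $\Theta(E)$ read as a conjunction: grounding the constraint variables over the finite set $\hypcst$ does not change the models, and the instances discarded because $\X\sigma$ is unsatisfiable are tautologies. Hence $\Theta$ is injective modulo equivalence. Its codomain being finite (each $\ccl{C\sigma}{\X\sigma}$ has $C\sigma$ in the finite set $S^\star$ and $\X\sigma$ among the finitely many ground {\Aset}s), only finitely many equivalence classes of non-\redundant {\Aclause}s arise, which is the claim.

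The delicate point, and the one I expect to be the main obstacle, is the induction placing $C\sigma$ in $S^\star$: it rests on the standard saturation-up-to-redundancy argument that an inference whose premises are \stronglyredundant in $S$ can be bypassed, since such premises are entailed by strictly smaller clauses and are therefore never needed to obtain a non-\stronglyredundant conclusion. This is exactly what the stronger ordering $\stronglyssmaller$ and the matching notion of \stronglyredundant clause are designed to support; everything else is bookkeeping enabled by the finiteness of $\hypcst$.
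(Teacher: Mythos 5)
Your proof follows exactly the route the paper intends: the paper derives this corollary in a single line as an ``immediate'' consequence of Lemma \ref{lem:term}, and your argument---mapping each non-\redundant derived \Aclause, via its satisfiable ground constraint instances, into the finite set $S^\star$ paired with the finitely many ground {\Aset}s over the finite set $\hypcst$---is precisely the elaboration of that one-liner, using item one of the lemma for derivability in $S^\star$ and the contrapositive of item two for non-strong-redundancy. So this is essentially the same approach as the paper, just spelled out in detail.
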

In order to prove that \SP terminates on some class of clause sets ${\frak S}$, it suffices to prove that $S^{\star}$ is finite, for every $S \in {\frak S}$.
The calculus \Sup{\stronglyssmaller{}}{\sel} is slightly less restrictive than the usual Superposition calculus \Sup{\prec}{\sel}, since $\stronglyssmaller{}$ is a stronger relation than $\prec$.
However, most of the usual termination results for the Superposition calculus still hold
for \Sup{\stronglyssmaller{}}{\sel}, because they are closed under the addition of equalities between constants
and do not depend on the order of $\almosteq{}$-equivalent terms.
Similarly, redundancy testing is usually restricted to subsumption and tautology detection.
In particular, all the termination results described by \citet*{ARR03} are preserved
(it is easy to check that $S^{\star}$ is finite for the considered sets of axioms).

An interesting continuation of the present work would be to devise  formal (automated) proofs of the termination of \SP on the usual theories of interest in program verification, enriched by arbitrary ground clauses.
This could be
done by using
existing schematic calculi \citep[see, e.g.,][]{LynchM-LICS02,Lynch20111026,DBLP:conf/rta/TushkanovaRGK13} to compute a symbolic
representation of the set of {\Aclause}s $S^\star$.

\section{Conclusion and Discussion}


Although the Superposition calculus is not deductive-complete in general, we have shown
that it can be adapted in order to make it able to generate all implicates defined over a given {\em finite} set of {\em ground} terms denoted by constant symbols, using a finite set of predicate symbols including the equality predicate. Furthermore, this is done in such a way that the usual termination properties of the calculus are preserved. By duality, the procedure can be used to generate abductive explanations of first-order formul{\ae}.

A major restriction of our approach is that it cannot handle built-in theories such as arithmetics which play an essential rôle in verification. Axiomatizing these theories in first-order logic is infeasible or inefficient. A natural follow-up of this work is therefore to make the procedure  able to cooperate with external decision procedures. This can be done for instance by combining our approach with existing techniques for fusing the Superposition calculus and external reasoning tools \citep{bachmair1994refutational,AlthausKW09,DBLP:conf/cade/BaumgartnerW13}. These techniques, based on the use of constrained Superposition together with an abstraction of the terms of the considered theory, should be easy to combine with $\hypcst$-Superposition.
Note that our calculus has many commun points with the above-mentioned constrained Superposition calculi, however in our case the constraint and clausal parts are not defined over disjoint signatures: in contrast the \Aunification and Assertion rules allow one to transfer literals from the clausal part to the constraints. In other approaches \citep{bachmair1994refutational,AlthausKW09,DBLP:conf/cade/BaumgartnerW13} the constraints are used to store formul{\ae} that cannot be handled by the Superposition calculus, whereas in our case they are used to store properties that are {\em asserted} instead of being proved.

Another obvious drawback with the calculi \SP and \SPP is that the user has to explicitly declare the set of abducible terms (i.e., the constants in $\hypcst$). This set must be finite and must contain all \MnachowithAnswer{all?}{oui} built-in constants. Note that, thanks to the results in Section \ref{sect:ref}, unsatisfiable or irrelevant implicates (such as
$0 \iseq 1$) can be easily detected and discarded on the fly during proof search.
Handling infinite (but recursive) sets of terms is possible from a theoretical point of view: it suffices to add an inference rule generating clauses of the form $a \iseq t$, where $t$ is an abducible ground terms and $a$ is a fresh abducible constant symbol. It is easy to see that completeness is preserved, but of course termination is lost. A way to recover termination  is to develop additional techniques to restrict the application of this rule by selecting the terms $t$. This could be done either statically, from the initial set of clauses, or dynamically, from the information deduced during proof search.

 Another possible extension would be to generate ``mixed'' implicates, containing both abducible and non-abducible terms, which would avoid having to declare built-in constants as abducible.
An alternative approach consists in avoiding to have to explicitly declare abducible terms, by
adding rules for generating them symbolically (as the $\hypcst$-Substitutivity rule does for predicate symbols). For termination, additional conditions should be added to ensure that the set of abducible terms is finite (using, e.g., sort constraints).


Another restriction is that our method does not handle non-ground abducible terms, hence cannot generate quantified formul{\ae}. We are now investigating these issues.


\bibliography{biblio,Nicolas.Peltier}
\bibliographystyle{spbasic}

\appendix

\nikonew{j'ai mis l'appendix en normalsize à voir par la suite}

{\normalsize

\section{$\hypcst$-Unification}

\label{ap:unif}

\begin{newdef}
An {\em \Aunification problem} is either $\bot$  or a triple $(S,\theta,\X)$ where $S$ is a  set of equations, $\theta$ is a substitution such that $x\theta = x$ for every variable $x$ occurring in $S$ and $\X$ is a positive \Aset.
A pair $(\sigma,\X)$ is a \emph{solution} of an \Aunification problem $\P = (S,\theta,\X')$ iff
$(\theta,\X') \moregeneral (\sigma,\X)$ and $(\sigma,\X)$ is an \Aunifier of every $t \iseq s \in S$.
An \Aunification problem is {\em satisfiable} if it has a solution.
\end{newdef}
The set of {\em \Aunification rules} is the set of rules depicted in Figure \ref{unifrules}. They are almost identical to the standard unification rules, except that equations of the form $a \iseq b$ where $a \not = b$ do not lead to failure but are instead stored in $\X$.
We assume that the rules are applied in the specified order, i.e., a rule applies only if the previous rules do not apply. Note that, following our convention, $\X \cup \{ a \iseq b \}$ actually denotes the smallest \Aset containing $\X$ and $a \iseq b$ (obtained by transitive closure from $\X \cup \{ a \iseq b \}$).
\begin{figure}
{\small \begin{tabular}{llll}
(T) & $(t \iseq t \cup S,\theta,\X)$ & $\rightarrow$ & $(S,\theta,\X)$ \\
(E) & $(a \iseq b \cup S, \theta,\X)$ & $\rightarrow$ &
$(S,\theta,\X \cup \{ a \iseq b \})$ \, if $a,b\in \hypcst$ \\
(C) & $(\{ f(t_1,\ldots,t_n) \iseq g(s_1,\ldots,s_m) \} \cup S, \theta,\X)$ & $\rightarrow$ &
$\bot$ \, if $f \not = g$ \\
(O) & $(\{ x \iseq t[x]_p \} \cup S,\theta,\X)$ & $\rightarrow$ & $\bot$  \\
(R) & $(\{ x \iseq t \} \cup S,\theta,\X)$ & $\rightarrow$ & $(S\{x \mapsto t \}, \theta \cup \{ x \mapsto t \},\X)$ \\
(D) & $(\{ f(t_1,\ldots,t_n) \iseq f(s_1,\ldots,s_n) \} \cup S, \theta,\X)$ & $\rightarrow$ &
$(\bigcup_{i=1}^n \{ t_i \iseq s_i \} \cup S,\theta,\X)$ \\
\end{tabular}}
\caption{\Aunification rules\label{unifrules}}
\end{figure}
\begin{newlem}
The  \Aunification rules preserve the set of solutions of the considered problem.
\end{newlem}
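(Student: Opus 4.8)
The plan is to prove the invariant rule by rule: for each transformation $\P \to \P'$ I show that a pair $(\sigma,\Y)$ is a solution of $\P$ if and only if it is a solution of $\P'$, where $\bot$ is understood to have the empty set of solutions. Two elementary facts about the reduction $\red{\cdot}{\Y}$ underlie every case. First, $\almosteq{\Y}$ is a congruence that only renames leaf constants of $\hypcst$ and leaves the whole term structure, and in particular every head symbol outside $\hypcst$, intact; thus for a compound term $\red{f(t_1,\dots,t_n)}{\Y} = f(\red{t_1}{\Y},\dots,\red{t_n}{\Y})$. Second, for $a,b \in \hypcst$ one has $a \almosteq{\Y} b$ iff $a \iseq b \in \Y$, because $\iseq$ restricted to $\hypcst$ is an equivalence relation in every \Aset. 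I will also use repeatedly that constants are invariant under substitutions.

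Rules (T), (D), (C) and (O) are then immediate. In (T) the deleted equation $t \iseq t$ is unified by every \Asubstitution, so it constrains nothing. In (D), head preservation turns $\red{f(\vec t)\sigma}{\Y} = \red{f(\vec s)\sigma}{\Y}$ into the conjunction of the $\red{t_i\sigma}{\Y} = \red{s_i\sigma}{\Y}$, so $(\sigma,\Y)$ unifies $f(\vec t) \iseq f(\vec s)$ iff it unifies all the $t_i \iseq s_i$. For (C) and (O) I show the source problem has no solution (matching the empty solution set of $\bot$): in (C), since (T) and (E) are tried first, at least one of $f$ and $g$ is not a constant of $\hypcst$, so the two heads stay distinct under $\red{\cdot}{\Y}$; in (O), $x$ occurs in $t$ at a non-empty position while $t \neq x$ (the case $t = x$ being handled by (T)), so $\red{x\sigma}{\Y}$ is a proper subterm of $\red{t\sigma}{\Y}$, contradicting any would-be unifier.

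For (E) the equation $a \iseq b$ with $a,b \in \hypcst$ is moved into the constraint. By the second fact, $(\sigma,\Y)$ unifies $a \iseq b$ exactly when $a \iseq b \in \Y$; since $\Y$ is a closed \Aset, the conjunction of $\X \subseteq \Y$ and $a \iseq b \in \Y$ is equivalent to $\X \cup \{a \iseq b\} \subseteq \Y$. As the witness for $\moregeneral$ and the unification of the remaining equations in $S$ are untouched, the source and target problems have the same solutions.

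The delicate case, and the one I expect to be the main obstacle, is the variable-elimination rule (R), where $x$ is eliminated through $\mu \isdef \{x \mapsto t\}$ (note $x \notin \var(t)$, as (O) has priority). The core is the substitution lemma: whenever $x\tau \almosteq{\Y} t\tau$ for a substitution $\tau$, one has $w\tau \almosteq{\Y} (w\mu)\tau$ for every term $w$, proved by induction on $w$ using the congruence of $\almosteq{\Y}$. Applied with $\tau = \sigma$, this shows that, under $x\sigma \almosteq{\Y} t\sigma$, unifying $S$ and unifying $S\mu$ are equivalent. The subtle point is the accumulated substitution: in the target the binding $x \mapsto t$ is recorded, so that the new first component maps $x$ to $t$, and the condition $x\sigma \almosteq{\Y} t\sigma$ need no longer appear as an explicit equation — it is forced automatically by the $\moregeneral$ clause at the variable $x$. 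Concretely, from a witness $\rho$ for the source I would first derive $x\rho \almosteq{\Y} t\rho$ (using $x\theta = x$ and $x'\theta = x'$ for $x' \in \var(t)$, both guaranteed by the problem invariant, together with the congruence of $\almosteq{\Y}$), and then transport it across $\mu$ by the substitution lemma to obtain a witness for the target; conversely, composing the target witness with $\mu$ yields a witness for the source and recovers $x\sigma \almosteq{\Y} t\sigma$ from the recorded binding. Checking that these transported witnesses satisfy the defining clauses of $\moregeneral$ simultaneously at every variable, all modulo $\almosteq{\Y}$ rather than syntactic equality, is the technical heart of the argument.
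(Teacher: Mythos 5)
Your proof is correct and follows essentially the same route as the paper's: a rule-by-rule verification using the facts that $\red{\cdot}{\Y}$ is a leaf-renaming congruence, that $a \almosteq{\Y} b$ iff $a \iseq b \in \Y$ for $a,b \in \hypcst$, and that the rule ordering forces failure in cases (C) and (O). Your treatment of (R) is a more detailed elaboration of the same substitution argument the paper states tersely.
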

\begin{proof}
The proof is by an easy inspection of each rule (see Figure \ref{unifrules} for the notations):
\begin{itemize}
\item[(T)]{We have $t\sigma \almosteq{\X'} t\sigma$, for all $t,\sigma,\X'$; hence removing the equation $t \iseq t$ does not affect the set of solutions.}
\item[(E)]{Since (T) is not applicable, $a$ and $b$ are distinct. Thus
    we have $a\sigma \almosteq{\X'} b\sigma$ iff $a \iseq b \in
    \X'$. Consequently, adding the equation $a \iseq b$ to the last
    component of the problem does not affect the set of
    solutions. Furthermore, every \Asubstitution $(\sigma,\X')$ such that
    $\X \cup \{ a \iseq b \} \subseteq \X'$ is an \Aunifier of $a$ and
    $b$, thus the equation can be removed from the first component of
    the problem once it has been added to the last component.}
\item[(C)]{Due to the ordering of the rules, $f$ and $g$ must be
    distinct and cannot both occur in $\hypcst$, since otherwise, one
    of $(T)$ or $(E)$ would apply first. Thus, the problem has no
    solution, since by definition of the relation $\almosteq{\X'}$, we
    have $f(t_1,\ldots,t_n)\sigma \not \almosteq{\X'}
    g(s_1,\ldots,s_m)\sigma$, for all $\sigma,\X'$.}
\item[(O)]{Due to the ordering of the rules, $p$ cannot be empty,
    since otherwise $(T)$ would apply first, thus $x\sigma$ contains
    strictly less positions than $t\sigma$. Hence $x\sigma
    \not \almosteq{\X} t\sigma$, for all $\sigma,\X$, and the problem
    has no solution.}

\item[(R)]{If an \Asubstitution $(\sigma,\X')$ is a unifier of $x$ and $t$ then necessarily
$x\sigma \almosteq{\X'} t\sigma$. Thus adding the mapping $x \mapsto t$ to the substitution and replacing $x$ by $t$ does not affect the set of solutions.
Afterwards the equation $x \iseq t$ becomes trivial and can be removed.
}

\item[(D)]{If is clear that $f(t_1,\ldots,t_n)\sigma \almosteq{\X} f(s_1,\ldots,s_n)\sigma$ holds
iff for all $i \in [1,n]$, $t_i\sigma \almosteq{\X} s_i\sigma$ holds. Thus the replacement of the equation $f(t_1,\ldots,t_n) \iseq f(s_1,\ldots,s_n)$  by the set $\{ t_i \iseq s_i \mid i \in [1,n] \}$ preserves the set of solutions.}
    \end{itemize}
\end{proof}

\begin{newcor}
Every satisfiable \Aunification problem has a most general \Aunifier, which is unique up to $\equivto$-equivalence.
\end{newcor}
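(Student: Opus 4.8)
The plan is to derive the corollary from the solution-preservation lemma by adding a termination argument for the rewrite system together with a characterisation of its normal forms. First I would show that the \Aunification rules terminate. To this end, associate with every problem $(S,\theta,\X)$ the measure $\mu(S) = (n_1,n_2)$, where $n_1$ is the number of distinct variables occurring in $S$ and $n_2$ is the total number of symbol occurrences in the equations of $S$, and order such pairs lexicographically; the state $\bot$ is terminal. A routine inspection shows that every rule strictly decreases $\mu$: rules (T), (E) and (D) leave $n_1$ unchanged (or decrease it) while strictly decreasing $n_2$ — in particular (D) removes the two outermost occurrences of the head symbol and introduces no new variable — and rule (R) strictly decreases $n_1$, since the occurs-check performed by (O) guarantees $x \not\in \var(t)$, so that $x$ is eliminated from $S$ while $\var(t) \subseteq \var(S)$ introduces no fresh variable. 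As $\mu$ ranges over a well-founded order, the rules terminate.

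Next I would characterise the normal forms, claiming that a problem $(S,\theta,\X) \not= \bot$ to which no rule applies must satisfy $S = \emptyset$. Indeed, consider any equation $t \iseq s \in S$ and argue by cases on its shape, using that the rules are tried in the prescribed order: if $t = s$ then (T) applies; if one of $t,s$ is a variable then either (O) yields $\bot$ or (R) applies; and if both are non-variable terms $f(\vec{t})$ and $g(\vec{s})$ then (D) applies when $f = g$, while when $f \not= g$ either both are constants of $\hypcst$ and (E) applies, or (C) yields $\bot$. In every case some rule is applicable, contradicting the assumption; hence $S = \emptyset$.

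With these two facts I would conclude as follows. Suppose the problem $\P$ is satisfiable. By the preceding lemma the set of solutions is invariant under rule application, so $\bot$, which by convention has no solution, is never reached; by termination the rewriting thus ends in a problem of the form $(\emptyset,\theta',\X')$. The pair $(\theta',\X')$ is itself a solution of this terminal problem, since $(\theta',\X') \moregeneral (\theta',\X')$ holds by reflexivity (take $\theta = \id$) and the unification condition over the empty equation set is vacuous; moreover, by definition every solution $(\sigma,\X)$ of $(\emptyset,\theta',\X')$ satisfies $(\theta',\X') \moregeneral (\sigma,\X)$. By preservation these are exactly the solutions of $\P$, so $(\theta',\X')$ is an \Aunifier of $\P$ that is more general than every other one, that is, a most general \Aunifier. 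Finally, if $(\mu_1,\Y_1)$ and $(\mu_2,\Y_2)$ are both most general {\Aunifier}s, then each, being a solution, is dominated by the other, giving $(\mu_1,\Y_1) \moregeneral (\mu_2,\Y_2)$ and $(\mu_2,\Y_2) \moregeneral (\mu_1,\Y_1)$; hence $(\mu_1,\Y_1) \equivto (\mu_2,\Y_2)$, which is the claimed uniqueness up to $\equivto$.

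The main obstacle I anticipate lies in the termination step, precisely because rule (D) increases the number of equations in $S$; this is why a lexicographic measure giving priority to the variable count $n_1$ (decreased only by (R)) and using $n_2$ merely as a tie-breaker is needed. A secondary point of care is the normal-form analysis, where the prescribed priority of the rules and the treatment of equations as unordered pairs — so that (R) and (O) catch a variable on either side before (C) can misfire — must be handled explicitly.
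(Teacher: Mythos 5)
Your proposal is correct and follows essentially the same route as the paper's own proof: termination of the rewrite system (with the number of variables as the dominant component of the measure, exactly because (R) is the only rule that can blow up the term size), the observation that irreducible problems are either $\bot$ or of the form $(\emptyset,\theta',\X')$, reading off $(\theta',\X')$ as the most general solution, and deriving uniqueness up to $\equivto$ directly from mutual $\moregeneral$-domination. The paper states these steps more tersely, but your added detail (the explicit lexicographic measure and the case analysis for normal forms) fills in exactly what the paper declares ``easy to check.''
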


\begin{proof}
It is easy to check that the \Aunification rules terminate: all the rules strictly decrease the size of the first component of the problem, except for (R), which strictly decreases the number of variables occurring in the first component (moreover, no rule can increase this number of variables).
Furthermore, irreducible problems are either $\bot$ or of the form $(\emptyset,\sigma,\X)$. In the former case the problem has no solution and in the latter, $(\sigma,\X)$ is a most general solution.
Also, if $(\sigma,\X)$ and $(\sigma',\X')$ are two most general solutions then by definition we have
$(\sigma,\X) \moregeneral (\sigma',\X')$ and
$(\sigma',\X') \moregeneral (\sigma,\X)$, thus $(\sigma,\X) \equivto (\sigma',\X')$.
\end{proof}
Note that the proposed algorithm is exponential w.r.t.\ the size of the initial problem, however it can be easily transformed into a polynomial algorithm by using structure sharing (thus avoiding any duplication of terms).

\section{Proof of Lemma \ref{lem:saturated}}

\label{ap:saturated}

The proof is based on the following intermediate results.

\begin{newdef}
  Let $S$ be a set of {\Aclause}s and $\X$ be an \Aset.  If $u \bowtie
  v \vee C \vee D$ is a clause of type \ref{projext:set} in
  $\projection{S}{\X}$, where $\bowtie \in \set{\iseq, \niseq}$ and $D
  \subseteq \{ \pp \niseq \true \}$, then there exist an
  \Aclause $\ccl{u' \bowtie v' \vee C'}{\Y}\in S$ and a substitution
  $\sigma$ such that\footnote{If several terms $u'$ satisfying the above
    conditions exist then one of them is chosen arbitrarily.} $\red{(u'\sigma)}{\X} = u$, $\red{(v'\sigma)}{\X}=v$,
  $\red{(C'\sigma)}{\X} = C$ and $\Y\sigma \subseteq \X$.  The term occurrence
  $u$ is {\em \fixed} in $u \bowtie v \vee C$ if  $u'$ occurs in $\Y$ whenever it is a
  variable.
\end{newdef}

 \begin{newprop}
 \label{prop:fixedbis}
 Let $S$ be a set of {\Aclause}s and $\X$ be an \Aset. Let $C$ be a
 clause of type \ref{projext:set} in $\projection{S}{\X}$ and $a,b$ be  constants in $\hypcst$ such that
 $\red{a}{\X} = a$ and $\red{b}{\X}=b$.
Let $P$ be a set of non-\fixed occurrences of $a$ in $C$. Then there exists a set $P'$ of occurrences of $a$ in $C$ that contains $P$, and a clause $D$ in
$\projection{S}{\X}$ such that $D$  is obtained from $C$ by replacing all occurrences of $a$ in $P'$ by $b$.
 \end{newprop}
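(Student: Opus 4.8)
The plan is to unfold the definition of a type-\ref{projext:set} clause and realize the desired replacement through a modification of the witnessing substitution. By Definition \ref{projext:def}, since $C$ is of type \ref{projext:set} there are an \Aclause $\ccl{E}{\Y}\in S$, a ground substitution $\sigma$ of domain $\var(E)$ with $\Y\sigma\subseteq\X$ and $\red{x\sigma}{\X}=x\sigma$ for every $x\in\var(E)$, and a clause $C'\subseteq\{\pp\niseq\true\}$, such that $C=\red{E\sigma}{\X}\vee C'$. As $\pp$ is not abducible, every occurrence of $a$ in $C$ lies in $\red{E\sigma}{\X}$. The characterization of non-\fixed occurrences that I would use is the one the preceding definition yields: an occurrence of $a$ at a position $p$ of $\red{E\sigma}{\X}$ is non-\fixed exactly when $p=q.r$ where $q$ is a position of some variable $x\in\var(E)\setminus\var(\Y)$ in $E$ and $(x\sigma)|_r=a$; occurrences arising from a constant of $E$, or from a variable occurring in $\Y$, are \fixed. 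We may assume $a\neq b$, the case $a=b$ being immediate with $P'=P$ and $D=C$.

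Next I would build the new substitution and the set $P'$. For each $x\in\var(E)\setminus\var(\Y)$ let $R_x$ be the set of positions $r$ such that $q.r\in P$ for some occurrence $q$ of $x$ in $E$; note that $(x\sigma)|_r=a$ for every $r\in R_x$. Define $\sigma'$ to agree with $\sigma$ on $\var(\Y)$ and on every variable with $R_x=\emptyset$, and otherwise let $x\sigma'$ be $x\sigma$ with $a$ replaced by $b$ at each position of $R_x$. Put $P'\isdef\{\,q.r\mid x\in\var(E)\setminus\var(\Y),\ q\text{ an occurrence of }x\text{ in }E,\ r\in R_x\,\}$. Then $P\subseteq P'$, and $P'$ may be strictly larger than $P$ precisely because a repeated variable forces all of its parallel occurrences to be changed at once; this is exactly why the statement only promises a superset of $P$.

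It then remains to verify that $\sigma'$ again witnesses a type-\ref{projext:set} clause from $\ccl{E}{\Y}$, and that $D\isdef\red{E\sigma'}{\X}\vee C'$ is the clause we want. First, $\Y\sigma'=\Y\sigma\subseteq\X$ since $\sigma'$ and $\sigma$ coincide on $\var(\Y)$, and $\red{x\sigma'}{\X}=x\sigma'$ because $x\sigma'$ is obtained from the already-reduced term $x\sigma$ by replacing the constant $a=\red{a}{\X}$ with the constant $b=\red{b}{\X}$, so no abducible constant of $x\sigma'$ is reducible. Replacing the abducible constant $a$ by the abducible constant $b$ preserves the properties of $E\sigma'$ being \eflat{$\hypcst$} and \quasipositive, hence the tail $C'$ dictated by Definition \ref{projext:def} is unchanged and $D\in\projection{S}{\X}$. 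Finally, $E\sigma$ and $E\sigma'$ differ only by carrying $a$ versus $b$ at the positions in $P'$; since $a$ and $b$ are their own representatives, the reduction $\red{\cdot}{\X}$ commutes with this change, so $\red{E\sigma'}{\X}$ is $\red{E\sigma}{\X}$ with the $P'$-occurrences of $a$ rewritten to $b$. Appending the unchanged $C'$ shows that $D$ is $C$ with exactly the occurrences in $P'$ replaced by $b$.

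The only genuinely delicate point, and the part I expect to be the main obstacle, is the position bookkeeping: one must argue cleanly that non-\fixed occurrences are precisely those created inside the instantiation of a variable absent from the constraint, and that altering $\sigma$ on such a variable leaves $\Y\sigma$ untouched while necessarily moving all parallel occurrences together (forcing $P\subseteq P'$ rather than equality). Everything else --- well-definedness of $\sigma'$, invariance of the tail $C'$, and the commutation of substitution with reduction at representative constants --- is routine once this analysis is in place.
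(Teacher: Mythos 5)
Your proof is correct and takes essentially the same route as the paper's: both unfold the witnessing \Aclause $\ccl{C'}{\Y}\in S$ together with its substitution $\sigma$, modify $\sigma$ only on the variables not occurring in $\Y$ that give rise to the non-\fixed occurrences, note that $\Y\sigma$ is therefore unchanged and that reducedness as well as the {\eflat{$\hypcst$}} and \quasipositive status of the instance are preserved, and conclude that the modified instance is again a type-1 clause of $\projection{S}{\X}$ differing from $C$ exactly on a superset $P'$ of $P$. The only (cosmetic) difference is that the paper simply sets $x\theta := b$ for each such variable --- which suffices because for the occurrences covered by the definition of \fixed one has $x\sigma = a$ --- whereas you perform a position-local replacement of $a$ by $b$ inside $x\sigma$; the two constructions coincide on the occurrences the statement actually concerns.
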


\begin{proof}
  By definition, there exists an \Aclause $\ccl{C'}{\Y}\in S$ and a
  substitution $\sigma$ such that $C = \red{C'\sigma}{\X} \vee C''$,
  $C'' \subseteq \set{\pp \niseq \true}$ and $\Y\sigma \subseteq \X$.  Since
  $P$ is a set of non-\fixed occurrences in $C$, the subterms of $D$
  at the positions in $P$ are variables $x_1,\ldots,x_n$ not occurring
  in $\Y$.

  Consider the substitution $\theta$ coinciding with $\sigma$, except
  that $\forall i \in [1,n]$, $x_i\theta \isdef b$.  Since the
  variables $x_i$ ($1 \leq i \leq n$) do not occur in $\Y$, $\theta$
  and $\sigma$ coincide on $\Y$, hence $\Y\theta \subseteq \X$. This
  means that $\projection{S}{\X}$ must contain the clause of type
  \ref{projext:set} $\red{C'\theta}{\X} \vee C''$ (note that $C''$ is
  not affected because $C'\sigma$ is {\eflat{$\hypcst$}} and positive
  exactly when $C'\theta$ satisfies the same property).  By
  definition, since $\red{a}{\X} = a$ and $\red{b}{\X} = b$,
  $\red{C'\theta}{\X} \vee C''$ is therefore obtained from $C =
  \red{C'\sigma}{\X} \vee C''$ by replacing some occurrences of $a$ by
  $b$, and in particular, all the occurrences in $P$ are replaced.
\end{proof}

Note that $P'$ can be a strict superset of $P$: for example, if $S =
\{ x \iseq c \vee x \iseq d \}$, then $a \iseq c \vee a \iseq d
\in \projection{S}{\emptyset}$,  position $1.1$ is not \fixed in
$a \iseq c \vee a \iseq d$, and it is clear that $b \iseq c \vee b \iseq d  \in\projection{S}{\emptyset}$ but $b \iseq c \vee a \iseq c  \not \in \projection{S}{\emptyset}$.


\MnachowithAnswer{pourquoi ne pas décaler cette proposition à l'annexe?}{done}

\begin{newprop}
\label{prop:postaut}
Let $\ccl{C}{\X}$ be an {\Aclause}; assume that $\X$ is satisfiable and that  $C$ is \eflat{$\hypcst$} \MnachowithAnswer{\eflat{$\hypcst$}?}{oui} and \quasipositive.
Then $\ccl{C}{\X}$ is a tautology if and only if  $\red{C}{\X}$ is either a tautology or contains a literal that also occurs in $\X$.
In particular, if $C$ is \elementary and positive then
$\ccl{C}{\X}$ is a tautology  exactly when
$\red{C}{\X}$ is a tautology.
\end{newprop}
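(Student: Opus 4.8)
The plan is to reduce both directions to one semantic observation. For any interpretation $I$ and any $\X$-pure ground substitution $\sigma$ with $I \models \X\sigma$, I claim that $I \models C\sigma$ if and only if $I \models \red{C}{\X}\sigma$. Indeed, by definition of the reduction each abducible $a$ occurring in $C$ satisfies $a \iseq \red{a}{\X} \in \X$, and since this equation is ground we get $I \models (a \iseq \red{a}{\X})$ from $I \models \X\sigma$; hence $I$ interprets $a$ and $\red{a}{\X}$ identically, so replacing one by the other throughout $C\sigma$ preserves truth under $I$. I would record this lemma first, together with the elementary fact that $l \in \X$ implies $\red{l}{\X} \in \X$ (an immediate consequence of the second closure property in the definition of an \Aset, applied repeatedly to rewrite each constant to its representative).

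The \emph{if} direction is then immediate. Assume $\red{C}{\X}$ is a tautology or contains a literal $l$ that also occurs in $\X$, and fix $I$ and an $\X$-pure ground $\sigma$ with $I \models \X\sigma$. If $\red{C}{\X}$ is a tautology then $I \models \red{C}{\X}\sigma$ trivially; if $l \in \red{C}{\X}$ with $l \in \X$, then $l\sigma \in \X\sigma$ is satisfied by $I$ and is a disjunct of $\red{C}{\X}\sigma$, so again $I \models \red{C}{\X}\sigma$. In either case the lemma yields $I \models C\sigma$, so $\ccl{C}{\X}$ is a tautology.

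For the \emph{only if} direction I argue by contraposition, and this is the main obstacle: assuming $\red{C}{\X}$ is neither a tautology nor shares a literal with $\X$, I must build $I$ and $\X$-pure ground $\sigma$ with $I \models \X\sigma$ but $I \not\models C\sigma$. Since $\X$ is satisfiable I fix an $\X$-pure ground assignment of the variables in $\var(\X)$ witnessing this, and I map every variable in $\var(C)\setminus\var(\X)$ to pairwise-distinct \emph{fresh} constants outside $\hypcst$; all $\sigma$-images are chosen already reduced, so that $\red{C\sigma}{\X} = \red{C}{\X}\sigma$. I then take the canonical model $I$ of $\X\sigma$, in which two ground terms are equal exactly when forced by $\X\sigma$ and in which predicate atoms are fixed only as dictated by $\X\sigma$, and I extend $I$ on the remaining atoms occurring in $\red{C}{\X}\sigma$ so as to falsify each of its literals. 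The delicate point is that this extension is consistent: no two literals of $\red{C}{\X}$ are complementary (otherwise it would be a tautology), so its literals impose no conflicting demands on one another; and none of its literals belongs to $\X$, so $\X$ never forces any of them to be true; the fresh constants further guarantee that literals involving free variables of $C$ are independent of $\X$. Using the \Aset property that a non-ground positive \elementary equation cannot belong to $\X$ rules out the one case where an equality literal could be forced. Thus $I \models \X\sigma$ and $I \not\models \red{C}{\X}\sigma$, whence $I \not\models C\sigma$ by the lemma, so $\ccl{C}{\X}$ is not a tautology.

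Finally, the \emph{in particular} statement follows by specializing: if $C$ is \elementary and positive, suppose $\red{C}{\X}$ contains a literal $l = (s \iseq t) \in \X$. Since $l$ is positive and \elementary, the \Aset constraint forbidding non-ground positive \elementary equations forces $s,t \in \hypcst$; and since $l$ lies in $\red{C}{\X}$ both are already representatives, so $s \iseq t \in \X$ gives $s = t$ and $l = (s \iseq s)$. Hence $\red{C}{\X}$ is a tautology. Therefore, in the \elementary positive case the clause ``shares a literal with $\X$'' is subsumed by ``$\red{C}{\X}$ is a tautology'', and the general equivalence collapses to: $\ccl{C}{\X}$ is a tautology exactly when $\red{C}{\X}$ is a tautology.
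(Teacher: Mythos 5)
Your proof is correct and follows essentially the same route as the paper's: the easy direction via the observation that $I\models \X\sigma$ makes $C\sigma$ and $\red{C}{\X}\sigma$ interchangeable, and the hard direction by contraposition, building a canonical model of $\X$ extended to falsify the literals of $\red{C}{\X}$, with consistency guaranteed precisely because $\red{C}{\X}$ is non-tautological and shares no literal with $\X$. The only differences are presentational (you factor out the reduction lemma and the treatment of $\var(C)\setminus\var(\X)$ explicitly, and you spell out why the ``in particular'' clause collapses for \elementary positive clauses, which the paper leaves implicit).
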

\Mnacho{reprise}
\begin{proof}
  Assume that $\red{C}{\X}$ is not a tautology and contains no literal
  in $\X$. Let $I$ be the interpretation such that $\forall a,b \in
  \hypcst$, $I \models a\iseq b$ iff $\red{a}{\X} = \red{b}{\X}$ and
  for all $a_1,\dots,a_n$ where $\red{a_i}{\X} = a_i$, $I \models
  p(a_1,\dots,a_n) \iseq \true$ iff $p(a_1,\dots,a_n) \iseq \true \in
  \X$ or $p(a_1,\dots,a_n) \niseq \true \in \red{C}{\X}$. Note that
  $I$ is well-defined, since $\X$ and $\red{C}{\X}$ share no literals
  and neither of them contains complementary
  literals. 
  By definition, $I$ validates all positive literals in $\X$. If $a
  \not \iseq b \in \X$ and $I \not \models a \not \iseq b$, then
  $\red{a}{\X} = \red{b}{\X}$, hence $\X \models a \iseq b$, which
  means that $\X$ is unsatisfiable, and this contradicts the
  hypothesis of the lemma.  Similarly, if $p(a_1,\dots,a_n) \niseq
  \true \in \X$ and $I \models p(a_1,\dots,a_n) \iseq \true$ then
  since $\X$ is satisfiable, $p(a_1,\dots,a_n) \niseq \true$ must
  occur in $\red{C}{\X}$, which contradicts the hypothesis that $\X$
  and $\red{C}{\X}$ share no literals. Therefore, $I \models \X$. Now
  consider a literal $l \in C$. Since $C$ is {\eflat{$\hypcst$}} and
  \quasipositive, $l$ is of the form $a \iseq b$ or $p(a_1,\dots,a_n)
  \bowtie \true$.  If $l$ is of the form $a \iseq b$ and $\red{a}{\X}
  = \red{b}{\X}$ then $\red{C}{\X}$ is a tautology, and this is
  impossible by hypothesis. Thus $\red{a}{\X} \not = \red{b}{\X}$ and
  $I \not \models a \iseq b$.  Now assume that $l$ is of the form
  $p(a_1,\dots,a_n) \iseq \true$ and that $I \models l$; the case
  where $l$ is of the form $p(a_1,\dots,a_n) \niseq \true$ is
  similar. Let $m = p(\red{a_1}{\X},\dots,\red{a_n}{\X}) \iseq
  \true$. Since $I\models l, \X$, it is clear that $I\models m$, thus by
  definition of $I$, either $m \in \X$ or $m^\compl \in
  \red{C}{\X}$. In the first case $m$ occurs in both $\X$ and
  $\red{C}{\X}$, and in the second case, both $m$ and $m^\compl$ occur
  in $\red{C}{\X}$ which is a tautology; thus we get a contradiction
  in both cases.  Therefore, $I$ is a counter-model of
  $\ccl{C}{\X}$.

The converse is straightforward.
\end{proof}
Note that the previous property does not hold if $C$ is not \nikonew{modif} \quasipositive;
for example, $\ccl{a \niseq b}{a \niseq b}$ is a tautology but the unit clause
$\red{(a \niseq b)}{\{ a \niseq b \}} = a \niseq b$ is not.

\begin{newlem}
\label{lem:composeq}
Let $S$ be an \SP-saturated set of {\Aclause}s and $\X$ be an \Aset.
For $i = 1,2$, let $u_i \iseq v_i \vee C_i$
be an {\eflat{$\hypcst$}} clause of type \ref{projext:set} in $\projection{S}{\X}$, and assume that $u_i \not = v_i$.
If the following conditions hold:
\begin{itemize}
\item $u_1 = u_2 \not = \true$,
\item $u_1$ is \fixed in  $u_1 \iseq v_1 \vee C_1$,
\item for $i = 1, 2$, $u_i\iseq v_i \in \selB(u_i \iseq v_i \vee C_i)$,
\item  $v_1 \iseq v_2 \vee C_1 \vee C_2$ is not a tautology,
\end{itemize}
then $\projection{S}{\X}$ contains a clause of type \ref{projext:set}
contained in $v_1 \iseq v_2 \vee C_1 \vee C_2$.
\end{newlem}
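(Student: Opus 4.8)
The plan is to simulate, at the level of the original {\Aclause}s of $S$, a single $\hypcst$-Superposition inference whose ground projection is the target clause $v_1 \iseq v_2 \vee C_1 \vee C_2$, and then to exploit \SP-saturation together with the \emph{restricted} redundancy criterion of Definition \ref{def:red_crit}.

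First I would recover the preimages. By the definition preceding Proposition \ref{prop:fixedbis}, each type \ref{projext:set} clause $u_i \iseq v_i \vee C_i$ arises from an \Aclause $\ccl{u_i' \iseq v_i' \vee C_i'}{\Y_i}\in S$ together with a ground substitution $\sigma_i$ with $\red{(u_i'\sigma_i)}{\X} = u_i$, $\red{(v_i'\sigma_i)}{\X} = v_i$, $\red{(C_i'\sigma_i)}{\X} = C_i$ and $\Y_i\sigma_i \subseteq \X$; after renaming apart I set $\sigma \isdef \sigma_1 \uplus \sigma_2$, and I would note that each $\sigma_i$ maps the variables of its preimage to constants of $\hypcst$ (since $u_i \iseq v_i \vee C_i$ is {\eflat{$\hypcst$}}). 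A crucial observation is that, since these clauses carry no literal $\pp \niseq \true$, the type \ref{projext:set} construction forces each $C_i$ to be \quasipositive; hence the target $v_1 \iseq v_2 \vee C_1 \vee C_2$ is both {\eflat{$\hypcst$}} and \quasipositive. I then apply $\hypcst$-Superposition with $\ccl{u_1' \iseq v_1' \vee C_1'}{\Y_1}$ as \emph{into} premise, superposing at the top position of $u_1'$, and $\ccl{u_2' \iseq v_2' \vee C_2'}{\Y_2}$ as \emph{from} premise.

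Next I would check applicability. Because $u_1 = u_2 \neq \true$, we have $u_1'\sigma \almosteq{\X} u_2'\sigma$, so $u_1'$ and $u_2'$ are \Aunifiable; let $(\eta,\E)$ be a $(\Y_1\cup\Y_2)$-pure most general \Aunifier. The side condition ``if $t|_p$ is a variable then $t|_p$ occurs in the constraint'' holds precisely because $u_1$ is \fixed, i.e.\ $u_1'$ occurs in $\Y_1$ whenever it is a variable. The selection requirements follow from $u_i\iseq v_i\in\selB(\cdot)$: by the definition of $\selB$ this yields $(u_i'\iseq v_i')\sigma_i\in\sel(\cdot)$, and the stability of $\sel$ under {\Asubstitution}s transfers selection to the $\eta$-instance. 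The ordering side conditions $v_i'\eta \not\alwaysgreater{} u_i'\eta$ hold because $\alwaysgreater{}$ never orders two constants of $\hypcst$ (nor two variables). The rule therefore produces $\ccl{E}{\calZ}\isdef\ccl{C_1'\vee C_2'\vee v_2'\iseq v_1'}{(\Y_1\cup\Y_2\cup\E)\eta}$. Since $(\eta,\E)\moregeneral(\sigma,\X)$, there is $\theta$ with $\E\subseteq\X$ and $x\sigma\almosteq{\X} x\eta\theta$; writing $\mu\isdef\eta\theta$ one checks that $\mu$ maps the relevant variables into $\hypcst$, that $E\mu$ is {\eflat{$\hypcst$}} and \quasipositive, that $\red{(E\mu)}{\X} = v_1\iseq v_2\vee C_1\vee C_2$, and that $\calZ\mu$ reduces into $\X$ (hence is satisfiable).

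Finally, \SP-saturation gives that $\ccl{E}{\calZ}$ is \redundant in $S$, and I would exploit this at the ground substitution $\mu$. The heart of the argument is that $E\mu$ is {\eflat{$\hypcst$}} and \quasipositive: the precondition ``$C\theta$ is not both {\eflat{$\hypcst$}} and \quasipositive'' of the second (entailment) branch of Definition \ref{def:red_crit} fails, so that branch is disabled and redundancy can only come from the tautology clause or from the subsumption branch. I would exclude the tautology case using the hypothesis that $v_1\iseq v_2\vee C_1\vee C_2$ is not a tautology together with Proposition \ref{prop:postaut}: since $\red{(E\mu)}{\X}$ is not a tautology, the only remaining possibility is that a conclusion literal already occurs in $\calZ\mu\subseteq\X$; a reduced positive elementary literal cannot lie in $\X$ (it would force its two sides to coincide), which leaves only the predicate-literal variant to dispatch, and this is the one genuinely delicate point of the proof. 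Once tautology is ruled out, the subsumption branch supplies $\ccl{D}{\Y}\in S$ and $\delta$ with $D\delta\subseteq E\mu$ and $\Y\delta\subseteq\X$; reducing $\delta$ to $\red{\delta}{\X}$ and observing that $D\delta$, being a subset of an {\eflat{$\hypcst$}} \quasipositive clause, is itself {\eflat{$\hypcst$}} and \quasipositive (so its type \ref{projext:set} companion carries $C' = \Box$), I obtain a type \ref{projext:set} clause $\red{(D\delta)}{\X} = D\delta \subseteq v_1\iseq v_2\vee C_1\vee C_2$ in $\projection{S}{\X}$, as required. The main obstacle is exactly this last redundancy analysis: forcing \emph{subsumption} rather than mere entailment (which rests entirely on the quasi-positivity of the conclusion) and cleanly excluding the tautology sub-case where a conclusion literal might be absorbed into the constraint.
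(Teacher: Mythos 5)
Your proposal follows the paper's proof almost step for step: recover the preimage {\Aclause}s with their substitutions, perform a single $\hypcst$-Superposition at the top position (with the fixedness hypothesis discharging the variable side condition, stability of $\sel$ discharging selection, and the premise roles chosen exactly as in the paper), invoke \SP-saturation, note that the conclusion is {\eflat{$\hypcst$}} and \quasipositive so that the entailment branch of Definition~\ref{def:red_crit} is disabled, and extract the subsuming \Aclause{} to obtain the required type~\ref{projext:set} clause of $\projection{S}{\X}$.

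There is, however, one step you explicitly leave unresolved, and as written it is a genuine gap: the exclusion of the tautology branch. You only establish that the conclusion is \quasipositive, and for merely \quasipositive{} clauses Proposition~\ref{prop:postaut} says the \Aclause{} is a tautology iff its reduction is a tautology \emph{or contains a literal occurring in $\X$}; you dispose of the \elementary{} case but leave the predicate-literal case ``to dispatch.'' If a literal $p(\vec{a}) \bowtie \true$ of the reduced conclusion did lie in $\X$, redundancy would be witnessed by tautology-hood alone, no subsuming \Aclause{} need exist, and your final step would fail. The paper closes this with an observation you skip: since $u_i \iseq v_i$ is a \emph{positive} literal in $\selB(u_i \iseq v_i \vee C_i)$, it must be $\succ$-maximal in that clause, and any atom $p(\vec{a}) \iseq \true$ with $p \in \preds$ would be strictly greater than the \elementary{} atom $u_i \iseq v_i$ (because $p(\vec{a}) \succ a$ for all $a \in \hypcst$); hence $C_1$ and $C_2$ are positive and \elementary, so the whole conclusion is positive and \elementary{} and the ``in particular'' clause of Proposition~\ref{prop:postaut} applies directly, reducing tautology-hood of the \Aclause{} to tautology-hood of $v_1 \iseq v_2 \vee C_1 \vee C_2$, which the hypotheses exclude. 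Adding this maximality argument closes your gap and makes your proof coincide with the paper's.
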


\begin{proof}
For $i = 1, 2$, since $u_i \iseq v_i \vee C_i$ is of type \ref{projext:set}, 
there exists an {\Aclause}
$\ccl{t_i \iseq s_i \vee D_i}{\Y_i}\in S$ and a substitution $\sigma_i$ such that
 $\Y_i\sigma_i \subseteq \X$,
$\red{D_i\sigma_i}{\X} = C_i$, $\red{t_i\sigma_i}{\X} = u_i$ and $\red{s_i\sigma_i}{\X} = v_i$.

Let $\sigma = \sigma_1\sigma_2$.
Since $u_1=u_2$, we have $t_1\sigma = t_1\sigma_1 \almosteq{\X} t_2\sigma_2 = t_2\sigma$, hence $(\sigma, \X)$ is an \Aunifier of $t_1 \iseq t_2$.
Let $(\eta, \calZ)$ be a most general \Aunifier of $t_1\iseq t_2$, then $\calZ \subseteq \X$, and there exists a ground substitution $\sigma'$ such that $\forall x$, $x\eta\sigma' \almosteq{\X} x\sigma$.
Now, $\red{[(t_i\iseq s_i)\eta\sigma']}{\X} = u_i\iseq v_i$, which is selected in $u_i\iseq v_i\vee C_i$, and
since $\sel$ is stable under {\Asubstitution}s,
$(t_i \iseq s_i)\eta \in \sel((t_i \iseq s_i \vee D_i)\eta)$.

By hypothesis $v_1,v_2 \in \hypcst \cup \V$, hence $s_1,s_2 \in
\hypcst \cup \V$. By definition of $\alwayssgreater{\Y_i}$, this
implies that $s_i\eta \not \alwaysgreater{\Y_i\eta} t_i\eta$: indeed,
$s_1,s_2$ can be replaced by the minimal constant $\true$, either by
instantiation or by rewriting of constants in $\hypcst$. Note also that
$s_i\eta\neq t_i\eta$ since otherwise we would have $u_i=v_i$, which contradicts the
hypotheses of the lemma.

Since $u_1$ is \fixed, either $t_1$ is not a variable or $t_1$ occurs
in $\Y_1$, hence by definition of \SP, the $\hypcst$-Superposition
 from $\ccl{t_2
  \iseq s_2 \vee D_2}{\Y_2}$ into $\ccl{t_1 \iseq s_1 \vee D_1}{\Y_1}$  upon the terms $t_1$ and $t_2$ generates
$\ccl{(s_1 \iseq s_2\vee D_1 \vee D_2)\eta}{\Y_1\eta \cup \Y_2\eta
  \cup \calZ}$.  Now, the \Aclause $\ccl{(s_1 \iseq s_2\vee D_1 \vee
  D_2)\eta\sigma'}{\Y_1\eta\sigma' \cup \Y_2\eta\sigma' \cup \calZ}$
must be \redundant in $S$, because $S$ is \SP-saturated. This clause cannot be a tautology; indeed, for $i
= 1, 2$, since $u_i \iseq v_i \in \selB(u_i \iseq v_i \vee C_i)$ and
$u_i \iseq v_i \vee C_i \in\flatcl{\hypcst}$, $C_i$ must be positive
by definition of the selection function $\selB$,  and cannot contain a symbol in $\preds$ (otherwise the literal containing this symbol would be strictly greater than $u_i \iseq v_i$).
By hypothesis, $v_1
\iseq v_2 \vee C_1 \vee C_2$ is not a tautology and since $C_1,C_2$
are positive and \elementary, we deduce by Proposition \ref{prop:postaut} that
$\ccl{(s_1 \iseq s_2\vee D_1 \vee D_2)\eta\sigma'}{\Y_1\eta\sigma'
  \cup \Y_2\eta\sigma' \cup \calZ}$ is not a tautology either. Thus,
by Definition \ref{def:red_crit}, there exists an \Aclause
$\ccl{E}{\calZ'} \in S$ and a substitution $\theta$ such that $E\theta
\subseteq (s_1 \iseq s_2\vee D_1 \vee D_2)\eta\sigma'$
and $\calZ'\theta \subseteq \Y_1\eta\sigma' \cup
\Y_2\eta\sigma' \cup \calZ \subseteq \X$.
Therefore, $\projection{S}{\X}$ contains the clause $\red{E\theta}{\X}$ that
is  contained in $\red{[(s_1 \iseq s_2\vee D_1 \vee D_2)\eta\sigma']}{\X}= v_1 \iseq v_2 \vee C_1 \vee C_2$.
 \end{proof}

\begin{newprop}
\label{prop:simulatefact}
Let $S$ be an \SP-saturated set of {\Aclause}s and $\X$ be a ground \Aset.
If $\projection{S}{\X}$ contains a non-tautological clause $D \subseteq C \vee a \iseq b \vee a \iseq b$, where $C \vee a \iseq b$  is positive and \elementary then $C \vee a \iseq b$ is redundant in $\projection{S}{\X}$.
\end{newprop}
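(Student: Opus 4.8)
The plan is to lift the situation to the level of \Aclause{}s, to simulate the factorization there by Equational $\hypcst$-Factorization, and to exploit the \SP-saturation of $S$.

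First I would dispose of the trivial case: if $D$ contains at most one occurrence of $a\iseq b$, then $D\subseteq C\vee a\iseq b$ as multisets, so $D$ subsumes $C\vee a\iseq b$ and the latter is \redundant. Hence I may assume $D=D''\vee a\iseq b\vee a\iseq b$ with $D''\subseteq C$. Since $D$ is positive, \elementary, non-tautological and has at least two literals, it cannot be of type \ref{projext:eq} (these are unit), nor of types \ref{projext:noeq}, \ref{projext:pred} or \ref{projext:pp} (which involve $\ff$, $\pp$ or a genuine predicate symbol). Thus $D$ is of type \ref{projext:set}, so $D=\red{D_0\sigma}{\X}$ for some \Aclause $\ccl{D_0}{\Y}\in S$ and ground $\sigma$ with $\Y\sigma\subseteq\X$ and $\red{x\sigma}{\X}=x\sigma$. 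By Proposition \ref{prop:flatred}, $D_0$ is \elementary, and since reduction and instantiation preserve polarity it is positive. The two copies of $a\iseq b$ originate from two literals $\ell_1=(t_1\iseq s_1)$ and $\ell_2=(t_2\iseq s_2)$ of $D_0$ with $\red{t_i\sigma}{\X}=a$ and $\red{s_i\sigma}{\X}=b$.

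Next I would apply Equational $\hypcst$-Factorization to $\ccl{D_0}{\Y}$ on $\ell_1,\ell_2$. Its side conditions hold essentially for free: $D_0$ being positive and \elementary, $\sel(D_0)=D_0$ (there is no negative literal, and $\alwaysgreater{}$ relates no two \elementary literals, so every literal is maximal), and the ordering provisos on $s_i,t_i$ are vacuous on \elementary terms; moreover $(\sigma,\X)$ is already an \Aunifier of $t_1$ and $t_2$, so a most general \Aunifier $(\rho,\E)$ exists, with $\E\subseteq\X$ and $\rho\sigma'\almosteq{\X}\sigma$ for some $\sigma'$. The rule yields $\ccl{F_0}{\calZ_0}$ with $F_0=(D_0\setminus\{\ell_1,\ell_2\}\vee s_1\niseq s_2\vee t_1\iseq s_1)\rho$ and $\calZ_0=(\Y\cup\E)\rho$, and by \SP-saturation $\ccl{F_0}{\calZ_0}$ is \redundant in $S$. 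Reducing its $\sigma'$-instance by $\X$, the literal $t_1\iseq s_1$ becomes $a\iseq b$, the remainder becomes $D''$, and crucially the newly introduced disequation $s_1\niseq s_2$ becomes $b\niseq b$, since both sides reduce to $b$; thus, modulo this always-false literal, $\red{F_0\sigma'}{\X}$ is exactly the desired factor $D''\vee a\iseq b\subseteq C\vee a\iseq b$.

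The main obstacle is precisely this spurious disequation. When $s_1\rho=s_2\rho$ the rule omits it, $F_0$ is \eflat{$\hypcst$} and \quasipositive, so its redundancy reduces (by the restricted criterion together with Proposition \ref{prop:postaut}) to subsumption by some positive \elementary \Aclause $\ccl{G}{\W}\in S$; its $\X$-reduct then lies in $\projection{S}{\X}$ (of type \ref{projext:set}, carrying no $\pp$) and subsumes $D''\vee a\iseq b$, whence $C\vee a\iseq b$ is \redundant. When $s_1\rho\neq s_2\rho$, however, $F_0$ keeps the disequation and is not \quasipositive, so saturation a priori provides only the weaker, composition form of redundancy, whose covering clauses may carry negative \elementary literals and therefore project with the large literal $\pp\niseq\true$, which is too big to witness redundancy of the \elementary clause $C\vee a\iseq b$. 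I would resolve this by discharging the disequation: since $s_1,s_2\in\hypcst\cup\var(\X)\cup\vars$ with $\red{s_1\sigma'}{\X}=\red{s_2\sigma'}{\X}$, the terms $s_1$ and $s_2$ are \Aunifiable, and $s_1\niseq s_2$, being the unique negative literal of $F_0$, is necessarily selected, so $\hypcst$-Reflection turns $F_0$ into a \quasipositive \eflat{$\hypcst$} clause whose constraint still reduces into $\X$ and whose $\X$-reduct is again $D''\vee a\iseq b$, bringing us back to the \quasipositive case just treated. The delicate point to get right is that this two-step (Factorization then $\hypcst$-Reflection) argument remains compatible with the one-step saturation hypothesis — i.e.\ that applying $\hypcst$-Reflection to the merely \redundant clause $\ccl{F_0}{\calZ_0}$ still produces a clause covered by $S$; this is exactly where I would invoke that the disequation is $\X$-false and that the constants of $\hypcst$ are $\alwaysgreater{}$-incomparable, so that no genuine, order-increasing inference is hidden in the composed argument.
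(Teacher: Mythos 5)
Your overall strategy is the one the paper itself follows: dispose of the case where $a \iseq b$ occurs at most once by subsumption, observe that $D$ must be of type \ref{projext:set} and lift it to an \Aclause $\ccl{D'}{\Y}\in S$, apply Equational $\hypcst$-Factorization, and then eliminate the residual disequation between the right-hand sides, which collapses to $b \niseq b$ under $\red{\cdot}{\X}$. Up to and including the subcase where the two right-hand sides are identified by the unifier (your $s_1\rho = s_2\rho$), your argument coincides with the paper's.

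The gap is exactly at the point you yourself flag as ``delicate''. When the disequation survives, you propose to apply $\hypcst$-Reflection \emph{to the conclusion $F_0$ of the Factorization inference}. But $F_0$ is not in $S$; it is only \redundant in $S$, and \SP-saturation constrains only inferences whose premises belong to $S$. Your closing appeal to the facts that the disequation is $\X$-false and that the constants of $\hypcst$ are order-incomparable does not bridge this: redundancy of a clause does not, in general, transfer to the conclusions of inferences performed on that clause, and nothing in the calculus licenses composing two inference steps through a merely redundant intermediate. The paper closes this by routing the second step through the \emph{subsumer}: from the redundancy of the relevant instance of the Factorization conclusion it extracts an \Aclause $\ccl{E}{\U} \in S$ with $E\eta \subseteq (C' \vee u \iseq v \vee v \niseq v')\sigma$ and $\U\eta \subseteq \X$; if $E\eta$ already avoids the disequation the proof ends, and otherwise $E = E' \vee w \niseq w'$ with $w\eta = v\sigma$ and $w'\eta = v'\sigma$, so $\hypcst$-Reflection applies to $E$ --- a clause of $S$ --- and saturation yields a further subsumer $\ccl{E''}{\U'}$ whose $\X$-reduct lies in $\projection{S}{\X}$ and witnesses the redundancy of $C \vee a \iseq b$. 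Rerouting your Reflection step through this subsumer is what is needed to complete your proof. (One caveat you correctly sense: since the Factorization conclusion is not \quasipositive, Definition \ref{def:red_crit} does not by itself force its redundancy to be witnessed by subsumption rather than by the composite condition; the paper's proof takes the subsumption branch without comment, and a fully rigorous version of either argument must dispose of the composite case as well.)
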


\begin{proof}
  If $D$ contains at most one occurrence of $a \iseq b$, then
  necessarily $D \subseteq C \vee a \iseq b$ and the proof is
  immediate. Otherwise, since $a\iseq b\vee a\iseq b \subseteq D$, the latter cannot be of type 2; it is therefore of
  type $1$, thus there exists an \Aclause $\ccl{D'}{\Y} \in S$ and a
  substitution $\sigma$ such that $\red{D'\sigma}{\X} = D$ and
  $\Y\sigma \subseteq \X$.  $D'$ is of the form $C' \vee u \iseq v
  \vee u' \iseq v'$, where $\red{C'\sigma}{\X} \subseteq C$,
  $\red{u\sigma}{\X} = \red{u'\sigma}{\X} = a$ and $\red{v\sigma}{\X}
  = \red{v'\sigma}{\X} = b$.  By Proposition \ref{prop:flatred}, $D'$
  is \eflat{$\hypcst$}; thus the literal $u \iseq v$ is necessarily
  $\alwaysgreater{}$-maximal in $D'$, and the $\hypcst$-Factorization
  rule applied to $\ccl{D'}{\Y}$ generates $\ccl{(C' \vee u \iseq v
    \vee v \not \iseq v')\theta}{\Y\theta \cup \calZ}$, where
  $(\theta,\calZ)$ is the m.g.u.\ of $u$ and $u'$, or
  simply $\ccl{(C' \vee u \iseq v)\theta}{\Y\theta \cup \calZ}$, if
  $v\theta = v'\theta$. We assume that $v\theta \neq v'\theta$, the
  proof when they are equal is simpler. Since $(\sigma,\X)$ is an
  instance of $(\theta,\calZ)$, by Proposition \ref{prop:instred} the
  clause $\ccl{(C' \vee u \iseq v \vee v \not \iseq v')\sigma}{\X}$
  must be redundant in $S$, and since
  $\red{v\sigma}{\X} = \red{v'\sigma}{\X} = b$, it is
  equivalent to $\ccl{(C' \vee u \iseq v)\sigma}{\X}$.  This \Aclause
  cannot be a tautology; otherwise, by Proposition \ref{prop:postaut},
  $\red{(C' \vee u \iseq v)\sigma}{\X} \equiv D$ would also be a
  tautology. Therefore, by Definition \ref{def:red_crit}, there exists an
  \Aclause $\ccl{E}{\U} \in S$ and a substitution $\eta$ such that
  $E\eta \subseteq (C' \vee u \iseq v \vee v \not \iseq v')\sigma$ and
  $\U\eta \subseteq \X$.  By definition of $\projection{S}{\X}$, the
  clause $\red{E\eta}{\X}$ occurs in $\projection{S}{\X}$.  If $E\eta
  \subseteq (C' \vee u \iseq v)\sigma$ then the proof is completed.
  Otherwise, $E$ is of the form $E' \vee w \not \iseq w'$, where
  $E'\eta \subseteq (C' \vee u \iseq v)\sigma$, $w\eta = v\sigma$ and
  $w'\eta = v'\sigma$.  Note that $w$ and $w'$ cannot both be equal to
  $\true$, since otherwise $w \niseq w'$ would have been removed from
  the \Aclause, thus the literal $w \not \iseq w'$ is necessarily
  $\alwaysgreater{}$-maximal in $E' \vee w \not \iseq w'$, and it must
  be selected; therefore, the $\hypcst$-Reflection rule can be applied on this
  clause. Since $(\eta,\X)$ is a unifier of $w$ and $w'$, necessarily,
  the \Aclause $\ccl{E'\eta}{\X}$ is redundant in $S$.  By Definition \ref{def:red_crit}, $S$ contains a clause $\ccl{E''}{\U'}$ and
  there exists a substitution $\mu$ such that $E''\mu \subseteq
  E'\eta$ and $\U'\mu \subseteq \X$. We conclude that
  $\red{E'\eta}{\X} \subseteq C \vee a \iseq b$ must be redundant in
  $\projection{S}{\X}$.
\end{proof}


\begin{newprop}\label{prop:xcont}
  There exists a set of clauses $U\subseteq
  \projectiontype{S}{\X}{\ref{projext:eq}} \cup
  \projectiontype{S}{\X}{\ref{projext:noeq}} \cup
  \projectiontype{S}{\X}{\ref{projext:pred}} \cup
  \projectiontype{S}{\X}{\ref{projext:pp}}$ such that $U$ contains no
  occurrence of $\pp$, and $\U \models \X$ 
\end{newprop}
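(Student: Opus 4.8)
The plan is to exhibit $U$ explicitly as the union of the three types that are free of $\pp$, namely
\[ U \isdef \projectiontype{S}{\X}{\ref{projext:eq}} \cup \projectiontype{S}{\X}{\ref{projext:noeq}} \cup \projectiontype{S}{\X}{\ref{projext:pred}}, \]
and to verify the two required properties. The $\pp$-freeness is almost immediate: type \ref{projext:eq} clauses are equations $c \iseq \red{c}{\X}$ between constants of $\hypcst$; types \ref{projext:noeq} and \ref{projext:pred} are obtained by closing the $\pp$-free base clauses $\ff(\red{a}{\X}) \niseq \ff(\red{b}{\X})$ and $p(\red{a_1}{\X},\dots,\red{a_n}{\X}) \bowtie \true$ under Superposition from \emph{positive} and \elementary clauses of $\projectiontype{S}{\X}{\ref{projext:set}}$. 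Since $\pp$ can occur in a type \ref{projext:set} clause only inside the \emph{negative} literal $\pp \niseq \true$, no positive type \ref{projext:set} clause contains $\pp$, and Superposition replaces a subterm by a term coming from such a $\pp$-free clause; hence every clause produced by these closures is $\pp$-free.

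For $U \models \X$, I argue literal by literal, using that $\X$ is ground. Reflexivity literals $a \iseq a$ are tautologies. For $a \iseq b \in \X$ with $a \neq b$ one has $\red{a}{\X} = \red{b}{\X}$, so the type \ref{projext:eq} equations $a \iseq \red{a}{\X}$ and $b \iseq \red{b}{\X}$ (at least one of which is present, since $a \neq b$ precludes both being already reduced) together with transitivity entail $a \iseq b$. For a disequation $a \niseq b \in \X$, satisfiability of $\X$ forces $\red{a}{\X} \neq \red{b}{\X}$; I will use that the base clause $\ff(\red{a}{\X}) \niseq \ff(\red{b}{\X})$ is entailed by $\projectiontype{S}{\X}{\ref{projext:noeq}}$, then combine it with the equations $a \iseq \red{a}{\X}$, $b \iseq \red{b}{\X}$ and the fact that interpretations are congruences to derive $\ff(a) \niseq \ff(b)$, whence $a \niseq b$. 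Predicate literals $p(a_1,\dots,a_n) \bowtie \true \in \X$ are treated identically: $\projectiontype{S}{\X}{\ref{projext:pred}}$ entails the base clause $p(\red{a_1}{\X},\dots,\red{a_n}{\X}) \bowtie \true$ with the matching sign (by the remark accompanying Definition \ref{projext:def}), and congruence with the type \ref{projext:eq} equations propagates it to $p(a_1,\dots,a_n) \bowtie \true$.

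The one point deserving care, which I regard as the main obstacle, is justifying that a base clause such as $\ff(\red{a}{\X}) \niseq \ff(\red{b}{\X})$ is still logically entailed by $\projectiontype{S}{\X}{\ref{projext:noeq}}$, even though this set retains only the \emph{non-redundant} members of the Superposition closure and the base clause itself may have been discarded as redundant. Here I will invoke the fact, recorded just before Definition \ref{projext:def}, that recursively deleting redundant clauses produces a subset equivalent to the original set; consequently the non-redundant subset entails every clause of the full closure, in particular its base clause. The same remark applies verbatim to type \ref{projext:pred}. Everything else reduces to routine appeals to transitivity and to the congruence-closure condition in the definition of an interpretation.
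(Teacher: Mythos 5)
Your proof is correct and rests on the same underlying idea as the paper's: entail the equations of $\X$ from the type-\ref{projext:eq} clauses $c \iseq \red{c}{\X}$, and entail the disequations and predicate literals from the reduced base clauses $\ff(\red{a}{\X}) \niseq \ff(\red{b}{\X})$ and $p(\red{a_1}{\X},\dots,\red{a_n}{\X}) \bowtie \true$ combined with congruence and the type-\ref{projext:eq} equations. The packaging differs: the paper exhibits $U$ as exactly this finite collection of clauses and simply asserts that the base clauses belong to $\projectiontype{S}{\X}{\ref{projext:noeq}}$ and $\projectiontype{S}{\X}{\ref{projext:pred}}$, whereas you take $U$ to be the whole union of types \ref{projext:eq}, \ref{projext:noeq} and \ref{projext:pred} and only claim that the base clauses are \emph{entailed} by it. On the point you single out as the main obstacle, your version is the more careful one: since types \ref{projext:noeq} and \ref{projext:pred} keep only the non-redundant members of the Superposition closures, membership of a base clause is not guaranteed, and arguing entailment via the equivalence preserved by redundancy elimination is the right way to close that gap (the paper's proof silently assumes membership). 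The price you pay is having to check $\pp$-freeness for \emph{every} clause of types \ref{projext:noeq} and \ref{projext:pred} rather than just the base clauses; your observation that the ``from'' premises of the closures are positive \elementary type-\ref{projext:set} clauses, which therefore cannot carry the negative literal $\pp \niseq \true$, handles this correctly.
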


\begin{proof}
  Consider the following sets:
  \begin{eqnarray*}
    X_1 &\isdef &\setof{a \iseq \red{a}{\X}}{a\in \X,\, a\neq
      \red{a}{\X}},\\
    X_2 &\isdef &\setof{\red{(a\niseq
        b)}{\X}}{a\niseq b \in \X}, \\
    X_3 &\isdef &\setof{\red{(f(a_1,\dots,a_n)\bowtie
        \true)}{\X}}{f(a_1,\dots,a_n)\bowtie
        \true \in \X}.
  \end{eqnarray*}
  It is clear that $\X \equiv X_1\cup X_2 \cup X_3$ and that $X_1 \subseteq
  \projectiontype{S}{\X}{\ref{projext:eq}}$. By letting $X_2'\isdef
  \bigcup_{a\niseq b \in X_2}\set{\ff(a) \niseq \ff(b)}$, we have $X_2' \models X_2$, \nikonew{non pas $X_2 \equiv X_2'$} the set $U\isdef
  X_1\cup X_2' \cup X_3$ entails $\X$, it is a subset of
  $\projectiontype{S}{\X}{\ref{projext:eq}} \cup
  \projectiontype{S}{\X}{\ref{projext:noeq}} \cup
  \projectiontype{S}{\X}{\ref{projext:pred}}$
and contains no occurrence
  of $\pp$.
\end{proof}

We now establish a result concerning the form of the clauses of type \ref{projext:noeq} or \ref{projext:pred}
in $\projection{S}{\X}$.

\nikonew{Lemme commun pour les types \ref{projext:noeq} et \ref{projext:pred}.}

\begin{newlem}
\label{lem:projextform}
Any clause $C$ of type \ref{projext:noeq} (resp. \ref{projext:pred}) in $\projection{S}{\X}$ is of the form
$\ff(a_1) \niseq \ff(a_2) \vee C'$ (resp. $p(a_1,\dots,a_n) \bowtie \trueform \vee C'$) where:
\begin{enumerate}
\item{$a_1,a_2 \in \hypcst$ (resp. $a_1,\dots,a_n \in \hypcst$) \nikonew{cet item est redondant, mais peut être plus clair comme ça, à voir} }
\item{$C'$ is positive and \elementary.}
\item{$\X$ contains a clause of the form $b_1 \niseq b_2$ (resp. $p(b_1,\dots,b_n)$).}
\item{For every $i \in [1,2]$ (resp. $i \in [1,n]$) either $a_i = b_i$ or $a_i \prec b_i$ and $\projection{S}{\X}$ contains a clause of the form
$a_i \iseq b_i \vee C_i$ with $C_i \subseteq C'$ and $C_i \prec (a_i \iseq b_i)$.\label{item:key}}
\end{enumerate}
\end{newlem}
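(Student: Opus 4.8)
The plan is to argue by structural induction on the construction of $C$ as an element of $\superp{\ff(\red{a}{\X}) \niseq \ff(\red{b}{\X})}{\projectiontype{S}{\X}{\ref{projext:set}}}$ (for type \ref{projext:noeq}), respectively of $\superp{p(\red{a_1}{\X},\dots,\red{a_n}{\X}) \bowtie \trueform}{\projectiontype{S}{\X}{\ref{projext:set}}}$ (for type \ref{projext:pred}), i.e., on the number of standard Superposition steps from positive \elementary clauses of $\projectiontype{S}{\X}{\ref{projext:set}}$ used to obtain $C$ from the seed. The two cases run in parallel: in both, the single literal that is negative (resp. of the form $p(\vec{t}) \bowtie \trueform$) carries the distinguished subterms $a_1,a_2$ (resp. $a_1,\dots,a_n$), it is the only literal selected by $\selB$, and $C'$ collects the side literals accumulated along the superpositions. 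I will carry out the type \ref{projext:noeq} case; type \ref{projext:pred} follows by the same bookkeeping with $n$ arguments.

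For the base case, $C$ is the seed $\ff(\red{a}{\X}) \niseq \ff(\red{b}{\X})$ with $a \niseq b \in \X$, so I set $a_1 = \red{a}{\X}$, $a_2 = \red{b}{\X}$, $b_1 = a$, $b_2 = b$ and $C' = \Box$. Conditions 1 and 2 are immediate, since reduction maps constants of $\hypcst$ to constants of $\hypcst$ and $\Box$ is positive and \elementary, and condition 3 holds by the choice of $b_1,b_2$. For condition 4, either $\red{a}{\X} = a$ (so $a_1 = b_1$), or $\red{a}{\X} \prec a$, in which case the reduction equation $a \iseq \red{a}{\X}$ is a clause of type \ref{projext:eq} in $\projection{S}{\X}$ of the required form $a_1 \iseq b_1$ with $C_1 = \Box \prec (a_1 \iseq b_1)$; the argument for $a_2,b_2$ is symmetric.

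For the inductive step, $C$ is obtained from a clause $D$ (to which the induction hypothesis applies) by one standard Superposition step from a positive \elementary clause $E = (c \iseq d) \vee E' \in \projectiontype{S}{\X}{\ref{projext:set}}$, where $c \succ d$ and $c,d$ are reduced constants. Since $\selB$ selects only the distinguished literal of $D$ and $E$ is \elementary, the inference rewrites an occurrence of a constant $a_i = c$ (the $\succeq$-maximal argument of that literal) to $d$, while appending $E'$ to $C'$. Thus $C$ has the required shape, with the new distinguished constant $a_i' = d \prec a_i \in \hypcst$ (condition 1), the other $a_j$ unchanged, $C' \vee E'$ still positive and \elementary (condition 2), and condition 3 left untouched because $b_1,b_2$ are kept. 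Condition 4 for the unchanged positions is inherited verbatim (the connecting clause survives and $C_j \subseteq C' \vee E'$). The clean subcase of condition 4 for the rewritten position is when the induction hypothesis gives $a_i = b_i$: then $E = (b_i \iseq d) \vee E'$ is itself the needed connecting clause $d \iseq b_i \vee E'$, with $E' \prec (c \iseq d)$ by maximality of the from-literal and $E' \subseteq C' \vee E'$, so condition 4 holds directly.

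The main obstacle is the remaining subcase of condition 4, where the induction hypothesis only provides $a_i \prec b_i$ together with a connecting clause $a_i \iseq b_i \vee C_i$, while the rewriting strictly decreases $a_i$ to $d$: I then need a single clause of $\projection{S}{\X}$ connecting the new, smaller constant $d$ back to the fixed $b_i$. This cannot be produced by simply superposing $E$ into $a_i \iseq b_i \vee C_i$, because $a_i$ is the $\prec$-smaller side of that equation and ordered Superposition only rewrites $\succeq$-maximal sides. The hard part will therefore be to exploit the structural scaffolding of $\projection{S}{\X}$ precisely here: the ordering assumptions placing all $\ff$- and $\pp$-literals above the \elementary ones, the fact that every distinguished constant occurring in a clause of type \ref{projext:noeq} or \ref{projext:pred} is reduced, and above all the restriction of these sets to \emph{non-redundant} clauses. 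I expect the resolution to come from showing that any clause in which such a chained reconnection would be forced is in fact \redundant in $\projection{S}{\X}$, hence pruned away, so that the surviving clauses always admit a one-step connecting clause; establishing this redundancy claim, and reconciling it with the bookkeeping of $C'$ and with the degree condition $C_i \prec (a_i \iseq b_i)$, is where I anticipate the real difficulty of the proof to lie.
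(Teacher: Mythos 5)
Your induction set-up, base case, and the two easy subcases of the inductive step match the paper's proof, but the proposal stops exactly at the case that carries all the difficulty, and the mechanism you conjecture for closing it is not the one that works. The paper's resolution is \emph{not} that clauses requiring a chained reconnection are pruned as redundant from $\projection{S}{\X}$; it is that the missing one-step connecting clause $b_1 \iseq a_1' \vee C_1'$ is \emph{constructed}, via Lemma \ref{lem:composeq}, by going back to the \Aclause pre-images in $S$ of the two \elementary clauses $a_1 \iseq b_1 \vee C_1$ and $a_1 \iseq a_1' \vee D'$ and applying $\hypcst$-Superposition there. The ordering obstruction you correctly identified at the level of $\projection{S}{\X}$ (one cannot rewrite the $\prec$-smaller side $a_1$ of $a_1 \iseq b_1$ by ordered Superposition) simply does not exist at the level of $S$: the ordering $\alwayssgreater{}$ never compares two abducible constants, so both $s_i\eta \not\alwaysgreater{} t_i\eta$ conditions hold and the $\hypcst$-Superposition inference upon $a_1$ is licensed. \SP-saturation of $S$ then forces its conclusion to be \redundant, and (after excluding the tautology subcase via Proposition \ref{prop:postaut} and the restricted redundancy criterion for \eflat{$\hypcst$} \quasipositive clauses) subsumption yields the required clause of type \ref{projext:set} contained in $a_1' \iseq b_1 \vee C_1 \vee D'$. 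Without some such appeal to the saturation of $S$ itself, the statement is not provable from properties of $\projection{S}{\X}$ alone.

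Two further ingredients are missing. First, Lemma \ref{lem:composeq} only applies when the occurrence of $a_1$ in $a_1 \iseq b_1 \vee C_1$ is \fixed (i.e., not the image of a variable absent from the constraint); the paper disposes of the non-\fixed case by Proposition \ref{prop:fixedbis} together with a minimality-of-derivation argument showing $C$ would then be redundant, contradicting the non-redundancy built into the definition of $\projectiontype{S}{\X}{\ref{projext:noeq}}$ and $\projectiontype{S}{\X}{\ref{projext:pred}}$ — this is the only place where your ``redundancy pruning'' intuition is actually used. Second, a smaller issue: your base case takes $b_i$ to be the \emph{unreduced} constants of the literal in $\X$ and uses type-\ref{projext:eq} clauses $c \iseq \red{c}{\X}$ as connecting clauses; these have no pre-image in $S$, so Lemma \ref{lem:composeq} could not be applied to them at the next step. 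Since $\X$ is closed under rewriting by its own equations, you should instead choose $b_i = \red{b_i}{\X} = a_i$ in the base case, as the paper does, so that every connecting clause produced along the induction is of type \ref{projext:set}.
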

\begin{proof}
By definition of the clauses of type \ref{projext:noeq} and \ref{projext:pred}  in $\projection{S}{\X}$ (see Definition \ref{projext:def}), $C$ is obtained from a clause of the form  $\ff(b_1) \niseq \ff(b_2)$  with $b_1 \niseq b_2 \in \X$ (resp. from a clause $p(b_1,\dots,b_n) \in \X$)
by applying Superposition inferences from positive \elementary clauses in $\projection{S}{\X}$. Furthermore, $C$ cannot be redundant. We prove the result
by induction on the number of Superposition inferences.
The base case is immediate (with $a_i = b_i, C' = \Box$).
Assume that $C$ is obtained by Superposition into a clause $D$. Without loss of generality we assume that the considered derivation is minimal (w.r.t. the number of steps).
By the induction hypothesis, $D$ is necessarily of the form $\ff(a_1) \niseq \ff(a_2) \vee C'$ (resp.
 $p(a_1,\dots,a_n) \bowtie \trueform \vee C'$), where $a_1,\ldots,a_n$ and $C'$ satisfy the above properties. By definition of the selection function $\selB$, only the literal $\ff(a_1) \niseq \ff(a_2)$ (resp. $p(a_1,\dots,a_n)$) is selected, hence the replacement necessarily occurs in this literal.
  By symmetry, we may assume that it occurs upon the constant $a_1$, from a clause of the form $a_1 \iseq a_1' \vee D'$ (with $a_1 \succ a_1'$). The inference yields
 $C = \ff(a_1') \niseq \ff(a_2) \vee C' \vee D'$ (resp. $p(a_1',a_2,\dots,a_n) \bowtie \trueform \vee C' \vee D'$).
If $b_1 = a_1$, then the proof is completed, since the clause $a_1 \iseq a_1' \vee D'$  fulfills the property of Item \ref{item:key}. Otherwise, by the induction hypothesis,
$\projection{S}{\X}$ contains a clause of the form
$a_1 \iseq b_1 \vee C_1$ with $C_1 \subseteq C'$
Assume that $a_1$ is not \fixed in $a_1 \iseq b_1 \vee C_1$.
By Proposition \ref{prop:fixedbis}, this entails that
$\projection{S}{\X}$ contains a clause of type \ref{projext:set} of the form
$a_1' \iseq b_1 \vee C_1'$, where $C_1'$ is obtained from $C_1$ by replacing occurrences
of $a_1$ by $a_1'$ ($b_1$ is not replaced, since $b_1 \not = a_1$).
By replacing the Superposition inference upon $b_1$ in the derivation yielding $C$ by a Superposition from $a_1' \iseq b_1 \vee C_1'$, we get a clause $D''$ of the form
$a_1' \niseq a_2 \vee C''$ (resp. $p(a_1',a_2,\dots,a_n) \bowtie \trueform \vee C''$ with $C'' \subseteq C' \vee C_1'$.
Clause $D''$ satisfies the following properties.
\begin{itemize}
\item{$D''$ is a clause of type \ref{projext:noeq} or \ref{projext:pred} in $\projection{S}{\X}$.}
\item{$D'' \preceq C$, since $a_1' \prec a_1$.}
\item{$a_1' \iseq a_1 \vee D' \prec C$, since by definition of the ordering $\ff(x) \succ c$ and $p(\vec{x}) \succ c$ for every $c \in \hypcst$.}
\item{$D'', a_1' \iseq a_1 \vee D' \models C$.}
\end{itemize}
The number of inferences in the derivation is strictly lower than that of $C$ (since the sequence of Superposition inferences replacing $b_1$ by $a_1$ and then $a_1$ by $a_1'$ has been  replaced by a single replacement of $b_1$ by $a_1'$), which by minimality of the derivation entails that $D'' \not = C$. Thus $D'' \succ C$ and
$C$ is redundant, which contradicts the definition of the clauses of type \ref{projext:noeq} and
\ref{projext:pred}.
Consequently $a_1$ is \fixed in $a_1 \iseq b_1 \vee C_1$. We now distinguish two cases.
\begin{itemize}
\item{
The clause
    $b_1 \iseq a_1' \vee C_1 \vee D'$ is a tautology. Since this clause is positive, this entails that it contains a literal of the form $t \iseq t$ (otherwise the interpretation mapping all constants to distinct elements would falsify the clause).
    Since $b_1 \succeq a_1$ and $a_1 \succ a_1'$ we have $b_1 \not = a_1'$ hence the literal $t \iseq t$ occurs in $C_1 \vee D'$. But then
    $C$ would be redundant (since it contains $C_1 \vee D'$), which contradicts the definition of the clauses of type \ref{projext:noeq} and
    \ref{projext:pred}.
}
\item{
    The clause $b_1 \iseq a_1' \vee C_1 \vee D'$ is not a tautology.
    Since
    $a_1$ is \fixed in $a_1 \iseq b_1 \vee C_1$, by Lemma \ref{lem:composeq}, we deduce that
    there is a clause of type \ref{projext:set} in
    $\projection{S}{\X}$ that is contained in $(a_1' \iseq b_1) \vee
    C_1 \vee D'$. If this clause is contained in $C_1 \vee
    D'$ then it is also contained in $C$ which is redundant and the
    proof is completed; otherwise it is of the form $(a_1' \iseq b_1)
    \vee C_1'$, where $C_1' \subseteq C_1 \vee D' \subseteq C' \vee D'$, which proves
    that the above property holds for $C$.}
    \end{itemize}
\end{proof}

 We are now in a position to provide the proof of Lemma \ref{lem:saturated}.
\label{proof:saturated}
We have to prove that every clause generated from $\projection{S}{\X}$ by an inference in \Sup{\prec}{\selB} except for \Eqfact Factorization on positive {\eflat{$\hypcst$}} clauses is a logical consequence of some clauses in $\projection{S}{\X}$ that are strictly smaller than the maximal premise of the inference.  Note that this condition necessarily holds if
the conclusion is redundant in $\projection{S}{\X}$, since a clause cannot be greater than its maximal premise.
We distinguish several cases, depending on the types of the clauses involved in the inference.

\MnachowithAnswer{changé paragraph en subsubsection parce que l'espacement n'était pas top avec paragraph}{oui}

\subsubsection*{Clauses of type \ref{projext:eq}.}

By definition, every such clause is of the form $c \iseq \red{c}{\X}$,
where $c\neq \red{c}{\X}$ and by construction, $c \succ \red{c}{\X}$. Constant $c$
cannot occur in other clauses in $\projection{S}{\X}$, since all its
occurrences are replaced by $\red{c}{\X}$. Thus the clause $c
\iseq \red{c}{\X}$ cannot interact with any other clause, because of the
ordering restrictions of the Superposition calculus.

\subsubsection*{Clauses of type \ref{projext:pp}.}
By construction, constant $\pp$ only occurs in literals of the form $\pp\niseq \true$ and $\pp \iseq \true$. By definition of $\selB$, the literal $\pp \not \iseq \true$ is never selected, thus
the clause $\pp \iseq \true$ cannot interact with other clauses in $\projection{S}{\X}$.
Now, consider a clause of the form $\ff(u) \niseq \ff(v) \vee u \iseq v$. By definition, $u = \red{u}{\X}$, and $u$ cannot be the maximal term of a selected literal in $\projection{S}{\X}$. Since $\ff$ occurs only in negative literals, no literal can interact with
$\ff(u) \niseq \ff(v)$, and since $u \not = v$, the Reflection rule does not apply either.

\subsubsection*{Clauses of type \ref{projext:noeq}.}
Let $C$ be a clause of type \ref{projext:noeq}. By definition, only
negative literals are selected in $C$, thus the only inference rules
that can be applied on $C$ are the Reflection rule or the
Superposition rule into $C$, where the ``from'' premise is necessarily a clause of
type \ref{projext:set} in $\projection{S}{\X}$.  By Case
\ref{projext:noeq} of Definition \ref{projext:def},
 all the non-redundant clauses that can be generated by the Superposition inference rule are already in $\projectiontype{S}{\X}{\ref{projext:noeq}}$. Thus, we only consider the case where the Reflection inference rule applied on $C$ generates a clause $D$.

\nikonew{note: pas besoin de $\red{a_1}{\X}$, car toutes les constantes de $\projection{S}{\X}$ sont supposées en forme normale.}

By Lemma \ref{lem:projextform},
$C$ is of the form $\ff(a_1) \niseq \ff(a_2) \vee C'$, where
$\X$ contains a clause of the form $b_1 \niseq b_2$ with for all $i =1,2$ either $b_i = a_i$ or
$\projection{S}{\X}$
 contains a clause of type
    \ref{projext:set} of the form $(b_i \iseq a_i) \vee C_i$, where $C_i \subseteq C'$.
    Furthermore, by definition of the Reflection rule, we must have $a_1 = a_2$.

If $b_1 = a_1$ or $b_2 = a_2$ or
if $a_i$ is \fixed in $(b_i \iseq a_i) \vee F_i$, then by Lemma \ref{lem:composeq}, $\projection{S}{\X}$ contains a clause $(b_1 \iseq b_2) \vee C''$ with $C'' \subseteq C_1 \vee C_2$.  Then $S$ contains an \Aclause of the form
    $\ccl{u \iseq v \vee E}{\Y}$, where $\red{u\theta}{\X} = b$, $\red{v\theta}{\X} =
    a$, $\red{E\theta}{\X} = C''$ and $\Y\theta \subseteq \X$.
    Then
    the $\hypcst$-Assertion rule can be applied to this \Aclause,
    yielding $\ccl{E}{\Y \cup \{ u \not \iseq v \}}$.
    Note that
    since $b_1 \neq b_2$, $\Y\theta \cup
    \set{u \niseq v}\theta$ must be satisfiable.
    If $\ccl{E}{\Y \cup \{
      u \not \iseq v \}}\theta$ is a tautology, then so is
    $\red{E\theta}{\X} = C''$ by Proposition \ref{prop:postaut}, hence
    $D$ is also a tautology and is redundant in $\projection{S}{\X}$,
    thus the proof is completed. Otherwise, by Definition \ref{def:red_crit},
    since $S$ is \SP-saturated, it contains an
    \Aclause $\ccl{E'}{\Y'}$ and there exists a substitution $\theta'$
    such that ${E'\theta'} \subseteq E\theta$ and $\Y'\theta' \subseteq \Y\theta
    \cup \set{u \niseq v}\theta$.
    Then $\projection{S}{\X}$ contains the clause
    $\red{E'\theta'\theta}{\X} \subseteq \red{E\theta}{\X} = C''
    \subseteq C'$, and the latter is therefore redundant in
    $\projection{S}{\X}$.

Now assume that $b_1 \not = a_1$, $b_2 \not = a_2$ and that neither $a_1$ nor $a_2$ is \fixed.
By Proposition \ref{prop:fixedbis}, $\projection{S}{\X}$ contains a clause
of the form $b_1 \iseq b_2 \vee G_1$, where $G_1$ is obtained from $F_1$ by replacing occurrences of
$a_1$
by $b_2$.
Using the fact that $S$ is saturated under $\hypcst$-Assertion, we deduce as in the previous case that
$\projection{S}{\X}$ contains a clause $G_1' \subseteq G_1$.
Thus, since $\ff(a_1) \niseq \ff(a_2) \vee F_1 \vee F_2 \subseteq C$ and $a_1 = a_2$, we have:
\[G_1', b_2 \iseq a_2 \vee F_2\ \models\ G_1, b_2 \iseq a_2 \vee F_2\
\models\ G_1\vee F_2\ \models\ C.\] Since $C$ contains an occurrence
of $\ff$, it is strictly greater than $G_1'$ and $b_2 \iseq a_2 \vee
F_2$, thus $C$ is redundant, and cannot be a
clause of type \ref{projext:noeq}.

\subsubsection*{Clauses of type \ref{projext:pred}.}

\nikonew{ajout}

By Lemma \ref{lem:projextform}, $C$ is necessarily of the form
$p(a_1,\dots,a_n) \bowtie \true \vee C'$,
where for every $i \in [1,n]$, one of the two following conditions hold:
 \begin{enumerate}
 \item{$a_i = b_i$.}
 \item{$\projection{S}{\X}$ contains a positive \elementary clause of the form
$a_i \iseq b_i \vee C_i$, with $a_i \prec b_i$, $(a_i \iseq b_i) \succ C_i$, $C_i \subseteq C'$.}
\end{enumerate}
The only rule that can be applied on $C$ (beside Superposition from \elementary positive clauses for which the proof follows immediately from Case \ref{projext:pred} of Definition \ref{projext:def}) is the Superposition rule on the term $p(a_1,\dots,a_n)$, and in this case the other premisse must be of the form
$p(a_1,\dots,a_n) \not \bowtie \true \vee F$. The generated clause is $C' \vee F$, since literals of the form $\true \niseq \true$ are deleted.

 By definition of $\projection{S}{\X}$, for each index $i$ satisfying the second item, there exist an \Aclause
$\ccl{a_i' \iseq b_i' \vee C_i'}{\Y_i} \in S$ and a substitution $\sigma_i$ such that \MnachowithAnswer{pas de réduction?}{si \smiley}
$\red{a_i'\sigma_i}{\X} = a_i$, $\red{b_i'\sigma_i}{\X} = b_i$, $\red{C_i'\sigma_i}{\X} = C_i$, and $\Y_i\sigma \subseteq \X$.
Let $E$ (resp. $E'$) be the disjunction of the clauses $C_i$ (resp. $C_i'$), for all indices such that $a_i \not = b_i$. Note that $E \subseteq C'$, hence it is sufficient to prove that $E \vee F$ is redundant in $\projection{S}{\X}$.
The $\hypcst$-Substitutivity rule applied on  the clauses $\ccl{a_i' \iseq b_i' \vee C_i'}{\Y_i}$ generates the \Aclause:
$\ccl{p(a'_1,\dots,a'_n) \bowtie \true \vee E'}{\{ p(b'_1,\dots,b'_n) \bowtie \true \}}$,
with $b_i = a_i \Rightarrow (b'_i = a'_i = x_i)$ (where the $x_i$'s denote pairwise distinct fresh variables)
and $b_i \not = a_i \Rightarrow (b'_i = b_i \wedge a'_i = a_i)$.
 This \Aclause must be redundant in $S$, in particular (taking $x_i = b_i$ if $b_i = a_i$) either
 $\ccl{p(a_1,\dots,a_n) \bowtie \true \vee E}{\{ p(b_1,\dots,b_n) \bowtie \true \}}$ is a tautology (Case (i)) or
 there exist an \Aclause
 $\ccl{D}{\Y}$ and a substitution $\theta$ with
 $\red{D\theta}{\X} \subseteq p(a_1,\dots,a_n) \bowtie \true \vee E$
 and
 $\Y\theta \subseteq \{ p(b_1,\dots,b_n) \bowtie \true \}$ (Case (ii)).

 If $\red{D\theta}{\X} \subseteq E$ then $E \vee F$ is clearly redundant in $\projection{S}{\X}$, thus
 we assume that $D$ is of the form $p(\vec{s}) \bowtie \true \vee D'$, with
 $\red{\vec{s}\theta}{\X} = (a_1,\dots,a_n)$ and
 $\red{D'\theta}{\X} \subseteq E$. Note that by definition of the ordering $p(\red{\vec{s}\theta}{\X}) \bowtie \true$ is strictly greater than any literal in $\red{D'\theta}{\X}$.

By Proposition \ref{prop:postaut}, we observe that $\ccl{p(a_1,\dots,a_n) \bowtie \true \vee E}{\{ p(b_1,\dots,b_n) \bowtie \true \}}$ is a tautology only if $(a_1,\dots,a_n) = (b_1,\dots,b_n)$.
We then distinguish two cases, according to the type of
the other premisse $p(a_1,\dots,a_n) \not \bowtie \true \vee F$.

\begin{enumerate}
\item{
If  $p(a_1,\dots,a_n) \not \bowtie \true \vee F$ is of type \ref{projext:set}, then there exist an \Aclause $\ccl{p(\vec{t}) \not \bowtie \true \vee F'}{\calZ} \in S$ and a substitution $\theta'$ such that
$\red{\vec{t}\theta'}{\X} = (a_1,\dots,a_n)$, $\red{F'\theta'}{\X} \vee F'' = F$ and $\calZ\theta' \subseteq \X$, where $F'' = \pp \niseq \true$ if $F$ is \quasipositive and $F'' = \Box$ otherwise.
Then:
\begin{itemize}
\item{In Case (i), we have $(a_1,\dots,a_n) = (b_1,\dots,b_n)$, by the above remark. Furthermore, the $\hypcst$-Assertion rule applies on $\ccl{p(\vec{t}) \not \bowtie \true \vee F'}{\calZ}$, yielding
$\ccl{F'}{\calZ \cup \{ p(\vec{t}) \bowtie \true \}}$. Since $S$ is saturated under the $\hypcst$-Assertion rule, this \Aclause is redundant in $S$. Since $(p(\red{\vec{t}\theta'}{\X}) \iseq \true) = (p(a_1,\dots,a_n) \iseq \true)  = (p(b_1,\dots,b_n) \iseq \true)\in \X$, this entails that $\red{F'\theta'}{\X} \vee F''$ (hence also $E \vee F$ since $F = \red{F'\theta'}{\X} \vee F''$) is redundant in $\projection{S}{\X}$.
}
\item{In Case (ii), since $\red{\vec{t}\theta'}{\X} = (a_1,\dots,a_n) = \red{\vec{s}\theta}{\X}$, $\vec{t}$ and $\vec{s}$ have an \Aunifier $(\mu,\U)$, that is more general than $(\theta \cup \theta',\X)$. Furthermore,
    $(p(\vec{s}) \bowtie \true)\mu$ and $(p(\vec{t}) \not \bowtie)\mu$ must be selected in $D\mu$ and $(p(\vec{t}) \not \bowtie \true \vee F')\mu$, respectively, because  the selection function is stable under \Asubstitution and $p(\red{\vec{s}\theta}{\X}) \bowtie \true$ and $p(\red{\vec{t}\theta'}{\X}) \not \bowtie$ must be selected in $\red{D\theta}{\X}$ and $\red{(p(\vec{s}) \not \bowtie \true \vee F')\theta'}{\X}$ respectively. Consequently, the Superposition rule applies on $\ccl{p(\vec{s}) \bowtie \true \vee D'}{\Y}$ and $\ccl{p(\vec{t}) \not \bowtie \true \vee F'}{\calZ}$, yielding $\ccl{(D' \vee F')}{\Y \cup \calZ \cup \U}\mu$. The \Aclause
    $\ccl{D'\theta \vee F'\theta'}{\X}$ is thus redundant in $S$, hence $E\vee F$ is redundant in $\projection{S}{\X}$.}
\end{itemize}}
\item{
Otherwise, $p(a_1,\dots,a_n) \not \bowtie \true \vee F$ must be of type \ref{projext:pred}, $F$ must be positive and \elementary, and by the same reasoning as before we can prove that $\X$ contains a clause $p(b_1',\dots,b_n') \not \bowtie \true$, such that
either
 $\ccl{p(a_1,\dots,a_n) \not \bowtie \true \vee F}{\{ p(b_1',\dots,b_n') \not \bowtie \true \}}$ is a tautology (Case (iii)) or
 there exist an \Aclause
 $\ccl{p(\vec{s}') \not \bowtie \true \vee D''}{\Y'}$ and a substitution $\theta'$ with
 $\red{\vec{s}'\theta'}{\X} = (a_1,\dots,a_n)$, $\red{D''}{\X}\theta \subseteq F$
 and
 $\Y'\theta \subseteq \{ p(b_1',\dots,b_n') \bowtie \true \}$ (Case (iv)).

By Proposition \ref{prop:postaut}, Case (iii) can only occur if $(a_1,\dots,a_n) = (b_1',\dots,b_n')$
Also, we note that Cases (i) and (iii) cannot hold simultaneously (otherwise we would have $(b_1,\dots,b_n) = (a_1,\dots,a_n) = (b_1',\dots,b_n')$ hence $\X$ would contain both $p(a_1,\dots,a_n) \bowtie \true$ and $p(a_1,\dots,a_n) \not \bowtie \true$ and would be thus unsatisfiable). By symmetry, we may assume that (i) does not hold. Then:
\begin{itemize}
\item{In Case (iii), we can apply the $\hypcst$-Assertion rule  on $\ccl{p(\vec{s}) \bowtie \true \vee D'}{\Y}$, yielding $\ccl{D'}{\Y  \cup \{ p(\vec{s}) \not \bowtie \true \}}$.  Since $(a_1,\dots,a_n) = (b_1',\dots,b_n')$, we have $\Y\theta \cup \{ (p(\red{\vec{s}}{\X}) \not \bowtie \true)\theta \} \subseteq \{ p(b_1,\dots,b_n) \bowtie \true,  p(b_1',\dots,b'_n) \not \bowtie \true \} \subseteq \X$ and $D'\theta \subseteq E$ is thus redundant in $\projection{S}{\X}$.
    }
\item{In Case (iv), it is easy to check that
we can apply the $\hypcst$-Superposition rule on $\ccl{p(\vec{s}) \bowtie \true \vee D'}{\Y}$ and $\ccl{p(\vec{s}') \not \bowtie \true \vee D''}{\Y'}$, yielding an \Aclause of the form $\ccl{D' \vee D''}{\Y \cup \Y' \cup \U}\mu$, where $(\mu,\U)$ is more general than $(\theta\cup \theta',\X)$.
Then $E \vee F$ is redundant in $\projection{S}{\X}$.
}
\end{itemize}
}
\end{enumerate}

\subsubsection*{Clauses of type \ref{projext:set}.}
All inferences involving a clause of type \ref{projext:eq},
\ref{projext:noeq} or \ref{projext:pred} have already been considered, we now focus on
inferences involving only clauses of type \ref{projext:set}.  We assume the Superposition rule is applied; the proof for the unary inference
rules is similar.  Let $C = u \iseq v \vee D$ and $E = t \bowtie s
\vee F$ be two clauses of type \ref{projext:set} in
$\projection{S}{\X}$.  Assume that the Superposition rule applies from
$C$ into $E$, upon the terms $u$ and $t|_{p}$, yielding $t[v]_p
\bowtie s \vee F \vee D$, where $t|_p = u$, $u \succ v$, $t \succ s$,
$u \iseq v \in \selB(C)$ and $t \bowtie s \in \selB(E)$.
Note that
this implies that $u \iseq v$ is strictly maximal in $C$.  We
prove that the clause $t[v]_p \bowtie s \vee F \vee D$ is redundant in
$\projection{S}{\X}$.
Note that by definition of $\selB$, $t \bowtie s$ cannot be $\pp \niseq \true$.
By definition, $S$ contains two {\Aclause}s $C'
= \ccl{u' \iseq v' \vee D'}{\Y}$ and $E' = \ccl{t' \bowtie s' \vee
  F'}{\calZ}$ and there exist substitutions $\sigma$ and $\theta$
such that:
\begin{itemize}
\item  $\red{u'\sigma}{\X} = u$, $\red{v'\sigma}{\X} = v$,
$\red{D'\sigma}{\X} \vee D'' = D$, $\Y\sigma \subseteq \X$ and $D''
\subseteq \set{\pp \niseq \true}$,
\item $\red{t'\theta}{\X} = t$, $\red{s'\theta}{\X} = s$,
$\red{F'\theta}{\X} \vee F'' =F$, $\calZ\theta \subseteq \X$ and $F'' \subseteq \set{\pp \niseq \true}$.
\end{itemize}
First assume that there is a strict prefix $q$ of $p$ such that $t'|_q$ is a variable $x$. Then $x$ cannot occur in $\calZ$, since otherwise $x\theta$ would be a constant in $\hypcst$ (because $\calZ\theta \subseteq \X$), 
and $q$ would not be a strict prefix of $p$.
Let
$\theta'$ be the substitution coinciding with $\theta$, except for the value of $x$, and such that $x\theta'$ is obtained from $x\theta$ by replacing all occurrences of $u$ by $v$.
Since $\theta$ and $\theta'$ coincide on all the variables in $\calZ$, necessarily
$\calZ\theta' \subseteq \X$.
Furthermore, since $(t' \bowtie s' \vee F')\theta'$ is \eflat{$\hypcst$} and positive exactly when $(t' \bowtie s' \vee F')\theta$ is \eflat{$\hypcst$} and positive, we deduce that $\red{(t' \bowtie s' \vee F')\theta'}{\X} \vee F'' \in \projection{S}{\X}$, and this clause is such that
\begin{eqnarray*}
  \red{(t' \bowtie s' \vee F')\theta'}{\X} \vee F'', u\iseq v\vee D& \models&
  \red{(t' \bowtie s' \vee F')\theta}{\X} \vee F''\vee D,\ u\iseq v\vee D\\
 & = & t\bowtie s\vee F\vee D, u\iseq v\vee D\\
 & \models & t[v]_p \bowtie s \vee F \vee D.
\end{eqnarray*}
If $\red{(t' \bowtie s' \vee F')\theta'}{\X} \vee F'' = t[v]_p \bowtie
s \vee F \vee D$ then $t[v]_p \bowtie s \vee F \vee D$ occurs in
$\projection{S}{\X}$ hence the proof is completed. Otherwise $\red{(t'
  \bowtie s' \vee F')\theta'}{\X} \vee F'' \prec t[u]_p \bowtie s \vee
F$.  If $p \neq \emptypos$ or $\bowtie = \niseq$, then necessarily
$u\iseq v \prec t[u]_p\bowtie s$, since $u \succ v$. Furthermore,
$D\prec u\iseq v$, hence $\red{(t' \bowtie s' \vee
  F')\theta'}{\X} \vee F'', u \iseq v \vee D \prec t[u]_p \bowtie s
\vee F$, and the clause $t[v]_p \bowtie s \vee F \vee D$ is therefore
a logical consequence of clauses of $\projection{S}{\X}$
that are strictly smaller than one of
its premises, the proof is thus completed.

If $p = \emptypos$ and $\bowtie = \iseq$, then $E = u\iseq s \vee F$,
and the generated clause is $v \iseq s \vee D\vee F$. If $v=s$ then
this clause is a tautology, and is trivially redundant in
$\projection{S}{\X}$. Otherwise, assume w.l.o.g.\ that $v\prec s$ (since the same inference can be performed by considering $E$ as the ``from'' premise, the two parent clauses play symmetric rôles), then
$u\iseq v \prec u\iseq s$, and as in the previous case, the clause
$v\iseq s \vee F \vee D$ is therefore a logical consequence of clauses
that are strictly smaller than one of its premises. 

Now assume that there is no strict prefix $q$ of $p$ such that $t'|_q$ is a variable $x$. Necessarily, $p$ must be a position in $t'$.
Since $u = t|_p$, we have $u'\sigma \almosteq{\X} t'|_{p}\theta$, hence $u'$ and $t'|_{p}$ are \Aunifiable.
Let $(\eta,\X')$ be a most general \Aunifier of $u'$ and $t'|_{p}$.
Since $(\sigma\theta,\X)$ is an \Aunifier of $u'$ and $t'|_{p}$ we have $\X' \subseteq \X$ and there exists a substitution $\eta'$ such that $\eta\eta'  \almosteq{\X} \sigma\theta$.
Since $\red{u'\sigma}{\X} = u \succ v = \red{v'\sigma}{\X}$,
we have $v'\eta\not \alwaysgreater{} u'\eta$, and similarly,
$t'\eta \not \alwaysgreater{} s'\eta$.
Furthermore, since the selection function $\sel$
is stable by \Asubstitution,
$(t' \bowtie s')\eta$ and
$(u' \iseq v')\eta$ must be selected in $C'\eta$ and $E'\eta$ respectively.
Thus the $\hypcst$-Superposition rule applied to $C'$ and $E'$, generates
$\ccl{(t'[v']_p \bowtie s' \vee D' \vee F')\eta}{(\Y \cup \calZ)\eta \cup \X'}$.
Since $S$ is \SP-saturated, 
this clause is \redundant in $S$,  and so
is $\ccl{(t'[v']_p \bowtie s' \vee D' \vee F')\eta\eta'}{(\Y \cup
  \calZ)\eta\eta' \cup \X'}$.

Suppose that $(t'[v']_p \bowtie s' \vee D' \vee F')\eta\eta'$ is
{\eflat{$\hypcst$}} and \quasipositive.  \nikonew{modifs}
If $\ccl{(t'[v']_p \bowtie s' \vee D' \vee F')\eta\eta'}{(\Y \cup
  \calZ)\eta\eta' \cup \X'}$ is a tautology, then by Proposition \ref{prop:postaut}
 $\red{(t'[v']_p \bowtie s' \vee D' \vee F')\eta\eta'}{\X}$ either is a tautology or
 contains a literal $A \bowtie \true$ occurring in $\X$. In both cases, $\red{(t'[v']_p \bowtie s' \vee D' \vee F')\eta\eta'}{\X}$ is redundant in $\projection{S}{\X}$
Otherwise,  by Definition \ref{def:red_crit},
$S$ contains an \Aclause $\ccl{G}{\U}$  and there exists
a substitution $\mu$ such that $G\mu \subseteq (t'[v']_p \bowtie s'
\vee D' \vee F')\eta\eta'$ and $U\mu\subseteq (\Y \cup
  \calZ)\eta\eta' \cup \X' \subseteq \X$. The clause
$G\mu$ must be positive and {\eflat{$\hypcst$}}, hence by Case
\ref{projext:set} of Definition \ref{projext:def},
$\projection{S}{\X}$ contains $\red{G\mu}{\X} \vee \Box = G\mu$, and $G\mu \subseteq \red{(t'[v']_p \bowtie s' \vee D' \vee F')\eta\eta'}{\X} = t[v]_p \bowtie s \vee D\vee F$.

If $(t'[v']_p \bowtie s' \vee D' \vee F')\eta\eta'$ is not {\eflat{$\hypcst$}} or not \nikonew{modifs} \quasipositive,
then there exist $n$ {\Aclause}s
$\ccl{C_1}{\X_1},\ldots,\ccl{C_n}{\X_n}$ and substitutions $\gamma_1,\ldots,\gamma_n$ such that:
\begin{itemize}
\item $\forall i \in [1,n]\, \X_i\gamma_i \subseteq  (\Y \cup \calZ)\eta\eta' \cup \X'$,
\item $\X'', C_1\gamma_1,\ldots,C_n\gamma_n \models (t'[v']_p \bowtie s' \vee D' \vee F')\eta\eta'$,
\item $\red{(t'[v']_p \bowtie s' \vee D' \vee F')\eta\eta'}{\X}
  \alwaysgreater{\X} C_1\gamma_1,\ldots,C_n\gamma_n$.
\end{itemize}
Since $\X'' \subseteq \X$, we deduce that
$\X, C_1\gamma_1,\ldots,C_n\gamma_n \models (t'[v']_p \bowtie s' \vee D' \vee F')\eta\eta'$. Also, by definition of $\alwaysgreater{}$, we have
$\red{C_1\gamma_1}{\X},\ldots,\red{C_n\gamma_n}{\X} \preceq \red{(t'[v']_p \bowtie s' \vee D' \vee F')\eta\eta'}{\X}$.
But since $\X_1\gamma,\ldots,\X_n\gamma \subseteq (\Y \cup \calZ)\eta\eta' \cup \X' \subseteq \X$, $\projection{S}{\X}$ contains  clauses of the form
$\red{C_i\gamma_i}{\X} \vee G_i$ ($1 \leq i \leq n$), where $G_i \subseteq \set{\pp \not \iseq  \true}$.
By Proposition \ref{prop:xcont}, $\X$ is a logical consequence of \nikonew{not equivalent to}  a subset of $\projectiontype{S}{\X}{\ref{projext:eq}} \cup
  \projectiontype{S}{\X}{\ref{projext:noeq}} \cup
 \projectiontype{S}{\X}{\ref{projext:pred}}  \cup  \projectiontype{S}{\X}{\ref{projext:pp}}$ that contains no occurrence of $\pp$.
  Since $(t'[v']_p \bowtie s' \vee D' \vee F')\eta\eta'$ is either not
  {\eflat{$\hypcst$}} or not positive, $t[v]_p \bowtie s \vee F \vee
  D$ contains $\pp \niseq \true$, and must be strictly greater than
  all clauses of type \ref{projext:eq}, \ref{projext:noeq}, \ref{projext:pred} or
  \ref{projext:pp} that do not contain any occurrence of $\pp$.  Thus
  $\red{(t'[v']_p \bowtie s' \vee D' \vee F')\eta\eta'}{\X} = t[v]_p
  \bowtie s \vee F \vee D$ is redundant in
  $\projection{S}{\X}$.\label{proof:saturatedend}

\section{Proof of Corollary \ref{cor:unsat}}

\label{ap:unsat}

Let $S'$ be the smallest set of (standard ground) clauses such that $S'$ contains all
clauses $C$ satisfying the following properties:
\begin{itemize}
\item{$C$ is generated by  one of the rules in \Sup{\prec}{\sel_{\Phi}} from $\projection{S}{\X} \cup S'$.}
\item{$C$ is not a logical consequence of the set of clauses in $\projection{S}{\X} \cup S'$ that are strictly smaller than the maximal premise of $C$.}
\end{itemize}
Let $S'' = \projection{S}{\X} \cup S'$. Intuitively, $S''$ is the \SP-closure of $\projection{S}{\X}$ modulo redundancy. 
By definition $S''$ must be unsatisfiable and weakly \Sup{\prec}{\sel}-saturated, hence $S''$ contains
the empty clause.
For any term $t$, we denote by $\spc{t}$ the set of positive
clauses in $\projection{S}{\X}$ that contain no term $s \succeq t$.
We prove that  the 
clauses \nikonew{J'ai supprimé "non-redundant". Si on met "non-redundant", il faudrait dire précisément redondant par rapport à quoi. Si c'est par rapport à $S'$ il faut introduire une notion ad-hoc (strictly redundant), mais je ne comprends pas pourquoi c'est utile. D'autre part il est clair que les clauses redondantes par rapport à $\projection{S}{\X}$ ne sont pas dans $S'$.}
in $S'$ are  \eflat{$\hypcst$} and of the form $c \iseq a' \vee a \niseq b \vee C'$, where:
\begin{enumerate}
\item $C'$ is positive,
\item $c \succ a'$, $a' \succeq a$ and $a' \succeq b$,
\item $\spc{c} \models C' \vee a\iseq a' \vee b\iseq a'$,
\item $\projection{S}{\X}$ contains a positive \eflat{$\hypcst$} clause $C'' \subseteq
c \iseq a \vee c \iseq b \vee D$ of type \ref{projext:set}  such that
$\{ D \} \cup \spc{c} \models C'$ and $D \preceq C'$.
\end{enumerate}
This immediately implies that $\Box \not \in S'$, hence that
$\Box \in \projection{S}{\X}$.
The proof is by structural induction on $S'$. Let $C \in S'$. Note that $C$ cannot be redundant in $\projection{S}{\X}$, by definition of $S'$ since the conclusion of an inference rule cannot be greater than its maximal premise.
\begin{itemize}
\item{Assume that $C$ is derived by the Reflection inference
    rule. Then, since $\projection{S}{\X}$ is weakly saturated under
    Reflection, the parent of $C$ must occur in $S'$, hence by the
    induction hypothesis, it must be of the form $c \iseq a' \vee a
    \niseq b \vee C'$, where $\projection{S}{\X}$ contains a clause
    $C'' \subseteq c \iseq a \vee c \iseq b \vee D$ such that $\{ D \}
    \cup \spc{c} \models C'$, $\spc{c} \models C' \vee a \iseq a'$ and
    $D \preceq C'$.
By definition of the Reflection rule we have $a = b$ and by Proposition \ref{prop:simulatefact} 
the clause $c \iseq a \vee D$ is necessarily redundant in $\projection{S}{\X}$.
But $C = c \iseq a' \vee C'$ is redundant
in $\{ c \iseq a \vee D \} \cup \spc{c}$ by Condition $3$ above, since $a=b$. Therefore, $C$ is redundant in $\projection{S}{\X}$, which is impossible.}

\item{Assume that $C$ is derived by Factorization. Then
$C$ is of the form $c \iseq a \vee a \not \iseq b \vee C'$, and its parent is
 $c \iseq a \vee c \iseq b \vee C'$. Note that this parent clause must be positive, otherwise
 $c \iseq a$ would not be selected, and that it is of type \ref{projext:set}.
 Thus, it cannot occur in $S'$, and $c \iseq a \vee c \iseq b \vee C' \in \projection{S}{\X}$. It is simple to verify that the induction hypothesis holds on $C$.}

\item{Assume that $C$ is generated by a Superposition from $C_1$ into $C_2$. Then one of the premises is necessarily in $S'$, and by the induction hypothesis, it contains a negative literal. Since a positive literal is selected in the first premise of the inference rule, we deduce that $C_1 = a\iseq d \vee C_1'$, where $a\succ d$, $C_2 = c\iseq a'\vee a\niseq b\vee C_2'$, and $C = c\iseq a'\vee b\niseq d\vee C_1'\vee C_2'$. Note that $C_1$ must be of type \ref{projext:set}; furthermore, $a\neq b$, since otherwise the Reflection rule would apply upon $C_2$,  $c\iseq a'\vee C_2'$ would be redundant in $\projection{S}{\X}$ and so would $C$.
We prove that $C$ verifies the induction hypothesis.
\begin{enumerate}
\item Since $C_1$ is a positive clause and $C_2'$ is positive by the induction hypothesis, it is clear that $C_1'\vee C_2'$ is positive.
\item Since $a'\succeq a \succ d$, we have $c \succ a'$, $a' \succeq b$ and $a' \succeq d$.
\item By the induction hypothesis, $\spc{c} \models C_2'\vee a\iseq a' \vee b\iseq a'$. Since $c\succ a' \succeq a \succ d$, \nikonew{j'ai remplacé $c'$ par $c$ } we deduce that $C_1 \in \spc{c}$, and therefore $\spc{c} \models C_2'\vee C_1'\vee d\iseq a'\vee b\iseq a'$.
\item By the induction hypothesis, there is a positive clause $C_2''
  \in \projection{S}{\X}$ of type \ref{projext:set} such that $C_2''
  \subseteq c \iseq a \vee c \iseq b \vee D_2$, where $\{ D_2 \} \cup
  \spc{c} \models C_2'$ and $D_2 \preceq C_2'$. If $C_2''$ does not
  contain literal $c\iseq a$, then the proof is immediate, by letting
  $C'' \isdef C_2''$ and $D \isdef D_2$. Otherwise, $C_2''$ is of the
  form $c\iseq a \vee E$, where $E\subseteq c\iseq b\vee D_2$. If $a$
  is \fixed in $c\iseq a\vee E$, then by Lemma \ref{lem:composeq},
  there is a clause in $\projection{S}{\X}$ that is contained in
  $c\iseq d \vee C_1' \vee E \subseteq c\iseq b\vee c\iseq d \vee C_1'
  \vee D_2$, and the proof is completed. Otherwise, by Proposition
  \ref{prop:fixedbis}, since $C_2''$ is of type \ref{projext:set}, \nikonew{J'ai ajouté ce qui précède pour que le lecteur voie à quoi sert cette hypothèse.}  $\projection{S}{\X}$ contains a clause $c\iseq
  d \vee E'$, where $E'$ is obtained from $E$ by replacing some
  occurrences of $a$ by $d$.
  Since $E \subseteq c\iseq b\vee D_2$ and $a\neq b,c$, we deduce that $E'\subseteq c\iseq b\vee D_2'$, for a clause $D_2'$ obtained from $D_2$ by replacing some occurrences of $a$ by $d$. Since $a\iseq d\vee C_1' \in \spc{c}$, we deduce that $\set{D_2'} \cup \spc{c} \models C_1'\vee D_2$, hence $\set{D_2'} \cup \spc{c}  \models C_1'\vee C_2'$. Now $a\succ d$, so that $D_2'\preceq D_2\preceq C_2'$, and the clause $d\iseq c\vee E' \subseteq d\iseq c\vee b\iseq c\vee D_2'$ \nikonew{J'ai remplacé $E$ par $E'$.} fulfills the required property.
\end{enumerate}
}
 \end{itemize}

\section{Proof of Lemma \ref{lem:term}}

\label{ap:term}

The proof is based on the following results:

\begin{newprop}\label{prop:redunif}
  If $(\sigma, \E)$ is a most general \Aunifier of $t\iseq s$, then for  all $\X$ such that $\E \subseteq \X$, \nikonew{J'ai mis simplement $\E$ au lieu de $\E^+$, puisqu'il est clair que la partie \Aset d'un most general \Aunifier est forcément positive (même si les littéraux négatifs sont autorisés dans la définition des substitutions: si il y a des littéraux négatifs, on les enlève et on a encore un \Aunifier)}\MnachowithAnswer{tout à fait d'accord, selon moi il faudrait dire que les Asets des A-unificateurs sont toujours positifs.}{j'ai modifié la définition des substitutions}
  $\red{t}{\X}$ and $\red{s}{\X}$ are unifiable, and $\red{\sigma}{\X}$ is a most general unifier of $\red{t}{\X} \iseq \red{s}{\X}$.
\end{newprop}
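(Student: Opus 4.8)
The plan is to reduce everything to one commutation identity between constant-reduction and substitution, and then to prove most generality by comparing $\red{\sigma}{\X}$ with an \emph{ordinary} most general unifier of the reduced terms, rather than with an arbitrary one. First I would establish, by a routine structural induction, the identity
\[\red{(E\tau)}{\X} = (\red{E}{\X})(\red{\tau}{\X})\]
for every expression $E$, substitution $\tau$ and \Aset $\X$ (constants and variables occupy disjoint ``slots'', so reduction of constants commutes with substitution of variables). I would also record two elementary facts: reduction is idempotent, $\red{(\red{E}{\X})}{\X}=\red{E}{\X}$, and it factors through a smaller \Aset, $\red{(\red{E}{\E})}{\X}=\red{E}{\X}$ whenever $\E\subseteq\X$, since $a$ and $\red{a}{\E}$ lie in the same $\X$-class and hence have the same $\X$-representative. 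Finally, as $\red{\cdot}{\X}$ depends only on the positive elementary equations of $\X$, I would replace $\X$ by its positive elementary part $\X^=$, check that $\X^=$ is itself an \Aset containing $\E$, and note $\red{\cdot}{\X^=}=\red{\cdot}{\X}$; this makes $(\rho,\X^=)$ a legitimate \Asubstitution for any standard substitution $\rho$.

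Next I would show that $\red{\sigma}{\X}$ is a unifier. Since $(\sigma,\E)$ is an \Aunifier of $t\iseq s$ we have $\red{(t\sigma)}{\E}=\red{(s\sigma)}{\E}$; applying $\red{\cdot}{\X}$ to both sides and using the factoring fact yields $\red{(t\sigma)}{\X}=\red{(s\sigma)}{\X}$, which the commutation identity rewrites as $(\red{t}{\X})(\red{\sigma}{\X})=(\red{s}{\X})(\red{\sigma}{\X})$. Hence $\red{t}{\X}$ and $\red{s}{\X}$ are unifiable and $\red{\sigma}{\X}$ is one of their unifiers.

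For most generality I would fix a standard idempotent most general unifier $\mu$ of $\red{t}{\X}\iseq\red{s}{\X}$ (it exists by the previous step) and prove that $\red{\sigma}{\X}$ is more general than $\mu$; since $\mu$ is most general, this forces $\red{\sigma}{\X}$ to be most general as well. As $\mu$ is a unifier of the reduced terms, the commutation and factoring facts make $(\mu,\X^=)$ an \Aunifier of $t\iseq s$, so the most generality of $(\sigma,\E)$ supplies a substitution $\theta$ with $\E\subseteq\X^=$ and $x\mu\almosteq{\X}x\sigma\theta$ for all $x$. Reducing and commuting gives $\red{(x\mu)}{\X}=\red{(x\sigma\theta)}{\X}=(x\red{\sigma}{\X})(\red{\theta}{\X})$. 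The decisive observation is that the range of the ordinary unifier $\mu$ consists only of variables and subterms of $\red{t}{\X}$ and $\red{s}{\X}$, hence contains only already-reduced constants, so $\red{(x\mu)}{\X}=x\mu$ for every $x\in\var(\red{t}{\X})\cup\var(\red{s}{\X})$. Setting $\lambda:=\red{\theta}{\X}$ then gives $x\mu=x(\red{\sigma}{\X})\lambda$ on these variables, i.e.\ $\mu=(\red{\sigma}{\X})\lambda$. I expect the main obstacle to be precisely this final matching: the relation $\moregeneral$ delivers equality of instances only up to $\almosteq{\X}$ (up to constant renaming), whereas being an ordinary most general unifier requires equality on the nose. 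Routing the argument through the standard mgu $\mu$, instead of through an arbitrary unifier $\rho$, is exactly what closes this gap, since $\mu$ introduces no un-reduced constants and reduction therefore acts as the identity on its range.
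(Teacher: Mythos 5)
Your proof is correct, but it takes a genuinely different route from the paper's. The paper proves this proposition by a one-line simulation argument on the \Aunification algorithm of Appendix A: it observes that whenever $(S,\theta,\X)\rightarrow(S',\theta',\X')$ under the \Aunification rules, the reduced problem $\red{S}{\Y}$ rewrites to $\red{S'}{\Y}$ under the standard unification rules for any \Aset $\Y\supseteq\X'$ (the rule (E) step becoming a trivial-equation deletion), and concludes by induction on the run that the reduction of the computed \Aunifier is exactly the mgu computed by the standard algorithm on the reduced terms. Your argument is instead algebraic and algorithm-independent: you establish the commutation identity $\red{(E\tau)}{\X}=(\red{E}{\X})(\red{\tau}{\X})$ together with idempotence and factoring of reduction, derive from these that $\red{\sigma}{\X}$ unifies the reduced terms, and then transfer most generality by comparing $\red{\sigma}{\X}$ against a chosen standard mgu $\mu$ with reduced range, using $(\mu,\X^{=})$ as an \Aunifier of $t\iseq s$ and the relation $\moregeneral$ to extract the instantiating substitution. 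You correctly identified and closed the one real gap in this route, namely that $\moregeneral$ only yields equality up to $\almosteq{\X}$, whereas an mgu claim needs equality on the nose; choosing $\mu$ with range built from subterms of $\red{t}{\X}$ and $\red{s}{\X}$, so that reduction is the identity on that range, is exactly the right fix. The trade-off: the paper's proof is shorter and ties the result directly to the algorithm it has already defined, while yours needs more groundwork (the commutation lemma and the verification that the positive elementary part $\X^{=}$ is a legitimate \Aset for an \Asubstitution) but applies to any most general \Aunifier rather than only the algorithmically computed one, and does not depend on the particular rule system of Figure 1.
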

\begin{proof}
  This is because if $(S,\theta, \X)$ and $(S',\theta', \X')$ are
  \Aunification problems such that $(S,\theta,\X) \rightarrow
  (S',\theta', \X')$, then for all {\Aset}s $\Y$ such that
  ${\X'}\subseteq \Y$, \nikonew{même chose ici, $\X'$ au lieu de ${\X'}^+$.} we have $\red{S}{\Y} \rightarrow
  \red{S'}{\Y}$ for the standard unification rules. The proof follows
  by a straightforward induction.
\end{proof}
Since terms that are $\equivto$-equivalent
\nikonew{au lieu de ``elements in $\hypcst$''} cannot be distinguished \nikonew{au lieu de ``compared'', plus précis, parce qu'il ne suffit pas dire que les termes ne sont pas comparables entre eux (ce qui est vrai pour $\alwayssgreater{}$ aussi) mais il faut que la réécriture ne change pas la relation par rapport aux autres termes.}  by $\stronglysmaller{}$ and $\strsel$, we have the following result.  \nikonew{ajout}
\begin{newprop}\label{prop:redsel}
\nikonew{modifs ici: je relie directement $\sel$ et $\strsel$.}
  Let $C = \ccl{t\bowtie s \vee D}{\X}$ be an \Aclause, where $t\bowtie s \in \sel(t\bowtie s \vee D)$ and $t\not\alwayssmaller{} s$.
  Let $\sigma$ be a ground $\X$-pure substitution of domain $\var(\X)$. If $t\sigma \equivto t'$, then
   $t' \not \stronglysmaller s\sigma$ and $(t' \bowtie s\sigma) \in \strsel(t' \bowtie s\sigma \vee D\sigma)$.
   %
\end{newprop}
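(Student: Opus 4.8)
The plan is to prove the two conclusions separately, reducing each to a statement about the ordinary ordering $\succ$ and the original selection function $\sel$, by exploiting two facts: that $\alwaysgreater{}$ (and hence the hypothesis $t \not\alwayssmaller{} s$) is insensitive to renaming abducible constants, and that any two abducible constants are $\equivto$-equivalent. Throughout I would use that $\equivto$ is an equivalence relation on expressions (reflexivity and symmetry are immediate, and transitivity follows from the monotonicity of $\red{\cdot}{\X}$ in its \Aset argument), together with the hypothesis $t' \equivto t\sigma$, so that the set of $\equivto$-variants of $t'$ coincides with that of $t\sigma$.

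\emph{Ordering part.} Since $\stronglysmaller$ quantifies over \emph{all} $\equivto$-variants of both arguments, and the variants of $t'$ are exactly those of $t\sigma$, the goal $t' \not\stronglysmaller s\sigma$ is equivalent to $t\sigma \not\stronglysmaller s\sigma$. To establish the latter it suffices to exhibit a single pair $u \equivto t\sigma$, $v \equivto s\sigma$ with $u \succ v$. First I would unfold the hypothesis $t \not\alwayssmaller{} s$: there exist an \Aset $\X_0$ and a ground substitution $\rho$ with $\red{t\rho}{\X_0} \succ \red{s\rho}{\X_0}$ (totality of $\succ$ on ground terms turns the failure of $\preceq$ into a strict inequality). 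The work is then to convert this witness, which lives under an arbitrary grounding $\rho$ and a single \emph{coherent} \Aset $\X_0$, into an abducible renaming of $t\sigma$ and of $s\sigma$; here the freedom of $\stronglysmaller$ to rewrite each constant \emph{independently} is what makes the conversion possible, since the coherent identifications recorded in $\X_0$ are a special case of independent rewriting.

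\emph{Selection part.} For $(t' \bowtie s\sigma) \in \strsel((t' \bowtie s\sigma) \vee D\sigma)$ I would unfold the definition of $\strsel$: it is enough to produce a literal $l' \equivto t' \bowtie s\sigma$ and a clause $E' \equivto D\sigma$ with $l' \in \sel(l' \vee E')$. Since $t' \equivto t\sigma$, the instance $(t \bowtie s)\sigma$ satisfies $(t\bowtie s)\sigma \equivto t' \bowtie s\sigma$, so the natural choice is $l' \vee E' = (t \bowtie s \vee D)\sigma$, which reduces the goal to $(t \bowtie s)\sigma \in \sel((t \bowtie s \vee D)\sigma)$. This is a statement about how $\sel$ behaves on an abducible instance of the clause in which $t \bowtie s$ was selected; I would try to derive it from the stability of $\sel$ under {\Asubstitution}s together with Assumption~\ref{hyp:B}, using that $\sigma$ replaces the constraint variables in $\var(\X)$ only by abducible constants, which are pairwise $\equivto$ and $\alwaysgreater{}$-indistinguishable, so that neither the maximality of a positive literal nor the presence of a selected negative literal can be destroyed by applying $\sigma$.

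\emph{Main obstacle.} The delicate point in both parts is the mismatch between the quantifier over \emph{arbitrary} ground substitutions built into $\alwaysgreater{}$ and $\alwayssmaller{}$ and the fixed abducible substitution $\sigma$ of domain $\var(\X)$: a witness for $t \not\alwayssmaller{} s$ may instantiate the constraint variables by non-abducible terms, whereas $\sigma$ sends them to abducible constants, and the one-directional stability of $\sel$ only propagates selection from an instance back to its original, not forward along $\sigma$. The crux is therefore to show that the comparisons and the selection status depend on the constraint variables only up to $\equivto$, i.e.\ that replacing a placeholder variable by an abducible constant alters neither the strict inequality $u \succ v$ nor the selection of the relevant literal. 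I expect this reconciliation of the coherent-versus-independent rewriting and the handling of $\var(\X)$ as abducible placeholders, rather than the routine unfolding of the $\strsel$ and $\stronglysmaller$ definitions, to be the principal hurdle.
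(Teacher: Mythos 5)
Your proposal is an outline rather than a proof: both of the steps you label as ``the work'' and ``the crux'' are exactly the content of the proposition, and you leave them unresolved. For comparison, the paper supplies no proof at all --- the proposition is asserted as an immediate consequence of the preceding remark that $\equivto$-equivalent expressions cannot be distinguished by $\stronglysmaller$ and $\strsel$ --- so the only thing to measure your attempt against is whether your plan could be completed. Your reductions (replacing $t'$ by $t\sigma$ via transitivity of $\equivto$, and choosing $(t\bowtie s\vee D)\sigma$ as the witness clause for $\strsel$) are the right first moves.

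The gap in the ordering part is real and, as sketched, not closable. The hypothesis $t\not\alwayssmaller{} s$ only yields a witness pair $(\X_0,\rho)$ where $\rho$ is an \emph{arbitrary} ground substitution, whereas the conclusion $t'\not\stronglysmaller s\sigma$ requires a witness among the $\equivto$-variants of $t\sigma$ and $s\sigma$, i.e.\ terms obtained by renaming abducible constants only --- the action of $\sigma$ on $\var(\X)$ is frozen. Your remark that the coherent identifications recorded in $\X_0$ are a special case of independent rewriting bridges the $\X_0$-versus-$\equivto$ discrepancy, but it says nothing about the $\rho$-versus-$\sigma$ discrepancy, which is the one that bites: if $t = x\in\var(\X)$ and $s = f(c)$ with $f\notin\hypcst$, then $t\not\alwayssmaller{} s$ (instantiate $x$ by a term larger than $f(c)$), yet every $\equivto$-variant of $t\sigma$ is an abducible constant and hence smaller than every variant of $f(c)$. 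So the conversion you defer cannot be carried out from the stated hypotheses alone; it would require using (and stating) that in the intended applications the failure of $\alwayssmaller{}$ is already witnessed by $\X$-pure instantiations. The selection part has the analogous hole: the paper's stability Assumption only propagates selection from $\red{l\eta}{\X}\in\sel(\red{C\eta}{\X})$ back to $l\in\sel(C)$, not forward from $t\bowtie s\in\sel(t\bowtie s\vee D)$ to $(t\bowtie s)\sigma\in\sel((t\bowtie s\vee D)\sigma)$, and you acknowledge this without supplying the missing direction. In short, you have correctly located where a proof must do work, but you have not done that work.
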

\begin{newprop}\label{prop:unifsig}
  Let $\mu$ be an m.g.u.\ of $t \iseq s$.
  \begin{itemize}
  \item If $\mu_1,\mu_2$ are such that $\dom(\mu_1)\cap\dom(\mu_2) = \emptyset$ and $\mu = \mu_1\mu_2$, then $\mu_2$ is an m.g.u.\ of $t\mu_1\iseq s\mu_1$.
  \item Let $\sigma$ be a substitution \nikonew{au lieu de permutation} such that $\dom(\sigma)
    \subseteq \var(t\iseq s)$ and $\dom(\sigma) \cap \dom(\mu) =
    \emptyset$. Then the restriction of $\mu\sigma$ to $\dom(\mu)$ is
    an m.g.u.\ of $t\sigma \iseq s\sigma$.
  \end{itemize}
\end{newprop}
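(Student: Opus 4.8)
The two claims are standard facts about syntactic most general unifiers, and the single tool driving both is the following \emph{cancellation} observation: if $u\theta_1 = u\theta_2$ for a term $u$ and substitutions $\theta_1,\theta_2$, then $\theta_1$ and $\theta_2$ agree on every variable occurring in $u$. This is proved by an immediate structural induction on $u$ (base case $u$ a variable; inductive case comparing the arguments under a common head symbol), and it lets one pass from an equality between \emph{instances} of a term to an equality between the underlying substitutions on the relevant variables. I would isolate this as a preliminary remark and then treat each item in turn, recalling throughout that $\theta$ is an m.g.u.\ of $t\iseq s$ exactly when $t\theta=s\theta$ and every unifier $\rho$ of $t\iseq s$ factors as $\rho =_{\var(t\iseq s)} \theta\lambda$ for some $\lambda$.

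For the first item I would first check that $\mu_2$ unifies $t\mu_1\iseq s\mu_1$: since $\mu=\mu_1\mu_2$ we have $t\mu_1\mu_2 = t\mu = s\mu = s\mu_1\mu_2$. For maximal generality, take any unifier $\rho$ of $t\mu_1\iseq s\mu_1$; then $\mu_1\rho$ unifies $t\iseq s$, so by the m.g.u.\ property of $\mu$ there is $\lambda$ with $x\mu_1\rho = x\mu_1\mu_2\lambda$ for every $x\in\var(t\iseq s)$. Applying the cancellation observation to each term $x\mu_1$ shows that $\rho$ and $\mu_2\lambda$ agree on $\var(x\mu_1)$, and taking the union over $x\in\var(t)\cup\var(s)$ gives agreement on $\var(t\mu_1\iseq s\mu_1)=\bigcup_{x}\var(x\mu_1)$. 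Hence $\rho =_{\var(t\mu_1\iseq s\mu_1)} \mu_2\lambda$, which is precisely the m.g.u.\ condition for $\mu_2$.

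For the second item, set $\nu \isdef (\mu\sigma)|_{\dom(\mu)}$. I would first establish the auxiliary identity $w\sigma\nu = w\mu\sigma$ for every term $w$ with $\var(w)\subseteq\var(t\iseq s)$, checking it variable by variable: if $x\in\dom(\mu)$ then $x\sigma=x$ by the disjointness of domains, so $x\sigma\nu = x\nu = x\mu\sigma$; if $x\notin\dom(\mu)$ then $x\mu=x$ and, because $\sigma$ reintroduces no variable of $\dom(\mu)$, the substitution $\nu$ fixes $\var(x\sigma)$, giving $x\sigma\nu = x\sigma = x\mu\sigma$. Instantiating this identity at $w=t$ and $w=s$ and using $t\mu=s\mu$ yields $t\sigma\nu = t\mu\sigma = s\mu\sigma = s\sigma\nu$, so $\nu$ unifies $t\sigma\iseq s\sigma$. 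For generality I would lift an arbitrary unifier $\rho$ of $t\sigma\iseq s\sigma$: then $\sigma\rho$ unifies $t\iseq s$, so $\sigma\rho =_{\var(t\iseq s)}\mu\lambda$ for some $\lambda$; combining this with $x\sigma\nu = x\mu\sigma$ and the cancellation observation applied to the terms $x\sigma$, I would exhibit a substitution $\delta$ with $\rho =_{\var(t\sigma\iseq s\sigma)}\nu\delta$, defining $\delta$ on the images $x\sigma$ (for $x\in\dom(\sigma)$) and on the untouched variables so as to reproduce $\lambda$ and $\rho$ respectively.

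The routine parts are the first item and the unification check in the second; the delicate point — and where I expect to spend the most care — is the generality argument for $\nu$, since there one must keep precise track of how $\sigma$ acts relative to $\dom(\mu)$ and verify that the candidate $\delta$ is well defined (the relevant images being pairwise distinct and the defining clauses mutually consistent). The whole second item rests on the property that $\sigma$ does not reintroduce an eliminated variable (a variable of $\dom(\mu)$) inside a compound term; I would make this non-reintroduction property explicit at the outset, noting that it is guaranteed by the hypothesis $\dom(\sigma)\cap\dom(\mu)=\emptyset$ together with $\sigma$ renaming variables to fresh variables or constants (as is the case in the use made of this proposition), and that it is exactly what is required for the identity $w\sigma\nu=w\mu\sigma$ and hence for the conclusion to hold.
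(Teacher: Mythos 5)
Your first item is essentially the paper's own argument (verify that $\mu_2$ unifies $t\mu_1 \iseq s\mu_1$, lift an arbitrary unifier $\rho$ to $\mu_1\rho$, factor through $\mu$); you merely make explicit the cancellation step that the paper compresses into ``thus $\delta$ is an instance of $\mu_2$''. For the second item you genuinely diverge: the paper argues by induction on the run of the unification algorithm, via the commutation ``$S \rightarrow S'$ implies $S\sigma \rightarrow S'\sigma$'', whereas you verify directly that $\nu \isdef (\mu\sigma)|_{\dom(\mu)}$ is a unifier through the identity $w\sigma\nu = w\mu\sigma$ and then lift an arbitrary unifier of $t\sigma \iseq s\sigma$ back through $\sigma$. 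Your route has the advantage of exposing exactly where the hypotheses are used, and in particular you correctly flag that the statement as printed is too strong: with $t = f(x,y)$, $s = f(y,x)$, $\mu = \set{x \mapsto y}$ and $\sigma = \set{y \mapsto g(x)}$, the terms $t\sigma$ and $s\sigma$ fail the occur check, so the proposition needs the non-reintroduction property you isolate (automatic when $\sigma$ maps variables to constants or fresh variables, as in the application inside Lemma \ref{lem:term}, and implicitly needed for the paper's commutation fact as well, since an (R) step on $S$ can become an occur-check failure on $S\sigma$). The only soft spot is your closing generality argument, which is left as a sketch (``I would exhibit a substitution $\delta$\dots''); it does go through, and in fact more simply than you suggest: taking $\delta \isdef \rho$ itself works, since for $y \in \dom(\mu)$ one has $y\nu\rho = y\mu\sigma\rho = y\mu\lambda = y\sigma\rho = y\rho$ (using idempotency of $\mu$ and the agreement $x\sigma\rho = x\mu\lambda$ on $\var(t \iseq s)$), and for $y \notin \dom(\mu)$ trivially $y\nu\rho = y\rho$. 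With that step written out, your proof is complete and, for the second item, more informative than the paper's.
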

\begin{proof}
  Since $t\mu_1\mu_2 = t\mu = s\mu = s\mu_1\mu_2$, it is clear that $t\mu_1$ and $s\mu_1$ are unifiable. If $\delta$ is a unifier of \nikonew{au lieu de m.g.u.\ sinon il me semble qu'il faudrait dire des choses en plus ensuite, mais ça n'est pas la peine} of $t\mu_1\iseq s\mu_1$, then $t\mu_1\delta = s\mu_1\delta$, hence $\mu_1\delta$ is a unifier of $t\iseq s$, and is therefore an instance \nikonew{au lieu de renaming} of $\mu = \mu_1\mu_2$, thus $\delta$ is an instance of $\mu_2$. This proves that $\mu_2$ is an m.g.u.\ of $t\mu_1\iseq s\mu_1$.

  The second point is a consequence of the fact that for any unification problem, if $S\rightarrow S'$, then $S\sigma\rightarrow S'\sigma$. The result is proved by induction on the transformation of the unification problem $\set{t\iseq s}$.
\end{proof}

  We prove that if the $\hypcst$-Superposition rule applied to
 $C,D$
 generates $E$,
 then for all
  $E'\in \projectionter{E}{\U}$,
there exists $C'\in
  \projectionter{C}{\U}$
  and $D' \in
  \projectionter{D}{\U}$
  such that $E'$ can be
  derived from $C',D',\U$ by \SP.
The proof for the other inference rules is
  similar. We let\Mnacho{remplacé $C'$ etc par $C_1$ etc pour être cohérent avec la phrase précédente et propagé les modifs.}
  \begin{eqnarray*}
    C & = & \ccl{u\iseq v \vee C_1}{\X},\\
    D & = & \ccl{t\bowtie s \vee D_1}{\Y},\\
    E & = & \ccl{(t[v]_p  \bowtie s \vee C_1\vee D_1)\mu}{\calZ},
  \end{eqnarray*}
  where $(\mu, \E)$ is an $(\X\cup \Y)$-pure most general \Aunifier of $u\iseq t|_p$ and $\calZ = (\X\cup \Y\cup \E)\mu$. Up to a renaming, we may assume that $\var(\calZ) \subseteq \var(\X\cup\Y)$, so that for all $x \in \dom(\mu) \cap \var(\X\cup\Y)$, $x\mu \in \hypcst \cup \var(\X \cup \Y)$.
  Let $E' \in \projectionter{\set{\ccl{E}{\calZ}}}{\calZ}$, and let
  $\sigma$ be the $\calZ$-pure substitution of domain $\var(\calZ)$
  such that $\red{\sigma}{\U} = \sigma$, $\calZ\sigma \subseteq \U$ and $E' = E\sigma$.
  We let $\C\isdef
  \dom(\mu) \cap \var(\X\cup \Y)$ and define $\mu_1$ as the
  restriction of $\mu$ to $\C$ and $\mu_2$ as the restriction of $\mu$
  to $\dom(\mu)\setminus \C$, so that $\mu = \mu_1\uplus
  \mu_2$. Consider the substitution $\delta \isdef \mu_1\sigma$. It is
  clear that $\delta$ is a ground $(\X\cup\Y)$-pure substitution of
  domain $\var(\X\cup\Y)$, and that
  $\X\mu_1,\Y\mu_1 \subseteq \calZ\sigma \subseteq \U$, therefore, $C'\isdef C\delta
  \in \projectionter{\set{\ccl{C}{\X}}}{\U}$ and $D'\isdef
  D\delta \in \projectionter{\set{\ccl{D}{\Y}}}{\U}$. The
  clause $C'$ is of the form $u_1\iseq v_1 \vee C_1'$, and the clause
  $D'$ of the form $t_1\bowtie s_1 \vee D_1'$, where:
  \begin{itemize}
  \item$u_1 = u\delta$, $v_1 = v\delta$ and $C_1' = C_1\delta$,
  \item $t_1 = t\delta$, $s_1 = s\delta$ and $D_1' = D_1\delta$.
  \end{itemize}
  Let $t_1'\isdef
  \red{{t_1}}{\U}|_p$   and $u_1' \isdef \red{u_1}{\U}$.
  By Proposition \ref{prop:redsel},  $t_1 \bowtie s_1$ and $u_1 \iseq v_1$ are selected in $C'$ and $D'$ respectively, and we have $t_1 \not \stronglyssmaller s_1$, $u_1 \not \stronglyssmaller v_1$. Thus, there is an \SP-derivation from
  $\{ C' \} \cup \U$  that generates the clause $u_1' \iseq v_1
  \vee C_1'$, and an \SP-derivation from $\{ D' \} \cup \U$ that generates
  $t_1[t_1']_p \bowtie s_1 \vee D_1'$: it suffices to use repeated applications of the \nikonew{note cosmétique: ici et à d'autres endroits je ne sais pas si il faut mettre une majuscule à ``superposition'' (dans superposition calculus, superposition rule etc.). Tu me diras ce que tu préfères \smiley)}\MnachowithAnswer{\smiley j'ai essayé de mettre 'Superposition rule' et 'by superposing $x$ into $y$', mais je ne sais pas si c'est la meilleure décision}{ça me va \smiley J'ai mis des majuscules partout aux noms des règles} Superposition rule from equations in $\U$ to replace every constant $a$ occurring in $u_1$ or $t_1|_p$ by $\red{a}{\U}$. Note that $u_1' \iseq v_1$ and
  $t_1[t_1']_p \bowtie s_1$ are both selected and that $u_1' \not \stronglyssmaller v_1$ and $t_1[t_1']_p \not \stronglyssmaller s_1$.

\MnachowithAnswer{proposition: remplacer $\alpha$ et $\beta$ par $\gamma$ et $\nu$ (j'ai fait des commandes si ça ne te convient pas)}{ok}
We prove
  that $t_1'$  and $u_1'$ are unifiable.
  For $i = 1,2$, let $\fstsub_i\isdef \red{\mu_i}{\U}$, and let
  $\fstsub \isdef \fstsub_1\uplus \fstsub_2$. By Proposition
  \ref{prop:redunif}, since $(\mu,\E)$ is a most general \Aunifier of
  $t|_p\iseq s$ and $\E \subseteq \U$, $\fstsub$ is a most general
  unifier of $\red{(t|_p)}{\U} \iseq \red{u}{\U}$. By
  Proposition \ref{prop:unifsig}, $\fstsub_2$ is an m.g.u.\ of
  $\red{(t|_p)}{\U}\fstsub_1 \iseq \red{u}{\U}\fstsub_1$, and the
  restriction $\smgu$ of $\fstsub_2\sigma$ to $\dom(\fstsub_2)$ is an m.g.u.\ of
  $\red{(t|_p)}{\U}\fstsub_1\sigma \iseq
  \red{u}{\U}\fstsub_1\sigma$. But we have
  \[\red{(t|_p)}{\U}\fstsub_1\sigma\ =\ \red{(t\mu_1\sigma)|_p}{\U}\ =\ \red{(t\delta)|_p}{\U}\ =\ {t_1'},\]
  and similarly, $\red{u}{\U}\fstsub_1\sigma = u_1'$. Since
  $t_1'$ and $u_1'$ are unifiable with m.g.u.\ $\smgu$, the
  Superposition rule applied to $u_1' \iseq v_1 \vee C_1'$ and  $t_1[t_1']_p \bowtie s_1 \vee D_1'$ generates the clause $F\isdef (t_1[v_1]_p \bowtie s_1\vee C_1'\vee D_1')\smgu$, and:
  \begin{eqnarray*}
    F & = & (t_1[v_1]_p \bowtie s_1\vee C_1'\vee D_1')\smgu\\
    & = & (t[v]_p\delta \bowtie s\delta \vee C'\delta\vee D'\delta)\smgu\\
     & = & (t[v]_p\bowtie s\vee C'\vee D')\mu_1\sigma\smgu.
  \end{eqnarray*}
  We now prove that for any variable $x$, we have $x\mu_1\sigma\smgu = x\mu\sigma$. First assume that $x\notin \dom(\mu_1)$. If $x\in \var(\X\cup\Y)$, then necessarily $x\in \var(\calZ)$, and therefore, $x\mu = x$ and $x\in \dom(\sigma)$. Thus, $x\mu_1\sigma\smgu = x\sigma\smgu = x\sigma = x\mu\sigma$. Otherwise, since $\dom(\sigma) \subseteq \var(\X\cup\Y)$, necessarily $x\sigma = x$ and $x\mu_1\sigma\smgu = x\sigma\smgu = x\smgu$. If $x\in \dom(\smgu)$ then $x\smgu = x\mu\sigma$ by definition of $\smgu$, otherwise, since $x\notin \dom(\mu_1) \uplus \dom(\mu_2) = \dom(\mu)$, we deduce that $x\smgu = x = x\mu = x\mu\sigma$.
  Now assume that $x\in \dom(\mu_1)$. Then $x\mu_1 = x\mu$, and if $x\mu \in \hypcst$, then $x\mu_1\sigma\smgu = x\mu = x\mu\sigma$. Otherwise $ x\mu \in \var(\calZ) = \dom(\sigma)$, hence $x\mu\sigma\fstsub = x\mu\sigma$.

For the second part of the lemma, let $E \isdef \ccl{E'}{\calZ}$ and suppose that $\projectionter{E}{\U}$ contains a clause $E'\gamma'$ (with $\red{\gamma'}{\U} = \gamma'$) that is \stronglyredundant in $\projectionter{S}{\U}$.
 Let $\sigma$ be a ground substitution of the variables in $\ccl{E'}{\calZ}$ such that $\calZ\sigma \subseteq \U$.
 We show that $\ccl{E'}{\calZ}\sigma$ is \redundant in $S$. We assume, w.l.o.g., that $\sigma = \red{\sigma}{\X}$.
Let $\gamma$ and $\theta$ be the restrictions of $\sigma$ to $\var(\X)$ and $\dom(\theta) \setminus \var(\X)$ respectively. By definition we have $\dom(\gamma) = \var(\X) = \dom(\gamma')$, hence $E'\gamma' \equivto E\gamma$.
Since $E'\gamma'$ is \stronglyredundant in $\projectionter{S}{\U}$ we deduce that
$E'\gamma\theta = E'\sigma$ is \redundant in $S$. Since $S$ is a set of standard clauses, this entails that $E\sigma$ is also \redundant.

\end{document}